\newcounter{ourcount}
\newcommand{\bref}[1]{\textbf{\ref{#1}}}
\newcommand{\W}{W\!\!B}
\newcommand{\half}{%
  \mathchoice{\ffrac{1}{2}}{\frac{1}{2}}{\frac{1}{2}}{\frac{1}{2}}}
\newcommand{\rmi}{\mathrm{i}}
\newcommand{\tensor}{\otimes}
\newcommand{\sigmax}{\sigma^x}
\newcommand{\sigmay}{\sigma^y}
\newcommand{\sigmaz}{\sigma^z}
\newcommand{\sigmap}{\sigma^+}
\newcommand{\sigmam}{\sigma^-}
\newcommand{\modwX}{\mathcal{X}}
\newcommand{\modwP}{\mathcal{P}}
\newcommand{\modwY}{\mathcal{Y}}
\newcommand{\oN}{\mathbb{N}}
\newcommand{\qS}[1]{S_{#1}}
\newcommand{\cas}{\boldsymbol{C}}
\newcommand{\casim}{\widetilde{\cas}}
\newcommand{\idem}{\boldsymbol{e}}
\newcommand{\q}{\mathfrak{q}}
\newcommand{\ffrac}[2]{\mbox{\footnotesize$\displaystyle\frac{#1}{#2}$}}
\newcommand{\dd}{\mathsf{d}}
\newcommand{\modd}{\,\mathrm{mod}\,}
\newcommand{\stf}{t}
\newcommand{\one}{\boldsymbol{1}}
\newcommand{\Wlat}{\mathcal{W}}
\newcommand{\Wlatq}[2]{\Wlat_{#1,#2}}
\newcommand{\Wlatt}[2]{\Wlat'_{#1,#2}}
\newcommand{\UresSL}[1]{\overline{U}_{\q} s\ell(#1)}
\newcommand{\LQG}{U_{\q} s\ell(2)}
\newcommand{\LQGi}{U_{i} s\ell(2)}
\newcommand{\TLq}[1]{\mathcal{TL}_{\q,#1}}
\newcommand{\TLi}[1]{\mathcal{TL}_{i,#1}}
 \newcommand{\gl}{g\ell}
\newcommand{\oC}{\mathbb{C}}
\newcommand{\Endo}{\mathrm{End}}
\newcommand{\Hom}{\mathrm{Hom}}
\newcommand{\Hilb}{\mathcal{H}}
\newcommand{\chVv}{\Hilb_{N}}
\newcommand{\repQGgl}{\rho_{\gl}}
\newcommand{\repQGq}{\rho_{\q,N}}
\newcommand{\repQG}[1]{\rho_{\q,#1}}
\newcommand{\LUresSL}[1]{U_{\q} s\ell(#1)}
\newcommand{\K}{\mathsf{K}}
\newcommand{\F}{\mathsf{F}}
\newcommand{\f}{\mathsf{f}}
\newcommand{\E}{\mathsf{E}}
\newcommand{\h}{\mathsf{h}}
\newcommand{\e}{\mathsf{e}}
\newcommand{\stprp}{\mathsf{a}}
\newcommand{\XX}{\mathsf{X}} 
\newcommand{\repX}{\mathsf{X}}
\newcommand{\PP}{\mathsf{P}}
\newcommand{\IrrTL}[1]{\mathcal{X}_{#1}}
\newcommand{\PrTL}[1]{\mathcal{P}_{#1}}
\newcommand{\Verma}{\mathscr{V}}
\newcommand{\ket}[1]{|#1\rangle}
\newcommand{\oZ}{\mathbb{Z}}
\newtheorem{Thm}[subsection]{Theorem}
\newtheorem{lemma}[subsubsection]{Lemma}
\newtheorem{prop}[subsubsection]{Proposition}
\newtheorem{Conj}[subsection]{Conjecture}
\theoremstyle{definition}
\newtheorem{Dfn}[subsection]{Definition}
\begin{document}
\title[W-algebra in XXZ spin-chains]{%
Lattice W-algebras and logarithmic CFTs}

\author{A.M.~Gainutdinov, H.~Saleur and I.Yu.~Tipunin}
\address{AMG: Institut de Physique Th\'eorique, CEA Saclay, Gif Sur Yvette, 91191, France}
\email{azat.gaynutdinov@cea.fr}
\address{HS: Institut de Physique Th\'eorique, CEA Saclay, Gif Sur
  Yvette, 91191, France.\mbox{}\newline
\mbox{}\qquad\quad Department of Physics and Astronomy,
University of Southern California,
Los Angeles, CA 90089, USA}
\email{hubert.saleur@cea.fr}
\address{IYuT: Tamm Theory Division, Lebedev Physics Institute, Leninski pr., 53,
Moscow, Russia, 119991}
\email{tipunin@gmail.com}

\begin{abstract}
This paper is part of an effort to gain further understanding of 2D Logarithmic Conformal Field Theories (LCFTs) by exploring their lattice regularizations. While all work so far has dealt with the Virasoro algebra (or  the product $\hbox{Vir}\otimes \overline{\hbox{Vir}}$), the best known (although maybe not the most relevant physically)  LCFTs in the continuum  are characterized by a W-algebra symmetry, whose presence is  powerful, but  whose role as a ``symmetry'' remains mysterious. We explore here the origin of this symmetry in the underlying lattice models. We consider $\LQG$ XXZ spin chains for $\q$ a root of unity, and
argue that the centralizer of the ``small'' quantum group $\UresSL2$ goes over the W-algebra in the continuum limit. We justify this identification by representation theoretic arguments, and give, in particular, lattice versions of the W-algebra generators. In the case $\q=i$, which corresponds to symplectic fermions at central charge $c=-2$, we provide a full analysis of the scaling limit of the lattice Virasoro and W generators, and show in details how the corresponding continuum Virasoro and W-algebras are obtained. Striking similarities between the lattice W algebra and the Onsager algebra are observed in this case.

\end{abstract}

\maketitle

\section{Introduction}

It is not surprising that some crucial algebraic objects should play a role both in lattice models and in conformal field theories. For instance, the  Yang--Baxter equation expresses  factorizability of lattice interactions and determines integrable Boltzmann weights; it also expresses factorizability of braiding in the CFT, and is related with monodromy properties of conformal blocks.

Work of the last many years however has unraveled deeper and less expected connections.
Among these was the surprising observation that  order parameters (local height probabilities) are given by CFT branching functions  \cite{Date},
or that the representation theory of lattice algebras such as the Temperley--Lieb algebra is equivalent,
in a certain categorical sense (see \cite{ReadSaleur07-1,ReadSaleur07-2,GRS1,GV,GJSV} for more details),
to the representation theory of the Virasoro algebra (see also \cite{PRZ,RP}).
This has proven particularly useful in understanding logarithmic conformal field theories (LCFTs), a topic which has gained a lot of attention recently.

Indeed, classifying and solving LCFTs from first principles seems very difficult. This is for two main reasons. On the one hand, the loss of unitarity allows a proliferation of  new universality classes. On the other hand, many of the tools that were available in the unitary case -- such as the correspondence between bulk and boundary theories, or the full factorization of the operator product expansions -- are not relevant any longer, or need deep modifications. It seems therefore most reasonable therefore to try and gain additional understanding of the problem by resorting to other approaches than pure bootstrap,  conformal field theory ones. The consideration of lattice regularizations is most useful in that respect, and allows one, in particular,  to rely on many advanced results of algebra. The representation theory of non semi-simple cellular algebras for instance has allowed  in this way great progress in the  understanding of the types of
indecomposable modules or fusion rules appearing in many classes of LCFTs. Further general discussion of this `lattice approach' can be found in \cite{Review}.

It is crucial to try to understand the observations in~\cite{PRZ,ReadSaleur07-2} more thoroughly, building on the intuitive notion that the algebra of local Hamiltonians (e.g., nearest coupling for spin chains) should in some sense go over to the algebra of local Hamiltonians (the stress energy tensor) in the continuum limit, and for boundary theories, where only one chiral algebra is expected. However, showing precisely  how, say,  the Virasoro algebra emerges from the Temperley--Lieb algebra in the continuum limit in general remains a very hard exercise. This is fully understood so far in the   Ising model case only, where the presence of an underlying Majorana fermion makes things quite easy~\cite{KooSaleur}.

It is also worth mentioning that among LCFTs, the rational class  with ``large'' chiral algebras like the triplet W-algebras~\cite{GK,[FGST3]} has a finite
number of primary fields (or isomorphism classes of simple modules) and is particularly accessible for rigorous algebraic
analysis~\cite{[HLZ],[AM1]}, as well as for an explicit treatment of their  Verlinde algebra aspects~\cite{[FHST],[GabR],[GT],[az1],[RW]} and hopefully
mapping class group properties~\cite{FSS}.
 The simplest such LCFT occurs in the so called symplectic fermion theory at $c=-2$~\cite{Kausch,GK} where the chiral
algebra is generated by a triplet of fermion bilinears of conformal weight $h=3$:
\begin{eqnarray*}
W^+&=&\partial\eta^+\eta^+,\nonumber\\
W^0&=&\ffrac{1}{2}\left(\partial\eta^+\eta^-+\partial\eta^-\eta^+\right),\nonumber\\
W^-&=&\partial\eta^-\eta^-
\end{eqnarray*}
satisfying the non linear relations
\begin{eqnarray}\label{w-ope}
W^\alpha(z)W^\beta(w)&=&g^{\alpha\beta}\left({\frac{1}{2}\over (z-w)^6}-\frac{3}{2}{T(w)\over (z-w)^4}
-{3\over 4} {\partial T(w)\over (z-w)^3}\right.\nonumber\\
&&\left. +{3\over 8} {\partial^2 T(w)\over (z-w)^2}-2{(T^2)(w)\over (z-w)^2}+{1\over 4}
{\partial^3 T(w)\over z-w}-{\partial (T^2)(w)\over z-w}\right)\\
&&+5 f^{\alpha\beta}_\gamma\left({W^\gamma(w)\over (z-w)^3}+{1\over 2}{\partial W^\gamma(w)\over (z-w)^2}+{1\over 25}
{\partial^2 W^\gamma(w)\over z-w}+{12\over 25}{(TW^\gamma)(w)\over z-w}\right)\nonumber
\end{eqnarray}
with $g^{+-}=g^{-+}=4$, $g^{00}=2$, $f^{+-}_0=-f^{-+}_0=2$, $f^{0+}_+=-f^{+0}_+=-f^{0-}_-=f^{-0}_-=1$ and other $g^{\alpha\beta}$
and $f^{\alpha\beta}_\gamma$  equal to $0$.

Most studied generalizations concern the so called $(1,p)$ models (symplectic fermions being the $(1,2)$ model)  where the chiral algebra is the triplet W-algebra $\Wlat(p)$ defined first in~\cite{[K-first]} as an extension of the Virasoro algebra by a triplet of primary fields of conformal dimension $2p-1$.
Identifying lattice models which  have such W-algebra $\Wlat(p)$ symmetries remains, in our opinion and except for $p=2$,
a challenge, even though some formal progress has been made in this direction in the last
few years~\cite{Rasmussen08}.

The study of the $W$-algebras is notoriously difficult,  not only because of the high non linearity of the OPEs,
but also because  they  do lack physical understanding. Motivated by the  crucial role played by
these algebras in the rational class of LCFTs, we are presenting here a fresh look at this question by identifying the lattice
equivalents of the tripet $W$-algebras $\Wlat(p)$ in $\LQG$ symmetric XXZ spin chains when $\q$ is the $2p$th primitive root of unity.
The idea of their definition is based on the full symmetry algebra (or {\sl centralizer}) paradigm ``The quantum groups relevant to
quantum-integrable models, \textit{e.g.,} CFTs and their lattice discretizations are one and the same.''
 So, the lattice equivalent of the Virasoro algebra could be defined using its centralizer in CFT~\cite{ReadSaleur07-2,[BFGT],[BGT]} -- the {\sl full} quantum group $\LQG$ -- as the algebra of local Hamiltonian densities commuting with the $\LQG$.
Indeed, such an algebra on the finite spin-chain  is the Temperley--Lieb (TL) algebra and it has close relations with the representation
theory of the Virasoro algebra~\cite{ReadSaleur07-1,GV}.
Moreover, we will show in this paper that particular elements of the TL algebra in the XX spin-chain converge
in the scaling limit to the Virasoro modes.
To go further and define  lattice algebras that would converge to the chiral triplet W-algebras,
we use the (so called Kazhdan--Lusztig) duality~\cite{[FGST],[FGST2]} between these chiral algebras  and the {\sl small} or ``restricted'' quantum group denoted by $\UresSL2$.
This duality essentially states\footnote{The equivalence between the categories of the triplet W-algebra modules and
finite-dimensional $\UresSL2$-modules was proven first in~\cite{[FGST2]} for the $(1,2)$ models and in~\cite{Tsuchiya} for all $(1,p)$ cases.
The functor realizing such an equivalence could be explicitly constructed using a bimodule over the mutual centralizers, the triplet $W$-algebra and the
small quantum group.} that the centralizer of the $W$-algebra $\Wlat(p)$ in a particular Hilbert space of CFT states is the $\UresSL2$.
Therefore in this case, one could think about the lattice equivalent as the centralizer of the small quantum group on a finite spin-chain.

The paper is organized as follows. Sections 2 and 3 are devoted to the case $\q=i$, that is the XX
spin chain (with boundary terms), which is equivalent to a $\gl(1|1)$ spin chain built on alternating
representations. Using  the earlier proposal in~\cite{KooSaleur,GRS1},  a lattice version of the Virasoro
algebra is obtained  by considering TL generators  -- recall that the TL algebra
is the centralizer of the full quantum group acting on the spin chain.   In section 2, we  propose
 similarly a lattice version of the $W$-algebra obtained as the centralizer of the small
quantum group $\UresSL2$ acting on the spin chain. Fourier modes (particular linear combinations of the lattice $W$-algebra generators) which constitute ``approximations''
of the Virasoro or chiral $W$-algebra modes are then defined, and their commutators are studied in  section 2 as well.
Of course, for a finite system, these approximations   do not form a closed  algebra.
Calculating further commutators leads to infinitely many such ``approximations'', and a
complicated algebraic structure, which is studied in detail. The scaling limit is then
studied in section 3 where we are able, after the introduction of appropriate normal ordering,
to obtain exactly, from the lattice and in the limit of large chains, the defining relations of
the Virasoro and $W$ algebras for symplectic fermions. In section 4, we generalize our analysis
to other values of $\q$. We define the lattice $W$ algebra for the XXZ spin chain,
and study its representation theory in details. While refraining from a full analysis
of the continuum limit, we show that our results are fully consistent with what is
expected in the $(p-1,p)$ and $(1,p)$ models of  LCFTs. Section 5 contains some conclusions and pointers for future work. Five appendices contain supplementary materials like long definitions, proofs and bulky expressions.


\section{The triplet W-algebra from XX spin-chain}
In this section, we give an explicit definition of the lattice
analogue of Kausch's triplet W-algebra in the case of XX
spin-chains. The use of the well-known fermionic formulation of the model
allows to introduce all modes of ``W-currents'' on the lattice. Their
scaling limit and commutation relations are then studied in the next
section.

\subsection{The XX model}
We consider the XX model Hamiltonian of $N$ one-half spins with an open
boundary condition described by the ``quantum-group symmetric'' boundary term~\cite{PasquierSaleur},
\begin{equation}\label{XX-sigma}
H = \half\sum_{j=1}^{N-1}(\sigmax_j \sigmax_{j+1} + \sigmay_j \sigmay_{j+1}) - \ffrac{i}{2}(\sigmaz_1-\sigmaz_{N}).
\end{equation}
This is an operator acting on $\chVv=\oC^{2\otimes N}$
and $\sigmax_i, \sigmay_i$ and $\sigmaz_i$ are usual Pauli matrices,
\begin{equation}\label{Pauli}
\sigmax =
\begin{pmatrix}
0 & 1\\
1 & 0
\end{pmatrix}, \quad
\sigmay =
\begin{pmatrix}
0 & -i\\
i & 0
\end{pmatrix}, \quad
\sigmaz =
\begin{pmatrix}
1 & 0\\
0 & -1
\end{pmatrix}.
\end{equation}

\subsection{Lattice fermions}
It is
 useful in what follows to reformulate everything in terms of ordinary
 lattice fermions $c_j$ and $c^{\dagger}_j$.
We use the
 Jordan-Wigner transformation to get the lattice fermions %
 \begin{equation}\label{JW-trans}
\begin{split}
 c_j^\dagger&=i^{j-1}~i^{\sigma_1^z+\ldots+\sigma_{j-1}^z}\otimes{\sigma_j^+},\\
  c_j&=
 i^{-j+1}~i^{-\sigma_1^z-\ldots-\sigma_{j-1}^z}\otimes {\sigma_j^-}
\end{split}
 \end{equation}
with the anticommutation relations
\begin{equation}
 \{c^\dagger_j,c_{j'}\}=\delta_{j,j'}.
\end{equation}

This transformation gives the free fermion expression for the XX
Hamiltonian from~\eqref{XX-sigma}:
\begin{equation}\label{XXH}
 H=-\sum_{j=1}^{N-1}e_j = \sum_{j=1}^{N-1} \bigl(c^\dagger_jc_{j+1}+c^\dagger_{j+1}c_j\bigr)
-i c^\dagger_1 c_1+i c^\dagger_N c_N,
\end{equation}
where we introduce the Hamiltonian densities
\begin{equation}\label{eq:PTL-rep-first}
e_j=c_j c_{j+1}^\dagger + c_{j+1} c_j^\dagger
  +i\bigl(c_j^\dagger c_j-c^\dagger_{j+1}c_{j+1}\bigr),
\qquad 1\leq j\leq N-1,
\end{equation}
which satisfy  defining relations for the Temperley--Lieb algebra with
zero fugacity parameter:
\begin{align*}
e_i^2 &= 0,\\
e_i e_{i\pm1}e_i &= e_i,\\
e_i e_j &= e_j e_i,\quad |i-j|>1.
\end{align*}
We denote the algebra generated by these $e_j$, with $1\leq j\leq N-1$,
and the identity as $\TLi{N}$.

\subsection{Quantum group results for $\q=i$ case}
 For applications to open XX
spin-chains, we  set  in what follows
$\q\equiv i$. As a module over the quantum group $\LQG$, see general definitions in App.~\bref{app:qunatm-gr-def}, the spin
chain $\chVv$ is a tensor product of
two-dimensional irreducible representations such that $\E\to\sigma^+$,
$\F\to\sigma^-$, $\K\to \q\sigma^z$, and $\e=\f=0$, see precise
definitions of quantum groups at roots of unity in App.~\bref{app:qunatm-gr-def}.
Using the $(N-1)$-folded
comultiplications~\eqref{N-fold-comult-cap}
together
with the Jordan-Wigner transformation~\eqref{JW-trans}, we obtain
the representation $\repQGgl:\LQG\to\Endo_{\oC}(\chVv)$ (usual fermionic expressions)
\begin{equation}
\repQGgl(\E)\equiv\Delta^{N-1}(\E) = \sum_{j=1}^{N}\q^j c_j^{\dagger} \repQGgl(\K),
\qquad \repQGgl(\F)\equiv\Delta^{N-1}(\F) = \sum_{j=1}^{N}\q^{j-1} c_j,\label{QG-ferm-1}
\end{equation}
and
\begin{equation}\label{sl2-gen}
\begin{split}
\repQGgl(\e)\equiv\Delta^{N-1}(\e) &= \sum_{1\leq j_1<j_2\leq N}(-1)^{j_1+j_2}\q^{1-j_1-j_2} c_{j_1}^{\dagger}c_{j_2}^{\dagger},\\
\repQGgl(\f)\equiv\Delta^{N-1}(\f) &= \sum_{1\leq j_1<j_2\leq
  N}\q^{j_1+j_2-1} c_{j_1}c_{j_2}.
\end{split}
\end{equation}

We  can then easily check that all the generators $e_j$ of $\TLi{N}$
commute with the quantum-group
action~\cite{PasquierSaleur}. Moreover, the quantum-group generators
give the full symmetry of the XX model, or in more technical terms
the full quantum group $\LQG$ at $\q=i$ gives \textit{the centralizer} of $\TLi{N}$. This is a particular case of a more general
(and, of course, well-known) statement discussed in Sec.~\bref{sec:XXZ} in the context of XXZ representations.

\subsection{Lattice W-algebra: generators and their relations}\label{sec:lat-W-XX}
As was mentioned in the introduction, our paradigm in definition of
lattice algebras (as well as chiral algebras in the scaling limit) is
based on their centralizers, which are various quantum groups. We
will see in Sec.~\bref{sec:scal-lim} that a properly defined scaling
limit of $\TLi{N}$ gives all the Virasoro modes $L_n$ in the symplectic
fermions representation, a logarithmic CFT with the central charge
$c=-2$. It was discussed in~\cite{ReadSaleur07-2,[BFGT]} that the generators of the full quantum group $\LQGi$
give also the centralizer of the Virasoro representation, or more formally,
an equivalent ``staircase'' bimodule structure for the commuting actions of
the Virasoro at $c=-2$ and $\LQGi$ is obtained as a  semi-infinite version of
the finite ones for $\TLi{N}$
and $\LQGi$ extracted from the open XX spin-chains.

In the case of the chiral
triplet W-algebra, it was proven in~\cite{[FGST2]} that the
centralizer of the W-algebra in the symplectic fermions theory  is given by
the small (or restricted) quantum group $\UresSL2$ generated by the
capital generators $\E$ and $\F$, with the Cartan element $\K$, only. By
an analogy with the Temperley--Lieb case, we then define the lattice analog
of  Kausch's triplet W-algebra as an extension of the
Temperley--Lieb algebra for $\q=i$ by all operators commuting with the
action of the restricted quantum group. The new generators will break the $s\ell(2)$
symmetry generated by the divided powers $\e$ and $\f$.  Put more formally, we define
\textit{the lattice W-algebra} $\Wlatq{i}{N}$ as the centralizer (algebra of
all operators intertwining  the action) of the restricted quantum group
$\UresSL2$ on the tensor-product representation~$\chVv$.

In order to describe the additional generators extending $\TLi{N}$ to
$\Wlatq{i}{N}$, we first recall the definition of the walled Brauer
algebra~\cite{Martin-walled} in the context of alternating $\gl(1|1)$ spin-chains introduced
in~\cite{ReadSaleur01}.

\subsubsection{A relation with $\gl(1|1)$ spin-chains and the walled Brauer algebra}

Up to some simple transformations, the XX spin chain  can be reformulated as a $\gl(1|1)$ spin
chain with alternating fundamental representation and its dual.
This is discussed in details for instance in the introductory part
of~\cite{GRS1} where it is shown, among other things, that the
$\gl(1|1)$ symmetry is generated by the
quantum-group generators $\E$, $\F$, and the $s\ell(2)$ Cartan element
$\h$ via
\begin{align}
&\repQGgl(\E)\equiv\Delta^{N-1}(\E) =  F^\dagger_{(1)}\,\repQGgl(\K),\notag\\
 &\repQGgl(\F)\equiv\Delta^{N-1}(\F) = \q^{-1}F_{(1)},\label{QG-fermf-1}\\
&\repQGgl(\K)\equiv\Delta^{N-1}(\K) = (-1)^{\repQGgl(2\h)},\notag
\end{align}
where $F^\dagger_{(1)},F_{(1)}$ are the fermionic generators  and $2\h$ the
fermion number (the last generator of $\gl(1|1)$ is trivially zero in the alternating product of fundamental and its  dual).

We note now that the $\gl(1|1)$ action commutes
with more than the nearest-neighbor coupling  -- Temperley Lieb generators.
It commutes in particular with permutations $\W_j\equiv P_j P_{j+1} P_j$ of next-nearest neighbor $\oC^{1|1}$ tensorands ($P_j$ is the usual flip $a\otimes b\mapsto b\otimes a$ of states on $j$th and $(j+1)$th sites)
which in mathematical terms corresponds to an action of the walled Brauer algebra~\cite{Martin-walled}
(see also Sec.~5 of~\cite{Canduetal}) generated by
$e_j$ and $\W_j$.
The walled Brauer generators are represented in our
spin-chain by
\begin{equation}\label{wall-gen}
 \W_j=(c_j+c_{j+2})(c^\dagger_j+c^\dagger_{j+2})-1,\qquad j=1,2,\dots,N-2.
\end{equation}

It was proven in~\cite{Serg} that this representation of the walled
Brauer algebra gives the centralizer of the $\gl(1|1)$ action in the alternating  tensor-space representation. We
thus see that $\W_j\in\Wlatq{i}{N}$ because of the relations~\eqref{QG-fermf-1} which show the the $\gl(1|1)$ representation contains the image $\repQGgl\bigl(\UresSL2\bigr)$ of the restricted quantum group~$\UresSL2$. Obviously, $\W_j$'s do not
commute with the $s\ell(2)$ generators $\e$ and $\f$ but still commute
with the commutator $[\e,\f]=2\h=S^z$. In order to find the full
centralizer of the quantum group $\UresSL2$ with the Cartan element $\K$ represented by the
operator $(-1)^{2\h}$, we then introduce two more
generators by
\begin{equation}\label{sl2act-WB}
 \W^+_j = [\e,\W_j],\qquad  \W^-_j = [\f,\W_j].
\end{equation}
These operators now break the $S^z$-symmetry (the $\gl(1|1)$ generator $\h$ which is not in $\UresSL2$) but respect the action of
the Cartan generator $\K$ of $\UresSL2$ because $\e$ and $\f$ commute
with~$\K$, see App.~\bref{app:qunatm-gr-def}.

\subsubsection{Definition of the lattice W-algebra}
We conclude that the lattice W-algebra $\Wlatq{i}{N}$ can be defined as an extension of the $\TLi{N}$ algebra by the generators
\begin{gather}\label{wbp}
 \W^+_j=(-1)^{j}\bigl(c^\dagger_jc^\dagger_{j+1}+i c^\dagger_jc^\dagger_{j+2}
-c^\dagger_{j+1}c^\dagger_{j+2}\bigr),\\
\W^0_j=-\ffrac{1}{2}(1 + i c^\dagger_j c_{j+1} - c^\dagger_j c_{j+2} + i c^\dagger_{j+1} c_j -
   2 c^\dagger_{j+1} c_{j+1} - i c^\dagger_{j+1} c_{j+2} - c^\dagger_{j+2} c_j -
  i c^\dagger_{j+2}c_{j+1}),\\
\W^-_j=(-1)^{j+1}\bigl(c_jc_{j+1}+i c_jc_{j+2}
-c_{j+1}c_{j+2}\bigr),\label{wbm}
\end{gather}
where $j=1,2,\dots,N-2$. Of course, this statement should be proven. We formulate our result in Thm.~\bref{thm:centralizer} and give a computational proof in App.~\bref{app:Ures-centr} where we use more convenient fermions (the Fourier transforms of $c_j$'s and $c_j^{\dagger}$'s) introduced below.

We note that the relation of $\W^0_j$ generators with the walled Brauer generators~\eqref{wall-gen} is
\begin{equation}
 \W^0_j=-\ffrac{1}{2}\W_j+\ffrac{i}{2}e_j-\ffrac{i}{2}e_{j+1}.
\end{equation}

\subsubsection{$s\ell(2)$ structure} The introduction of  the $\W^0_j$ generators  is  convenient for  having a triplet of generators with
respect to the action of the $s\ell(2)$ part of $\LQG$ given in~\eqref{sl2-gen}.
 Indeed, we note
 that the generators $\W^+_j$, $\W^0_j$ and $\W^-_j$ for each site~$j$ form an $s\ell(2)$ triplet:
\begin{align}
 [\e,\W^+_j]&=0,&&[\e,\W^0_j]=-\W^+_j,&&[\e,\W^-_j]=2\W^0_j,\label{W-sl2-1}\\
 [\f,\W^+_j]&=-2\W^0_j,&&[\f,\W^0_j]=\W^-_j,&&[\f,\W^-_j]=0,\label{W-sl2-2}
\end{align}
and themselves satisfy $s\ell(2)$ relations on each site
\begin{equation*}
 [\W^+_j,\W^-_j]=2\W^0_j,\qquad [\W^0_j,\W^\pm_j]=\pm\W^\pm_j.
\end{equation*}
We note that these statements agree with a more general statement given below
in~\bref{prop:sl2-Wlatt} for all roots of unity $\q=e^{i\pi/p}$.

The formulas~\eqref{W-sl2-1} and~\eqref{W-sl2-2} mean that we have an action of $s\ell(2)$ on the algebra
$\Wlatq{i}{N}$ by derivatives of the multiplication. Because the generators belong to spin-$0$ and spin-$1$ representations of the $s\ell(2)$, the algebra $\Wlatq{i}{N}$  is therefore decomposed onto $s\ell(2)$-modules of integer spins only. The multiplicity
$m_k$ of a $k$-dimensional irreducible $s\ell(2)$ submodule in $\Wlatq{i}{N}$ is
\begin{equation*}
 m_k
  =\binom{2N-2}{N-k}-\binom{2N-2}{N-k-2},
\end{equation*}
where $k$ is odd.
We note that $m_1$ is the Catalan number which equals the dimension of
the TL subalgebra. This agrees with the fact that the span of $\LQG$-invariants is the  subalgebra $\TLi{N}$.  We finally obtain the dimension of the
$\Wlatq{i}{N}$ algebra
\begin{equation}
\dim(\Wlatq{i}{N}) = \sum_{\mbox{\tiny$k$--odd}}km_k = 2^{2N-3}.
\end{equation}
The numbers $m_k$ and the dimension of the algebra can be computed explicitly using the representation theory of $\Wlatq{i}{N}$ which will be discussed briefly below. We just note now that the number $2^{2N-3}$ obtained as the dimension of the algebra follows from the fact that the dimension of its centralizer -- $\repQGgl(\UresSL2)$ -- is $8=2^3$ while the full matrix algebra (or the Clifford algebra) has dimension $2^{2N}$.

\subsubsection{Relations in the algebra}
It would be interesting of course  to express our algebra $\Wlatq{i}{N}$ directly in terms
of generators and defining  relations, much as can be done for the Temperley--Lieb algebra,
but we have not managed to do so. Investigation of what the relations involving $\W^{\pm,0}_j$
generators might be is quite intricate, and can be organized in terms of the number of sites involved.
We give examples of the relations in App.~\bref{App:W-rel}. We hope that there exist a bigger algebra with simpler defining relations (like for TL or Hecke algebras) such that its spin-chain representation is not faithful and this algebra factors through $\Wlatq{i}{N}$.

\subsection{Fermionic modes}\label{sec:ferm-modes}

\subsubsection{Hamiltonian spectrum and Jordan blocks}
In order to study the behavior of the lattice W-algebra in the continuum limit, we have to first discuss
technical preliminaries. Fourier transform will play an essential role in what follows, and
the corresponding definitions are better understood by first studying  here the spectrum,
eigenvectors and Jordan blocks structure of the Hamiltonian in~\eqref{XXH}.
We first consider the case of even $N$ and
compute eigenvectors of the adjoint action of $H$ in the
one-particle sector with the result
\begin{equation}\label{H-spec}
 [H,\theta^\dagger_k]=2\cos\pi\ffrac{k}{N}\theta^\dagger_k,\qquad
 [H,\theta_k]=-2\cos\pi\ffrac{k}{N}\theta_k,
\end{equation}
where we introduce fermions in the momentum space
\begin{equation}\label{theta-ferm}
 \theta^\dagger_k=\sum_{j=1}^N A_k(j)c^\dagger_j,\qquad
\theta_k=-i\sum_{j=1}^N A_k(j)c_j,\qquad k=1,\dots, N-1
\end{equation}
with the amplitudes
\begin{equation}\label{amplitudes}
 A_k(j)=\ffrac{1}{2}(1+i e^{-i\pi\frac{k}{N}})e^{i\pi\frac{kj}{N}}-
\ffrac{1}{2}(1+i e^{i\pi\frac{k}{N}})e^{-i\pi\frac{kj}{N}},\qquad k=1,\dots, N-1.
\end{equation}
These operators satisfy the anti-commutation relations
\begin{equation}
 \{\theta^\dagger_k,\theta_{k'}\}= N\cos\pi\ffrac{k}{N}\delta_{k,k'}.
\end{equation}

There are also two root vectors
\begin{align}
 \gamma^+=i\ffrac{\sqrt{\pi}}{N}\sum_{j=1}^{N}\Bigl(\ffrac{N}{2}-j
 +\ffrac{1}{2}\bigl(1-(-1)^j\bigr)\Bigr)i^j c^\dagger_j,\label{gammap}\\
\gamma^-=-\ffrac{\sqrt{\pi}}{N}\sum_{j=1}^{N}\Bigl(\ffrac{N}{2}-j+\ffrac{1}{2}\bigl(1-(-1)^j\bigr)\Bigr)i^j c_j\label{gammam}
\end{align}
which transform under the action of $H$ as
\begin{equation*}
 [H,\gamma^+]=-\ffrac{2\sqrt{\pi}}{N}\theta^\dagger_{N/2},\qquad [H,\gamma^-]=-\ffrac{2\sqrt{\pi}}{N}\theta_{N/2}
\end{equation*}
and satisfy the relations
\begin{equation}\label{gamma-rel}
 \{\theta^\dagger_{N/2},\gamma^-\}=\sqrt{\pi},\qquad\{\theta_{N/2},\gamma^+\}=-\sqrt{\pi}.
\end{equation}

Using the spectrum~\eqref{H-spec} of $H$, we conclude that the vacuum or ground state is
\begin{equation}\label{vacuum}
\ket{0}=\theta^\dagger_{N/2}\theta^\dagger_{N/2+1}
 \dots\theta^\dagger_{N-1}\ket{\!\downarrow\dots\downarrow},
\end{equation}
where $\ket{\!\downarrow\dots\downarrow}$ is a reference state with all spins down, and the ``log-partner'' of the vacuum
\begin{equation}
\widetilde{\ket{0}}=\gamma^+\theta^\dagger_{N/2+1}
 \dots\theta^\dagger_{N-1}\ket{\!\downarrow\dots\downarrow},
\end{equation}
i.e., we have $H \ket{0} =E_0\ket{0}$ and $H \widetilde{\ket{0}} =E_0\widetilde{\ket{0}} + \frac{2\sqrt{\pi}}{N} \ket{0}$,
where the ground-state energy $E_0$ is computed below. We thus have a Jordan cell of rank $2$.
It is easy to see using the fermions that all excited states involved in the Hamiltonian's Jordan cells are of maximum rank $2$.

\subsubsection{The spectrum generating algebra}
For later convenience in the scaling limit section, we introduce new fermions
\begin{equation}\label{etapm}
\begin{array}{l}
 \eta^+_n=\sqrt{\frac{n}{N \sin\pi\frac{n}{N}}}\theta^\dagger_{N/2+n},\\
 \eta^-_n=\sqrt{\frac{n}{N \sin\pi\frac{n}{N}}}\theta_{N/2-n},
\end{array}
\qquad n=-\ffrac{N}{2}+1,\dots,\ffrac{N}{2}-1,\quad n\neq0,
\end{equation}
together with the fermionic zero modes
\begin{equation}\label{etapm-zero}
 \eta^+_0=\ffrac{1}{\sqrt{\pi}}\theta^\dagger_{N/2},
 \qquad \eta^-_0=\ffrac{1}{\sqrt{\pi}}\theta_{N/2}.
\end{equation}
We have for these generators the anti-commutation relations
\begin{equation}\label{eta-comm}
 \{\eta^+_n,\eta^-_{n'}\}=n\delta_{n+n',0},\qquad
 \{\eta^\pm_0,\gamma^\mp\}=\pm1.
\end{equation}
We note that the operators $\eta^\pm_n$ become in the scaling limit studied in Sec.~\bref{sec:scal-lim}
the modes of a pair of symplectic fermions fields.

The Hamiltonian~\eqref{XXH} takes then the normal ordered form
\begin{equation}\label{XXH-eta}
 H=
\sum_{n=1}^{N/2-1}\ffrac{2}{n}\sin\pi\ffrac{n}{N}\bigl(\eta^+_{-n}\eta^-_n-\eta^-_{-n}\eta^+_{n}\bigr)
+\ffrac{2\pi}{N}\eta^+_0\eta^-_0+\cot\ffrac{\pi}{2N}-1,
\end{equation}
with the ground state energy
\begin{equation}\label{vacuum-energy}
 H\ket{0}=(\cot\ffrac{\pi}{2N}-1)\ket{0},
\end{equation}
where the vacuum now satisfies
\begin{equation}
 \eta^\pm_{\geq0}\ket{0}=0.
\end{equation}
 We call the algebra generated by these fermions the spectrum generating algebra.
Note that the adjoint action by the Hamiltonian is
\begin{equation}
 [H,\eta^\pm_n]=\pm2\sin\pi\ffrac{n}{N}\eta^\pm_n,\quad[H,\gamma^\pm]=-\ffrac{2\pi}{N}\eta^\pm_0.
\end{equation}


\subsubsection{Quantum-group generators in terms of Fourier modes}
The quantum group generators introduced in~\eqref{QG-ferm-1}
and~\eqref{sl2-gen} are expressed in terms of the
$\eta^{\pm}$-fermions as
\begin{equation}
 \E=\eta^+_0\K,\qquad \F=\eta^-_0
\end{equation}
with the $s\ell(2)$ generators
\begin{gather}
 \e=\sum_{n=1}^{N/2-1}\ffrac{1}{n}\eta^+_{-n}\eta^+_n+\gamma^+\eta^+_0,\qquad
 \f=-\sum_{n=1}^{N/2-1}\ffrac{1}{n}\eta^-_{-n}\eta^-_n-\gamma^-\eta^-_0,\label{sl-def-1}\\
\h=-\ffrac{1}{2}\Bigl(\sum_{n=1}^{N/2-1}\ffrac{1}{n}
\bigl(\eta^+_{-n}\eta^-_n+\eta^-_{-n}\eta^+_n\bigr)
+\gamma^-\eta^+_0+\gamma^+\eta^-_0\Bigr).\label{sl-def-2}
\end{gather}

We then have  the $s\ell(2)$ action on the fermions
\begin{equation}\label{sl-act-1}
 [\e,\eta^-_n]=\eta^+_n,\quad[\f,\eta^+_n]=\eta^-_n,\quad[\h,\eta^\pm_n]=\pm\ffrac{1}{2}\eta^\pm_n,
\end{equation}
\begin{equation}\label{sl-act-2}
 [\e,\gamma^-]=\gamma^+,\quad[\f,\gamma^+]=\gamma^-,\quad[\h,\gamma^\pm]=\pm\ffrac{1}{2}\gamma^\pm.
\end{equation}

\subsubsection{TL and W generators in terms of Fourier modes} Using the transformation~\eqref{theta-ferm},
the TL generators~\eqref{eq:PTL-rep-first} in terms of fermions~\eqref{etapm}  take the form
\begin{equation}\label{eta-TL}
 e_{k}= \ffrac{4(-1)^{k+1}}{N}\!\!
\sum_{j_1,j_2=-\frac{N}{2}+1}^{\frac{N}{2}-1}
     \sqrt{\ffrac{\sin\pi j_1/N}{j_1}}\sqrt{\ffrac{\sin\pi j_2/N}{j_2}}\sin\pi k\Bigl(\half+\ffrac{j_1}{N}\Bigr)
     \sin\pi k\Bigl(\half+\ffrac{j_2}{N}\Bigr)\eta^+_{j_1}\eta^-_{j_2},
\end{equation}
where (and in what follows) we assume
\begin{equation*}
 \left.\ffrac{\sin\pi j/N}{j}\right|_{j=0}=\ffrac{\pi}{N}.
\end{equation*}
Meanwhile,
the generators
$\W^{\alpha}_j$~\eqref{wbp}-\eqref{wbm} of the algebra $\Wlatq{i}{N}$ can be written as
\begin{align}\label{Whatp}
 \W^+_k&= \ffrac{2}{N}
    \sum_{j_1,j_2=-\frac{N}{2}+1}^{\frac{N}{2}-1}
     \sqrt{\ffrac{\sin\pi j_1/N}{j_1}}\sqrt{\ffrac{\sin\pi j_2/N}{j_2}} V_k(j_1,j_2) \eta^+_{j_1}\eta^+_{j_2},\\
\W^0_k&= \ffrac{2}{N}
    \sum_{j_1,j_2=-\frac{N}{2}+1}^{\frac{N}{2}-1}
     \sqrt{\ffrac{\sin\pi j_1/N}{j_1}}\sqrt{\ffrac{\sin\pi j_2/N}{j_2}} V_k(j_1,j_2)\bigl(\eta^+_{j_1}\eta^-_{j_2}+\eta^-_{j_1}\eta^+_{j_2}\bigr),\\
\label{Whatm} \W^-_k&= \ffrac{2}{N}
    \sum_{j_1,j_2=-\frac{N}{2}+1}^{\frac{N}{2}-1}
     \sqrt{\ffrac{\sin\pi j_1/N}{j_1}}\sqrt{\ffrac{\sin\pi j_2/N}{j_2}} V_k(j_1,j_2) \eta^-_{j_1}\eta^-_{j_2},
\end{align}
where we introduce trigonometric functions
\begin{equation*}
 V_k(j_1,j_2)=\sin\pi\ffrac{j_1 - j_2}{2N} \cos\pi\ffrac{(j_1 + j_2) (k + \half)}{N}
- (-1)^k\cos\pi\ffrac{j_1 + j_2}{2N}\sin\pi\ffrac{ (j_1 - j_2) (k + \half)}{N}.
\end{equation*}
We note that these functions have the symmetries
\begin{equation*}
 V_k(-j_1,-j_2)=V_k(j_2,j_1)=-V_k(j_1,j_2).
\end{equation*}

\subsection{Generalized  lattice modes}\label{sec:gen-hamilt}

Our ultimate purpose is to establish a connection between the lattice algebras and the ones present in the corresponding conformal field theory. The intuition guiding us is that the Temperley--Lieb generators are in some sense a lattice version of the stress energy tensor $T(z)$, and similarly the $\W^\alpha_j$ generators a lattice version of the triplet chiral  W-algebra currents $W^\alpha(z)$. The strategy to make this more precise is to define Fourier modes expected to converge in a certain sense (to be discussed below) towards the Virasoro $L_m$ and W-algebra modes $W^\alpha_n$.  If such a convergence occurs, one might expect that the lattice modes obey, in the limit $N\to\infty$, the commutation relations of the CFT.

Of course, the algebra of lattice modes will not in general be closed, and calculating commutators will generate more quantities on the right hand side which should also converge to the Virasoro and W-algebra modes. This means that there must be a most likely infinite family of lattice versions of the $L_m$ and $W^\alpha_n$, with convergence to a single Virasoro and W-algebra occurring in the limit $N\to\infty$. In this subsection, we define this family explicitely, postponing the discussion of what happens as $N\to\infty$ to the next section.

\subsubsection{Higher approximation of Virasoro modes}

The fact that many lattice expressions can converge to the same continuum limit is well known. For instance, there is an infinite number of choices of natural Hamiltonians for the XXZ spin-chains coming out of the Quantum Inverse Scattering formalism which all go over to the Virasoro dilatation operator $L_0$ in the continuum limit. The underlying construction \cite{KooSaleur} suggests a natural extension to the case of  non-zero modes -- that is Fourier modes at non vanishing momentum.

We start by defining the operators
\begin{equation}\label{H0}
H^0_n=\sum_{j=1}^{N-1}\cos\pi\ffrac{nj}{N} e_{j},\qquad n\in\oZ.
\end{equation}
The Hamiltonian~(\ref{XXH}) in this notation is $H^0_0$.
The adjoint action by the Hamiltonian $H^0_0$ on $H^0_n$ generates the family $H^r_n$ with $r\in\oN_0$ and $n\in\oZ$ in the following way
\begin{equation}\label{recurrentH}
 [H^0_0,H^r_m]=-4\sin\pi\ffrac{m}{2N}H^{r+1}_{m},\qquad r=0,1,2,\dots.
\end{equation}
We note that this recurrence  does not determine $H^r_0$ directly but neatly means that $H^0_0$ commutes with $H^r_0$.
To define $H^r_0$ we calculate $H^r_m$ with $m\neq0$ using the recurrence and then set $m=0$ in the final formula.
Relation~\eqref{recurrentH}
 bears some similarity with the definition of ladder operators in integrable XXZ spin chains \cite{Araki}, but is different.

We recall then the fermionic expression for the Hamiltonian in~\eqref{XXH-eta}.
By a direct calculation using the formulas~\eqref{eta-TL}, and~\eqref{H0}, and the recurrence relation~\eqref{recurrentH},
we obtain fermionic expressions for all the higher modes
\begin{multline}\label{hatH}
 H^r_n=\sum_{j_1,j_2=-\frac{N}{2}+1}^{\frac{N}{2}-1}
     \sqrt{\ffrac{\sin\pi j_1/N}{j_1}}\sqrt{\ffrac{\sin\pi j_2/N}{j_2}}\times\\
\times\Bigl(\cos^r\pi\ffrac{j_1 - j_2}{2 N} \bigl((-1)^r\delta_{j_1 + j_2,n} + \delta_{j_2 + j_1, -n}\bigr)-\\
- \sin^r\pi\ffrac{j_1 + j_2}{2 N} \bigl( \delta_{j_1-j_2, N + n}+\delta_{j_1-j_2, -N - n}
+(-1)^r\delta_{j_1 - j_2,N - n} + (-1)^r\delta_{j_1-j_2, -N + n} \bigr)\Bigr)
  \eta^+_{j_1}\eta^-_{j_2}.
\end{multline}
That these operators  give in the scaling (or continuum) limit the Virasoro modes and more generally elements in the universal enveloping of the Virasoro algebra
at central charge $c=-2$ will be discussed below in the next section. Now, we introduce what will turn out later to be  lattice analogues of modes of the triplet W-algebra currents.

\subsubsection{Higher approximations of W-algebra modes}
 Similarly, we  first define the operators
\begin{equation}
 W^{\alpha,0}_n=\sum_{j=1}^{N-2}\cos\pi\ffrac{n(j+\frac{1}{2})}{N}\W^\alpha_{j},
\qquad\alpha=+,-,0.
\end{equation}
The adjoint action by the Hamiltonian $H^0_0$ on $W^{\alpha,0}_n$ generates the family $W^{\alpha,r}_n$ with $r\in\oN_0$ and $n\in\oZ$ in the following way
\begin{equation}\label{recurrentW}
 [H^0_0,W^{\alpha,r}_m]=-4\sin\pi\ffrac{m}{2N}W^{\alpha,r+1}_{m},\qquad r=0,1,2,\dots,\qquad\alpha=+,-,0.
\end{equation}
We note that $W^{\alpha,r}_0$ is determined in the same way as it is explained after~(\ref{recurrentH}).

Using~\eqref{XXH-eta} for $H^0_0=H$ and~\eqref{Whatp}-\eqref{Whatm} with the recurrent relation~\eqref{recurrentW},
we obtain
\begin{align}\label{hatWp}
 W^{+,r}_n&= \sum_{j_1,j_2=-\frac{N}{2}+1}^{\frac{N}{2}-1}
     \sqrt{\ffrac{\sin\pi j_1/N}{j_1}}\sqrt{\ffrac{\sin\pi j_2/N}{j_2}} U_{r,n}(j_1,j_2) \eta^+_{j_1}\eta^+_{j_2},\\
W^{0,r}_n&= \sum_{j_1,j_2=-\frac{N}{2}+1}^{\frac{N}{2}-1}
     \sqrt{\ffrac{\sin\pi j_1/N}{j_1}}\sqrt{\ffrac{\sin\pi j_2/N}{j_2}} U_{r,n}(j_1,j_2)
\Bigl(\eta^+_{j_1}\eta^-_{j_2}+\eta^-_{j_1}\eta^+_{j_2}\Bigr),\label{hatWz}\\
W^{-,r}_n&= \sum_{j_1,j_2=-\frac{N}{2}+1}^{\frac{N}{2}-1}
     \sqrt{\ffrac{\sin\pi j_1/N}{j_1}}\sqrt{\ffrac{\sin\pi j_2/N}{j_2}} U_{r,n}(j_1,j_2) \eta^-_{j_1}\eta^-_{j_2},
\label{hatWm}
\end{align}
where we introduce trigonometric functions
\begin{multline}
  U_{r,n}(j_1,j_2)= \sin\pi\ffrac{j_1 - j_2}{2 N}\cos^r\pi\ffrac{j_1 - j_2}{2 N}
       \bigl((-1)^r\delta_{j_1 + j_2, n}+\delta_{j_1 + j_2, -n}\bigr)\\
+2 \cos\pi\ffrac{j_1 + j_2}{2 N}\sin^r\pi\ffrac{j_1 + j_2}{2 N} \bigl(\delta_{j_1 - j_2, -N-n} +(-1)^r
      \delta_{j_1 - j_2, -N+n}\bigr).
\end{multline}

\subsection{A note on the odd number of sites}\label{sec:odd-N-1}
We can also consider the spin-chains with odd number~$N$ of sites. They
will give below a different sector of symplectic fermions in the
scaling limit.
For the odd-$N$, we define the fermions $\eta^\pm_n$ with half-integer $-\frac{N}{2}+1\leq n\leq \frac{N}{2}-1$ by
formulas~\eqref{etapm} where the fermionic operators $\theta_k$ and $\theta^{\dagger}_k$, with $0\leq k\leq N-1$, are defined by~\eqref{theta-ferm}
with amplitudes $A_k(j)$ defined by~\eqref{amplitudes}, for $k=1,\dots, N-1$,
and $A_0(j)=i^j$. We also have the $\UresSL2$
generators (\ref{QG-ferm-1}) represented as $\E=\theta^{\dagger}_0 K$ and $\F=\theta_0$ satisfying
\begin{equation}
 \{\E\K^{-1},\eta^\pm_j\}=0,\quad\{\F,\eta^\pm_j\}=0,\qquad j=-\ffrac{N}{2}+1,-\ffrac{N}{2}+2,\dots,\ffrac{N}{2}-1,
\end{equation}
in addition to the quantum group relations.
Then the backward transformation is
\begin{align}
 c^\dagger_k&= -i \sum_{j=-N/2+1}^{N/2-1}
     \frac{1}{j} \sqrt{\ffrac{j}{N\sin\pi j/N}} A_{N/2 + j}( k)\eta^+_j-  i^{k}\E\K^{-1},\\
c_k&=-  \sum_{j=-N/2+1}^{N/2-1}  \frac{1}{j} \sqrt{\ffrac{j}{N\sin\pi j/N}} A_{N/2 - j}( k)\eta^-_j + i^{k-1}\F.
\end{align}
The $s\ell(2)$ generators are now represented  as
\begin{gather*}
 \e=\sum_{n=1/2}^{N/2-1}\ffrac{1}{n}\eta^+_{-n}\eta^+_n,\qquad
 \f=-\sum_{n=1/2}^{N/2-1}\ffrac{1}{n}\eta^-_{-n}\eta^-_n,\\
\h=-\ffrac{1}{2}\sum_{n=1/2}^{N/2-1}\ffrac{1}{n}
\bigl(\eta^+_{-n}\eta^-_n+\eta^-_{-n}\eta^+_n\bigr).
\end{gather*}

We note that the Hamiltonian (\ref{XXH}) takes a slightly different form
\begin{equation}\label{XXH-eta-odd}
 H=\sum_{n=1/2}^{N/2-1}\ffrac{2}{n}\sin\pi\ffrac{n}{N}\bigl(\eta^-_{-n}\eta^+_{n}
-\eta^+_{-n}\eta^-_n\bigr)
+\Bigl(1-\ffrac{1}{\sin\frac{\pi}{2N}}\Bigr).
\end{equation}

The definition of the lattice W-algebra is the same as in the even-$N$ case and Thm.~\bref{thm:centralizer} is true for odd-$N$ case as well.
Formal expressions of the lattice W-algebra generators in terms of $\eta^{\pm}$ fermions are the same as in the even-$N$ case.
So, the generators $e_k$ are given by~\eqref{eta-TL} and $W^{\alpha}_k$ are given by (\ref{Whatp})-(\ref{Whatm}).
Similarly, their linear combinations $H^0_n$ and $W^{\alpha,0}_n$ together with higher ``currents'' $H^r_n$ and $W^{\alpha,r}_n$ are expressed in the same way. So, the currents $H^r_n$ are given by (\ref{hatH}) and currents $W^{\alpha,r}_n$ by (\ref{hatWp})--(\ref{hatWm}).
For odd $N$ the operators $H^r_n$ and $W^{\alpha,r}_n$ satisfy the same commutation relations as for even $N$.

\medskip

Next, we turn to the study of the scaling properties of the whole
family of the operators (modes) $H^r_n$ and $W^{\alpha,r}_n$
introduced in~\bref{sec:gen-hamilt}.

\section{Scaling limit of the XX chain and W-algebra modes\label{sec:scal-lim}}
In this section we calculate the limit $N\to\infty$ of the commutation relations
between modes (\ref{hatH}) and (\ref{hatWp})-(\ref{hatWm}) and show that they
coincide with the commutation relations of the triplet W-algebra modes.
For the convenience of the reader we give all the necessary conventions about symplectic
fermions and the triplet W-algebra in App.~\bref{App:W-tripl}.

\subsection{Scaling limit and $1/N$-decomposition}
Here, we discuss how to proceed from the $\Wlatq{i}{N}$
generators to get all the modes of the triplet $W$-algebra currents, including the Virasoro generators, in the chiral logarithmic
conformal field theory of symplectic fermions. The procedure is called the scaling limit $N\to\infty$ of the model. This limit involves the rescaled Hamiltonian $NH$ and its ``low-lying'' eigenvectors: one should choose the vacuum state $|0\rangle$  and consider in the limit only finite-energy states for $N(H-E_0)$, with the ground state energy $E_0=\langle0|H|0\rangle$, or in our case the states generated by $\eta^{\pm}_n$ and $\gamma^{\pm}$ from $|0\rangle$ for any finite $n$. This way we obtain the CFT Hilbert space and then we should take the limit of the higher modes $H_n^r$ and $W_n^{\alpha,r}$ rescaled also by $N$ in the basis of the ``low-lying'' eigenvectors. For these reasons, we found the decomposition of  the Fourier modes $H_n^r$ and $W_n^{\alpha,r}$ in Sec.~\bref{sec:gen-hamilt} in terms of operators that generate eigenvectors from the vacuum of $H$ --  the fermionic bilinears $\eta^{\alpha}_j\eta^{\beta}_k$ (note that coefficients in the decompositions are functions of the spin-chain size $N$.) We will show below that $H_n^r$ and $W_n^{\alpha,r}$ generate a Lie algebra whose structure constants are trigonometric  functions of $N$. The scaling limit of the Fourier modes $H_n^r$ and $W_n^{\alpha,r}$  and their commutators involves thus formal expansions of the operators and the structure constants, respectively, as Laurent series in $N$. Different terms in such expansions for $H_n^r$ and $W_n^{\alpha,r}$ are then indeed identified with
 elements from the enveloping algebra of the chiral triplet $W$-algebra. Similar constructions of limits were discussed in~\cite{GRS1,GRS3} where also more formal construction of direct/inductive limits is introduced.

\subsubsection{The scaling limit of the Hamiltonian}
We first study the scaling limit of the
Hamiltonian~$H$. We take the Hamiltonian~\eqref{XXH-eta}, which
is written in the normal-ordered form and keep terms up to the order $1/N$
for the very large $N$ decomposition
\begin{equation}
 H=-2\ffrac{N}{\pi}+1+2\ffrac{\pi}{N}\Bigl(-\ffrac{(-2)}{24}+\sum_{n=1}^{N/2-1}\bigl(\eta^-_{-n}\eta^+_{n}
-\eta^+_{-n}\eta^-_n\bigr)
-\eta^+_0\eta^-_0\Bigr)+O(1/N^2).
\end{equation}
We see that the leading part (after subtracting the vacuum expectation of the Hamiltonian) which is an operator in front of $1/N$ has at $N\to\infty$ the form  $L_0-c/24$ on the ``low-lying'' eigenvectors or scaling states, with $c=-2$. This expression of $L_0$ is well-known and it is the zero mode of the stress-energy tensor $T(z)$ in the symplectic fermions theory (compare it with~\eqref{L-modes}). Thus, properly renormalized Hamiltonian $\frac{N}{2\pi}(H-\langle0|H|0\rangle)$ converges to $L_0-c/24$.

\subsubsection{Higher Hamiltonians}
We could expect new features considering the scaling limit of the
whole family of the higher Hamiltonians $H^r_0$ of the open XX spin-chain
introduced above.

The limit $N\to\infty$ of operators $H^{2k}_0$ given by (\ref{hatH}) is singular and therefore we should subtract
the divergent part, as we did it for $H=H^0_0$. To do this we calculate the eigenvalues $2h_\ell$ of $H^{\ell}_0$ on the vacuum vector~(\ref{vacuum})
using~(\ref{hatH}), which gives
\begin{gather}\label{vac-eigen}
 h_0=\half\bigl(\cot\ffrac{\pi}{2N}-1\bigr),\\
\notag h_\ell=-\ffrac{1}{2(\ell+1)}+\ffrac{1}{2^{\ell+1}}
\sum_{n=0}^{\ell/2}\left(\left({\footnotesize\ell-1\atop\ell/2+n-1}\right)-\left({\footnotesize\ell-1\atop\ell/2+n+1}\right)\right)
\cot\pi\ffrac{2n+1}{2N},\quad\mbox{\rm for even $\ell>0$.}
\end{gather}
As we will see from general formulas below, the first leading term (in $1/N$) of the renormalized Hamiltonians
$\frac{N}{2\pi}(H^l_0-2h_l)$ gives $L_0$ again but the next subleading terms, those in front of $1/N^k$, are different
from those in the $H^0_0$ decomposition and are particular polynomials in zero modes of $T(z)$ and its descendants.

\subsubsection{Virasoro modes in symplectic fermions}
We start from calculating the scaling limit of the Fourier modes $H^0_n$.
Expanding $\sqrt{\sin\pi j/N}$ in (\ref{hatH}) into a series in $\frac{1}{N}$ we obtain the very large $N$ decomposition
\begin{multline*}
 \frac{N}{\pi}(H^0_n-2h_0\delta_{n,0})=-\sum_{j=-\frac{N}{2}+1}^{\frac{N}{2}-1}:\eta^+_{j}\eta^-_{n-j}:
                                       -\sum_{j=-\frac{N}{2}+1}^{\frac{N}{2}-1}:\eta^+_{j}\eta^-_{-n-j}:+\\
+\sum_{j_1=-\frac{N}{2}+1}^{\frac{N}{2}-1}\sum_{j_2=-\frac{N}{2}+1}^{\frac{N}{2}-1}\bigl( \delta_{j_1-j_2, N + n}+\delta_{j_1-j_2, -N - n}
+\delta_{j_1 - j_2,N - n} +\delta_{j_1-j_2, -N + n} \bigr)
  \eta^+_{j_1}\eta^-_{j_2}+
o\bigl(\ffrac{1}{N}\bigr),
\end{multline*}
where we use the standard normal ordering
\begin{equation}
:\eta^{\alpha}_{j_1}\eta^{\beta}_{j_2}: =
\begin{cases}
\eta^{\alpha}_{j_1}\eta^{\beta}_{j_2},&\qquad j_2\geq0,\\
-\eta^{\beta}_{j_2}\eta^{\alpha}_{j_1},&\qquad j_1\geq0.
\end{cases}
\end{equation}
We note that the second line of the previous expression vanishes on the scaling or the low-energy states because for any $n$ we can find $N$ such that $j_1$ or $j_2$
will be larger than any fixed number. Therefore, on the scaling states we have
\begin{equation}
 \ffrac{N}{\pi}(H^0_n-2h_0\delta_{n,0})=L^{(N)}_n+L^{(N)}_{-n} + o\bigl(\ffrac{1}{N}\bigr),
\end{equation}
where
\begin{equation}
 L^{(N)}_n=-\sum_{j=-\frac{N}{2}+1}^{\frac{N}{2}-1}:\eta^+_{j}\eta^-_{n-j}:.
\end{equation}
This expression coinsides with (\ref{L-modes}) in the limit $N\to\infty$. In what follows we suppress
the superscript ${}^{(N)}$ in the expressions whenever it cannot lead to a confusion.

Similarly, after subtracting the vacuum eigenvalues we can write the series decomposition for the whole family of higher modes $H^r_n$ in $1/N$
on the scaling states:
\begin{equation}\label{H-decomp-N}
 H^r_n-(1+(-1)^r)h_r\delta_{n,0}=\sum_{k=1}^\infty\left(\ffrac{\pi}{N}\right)^{2k-1}H^r_n[k],
\end{equation}
where first several leading and subleading terms in front of $1/N^k$ have at $N\to\infty$ the form
\begin{align}
\label{H1} H^r_n[1]&=(-1)^rL_n+L_{-n},\\
 H^r_n[2]&=\ffrac{1}{24}\Bigl(4 (3 r+1)\bigl((-1)^r(L^2)_n+(L^2)_{-n}\bigr) - 3 r \bigl((-1)^r(\partial^2L)_n+(\partial^2L)_{-n}\bigr)+\\
\notag&\kern100pt+ 3(r+2) \bigl((-1)^r(\partial L)_n+(\partial L)_{-n}\bigr) + 4 \bigl((-1)^rL_n+L_{-n}\bigr)\Bigr),\\
H^r_n[3]&=\ffrac{1}{5760}\Bigl(-64 (15 r^2+1)\bigl((-1)^r(L^3)_n+(L^3)_{-n}\bigr) + \\
\notag&\kern20pt+ 8 (2 + 15 r( r-1))\bigl((-1)^r(\partial^2L^2)_n+(\partial^2L^2)_{-n}\bigr) +\\
\notag&\kern40pt+ (-4 + 15 r ( r+2))\bigl((-1)^r(\partial^4L)_n+(\partial^4L)_{-n}\bigr) - \\
 \notag&\kern60pt-1080 r (r+1) \bigl((-1)^r(\partial^2L^2)_n+(\partial^2L^2)_{-n}\bigr)+ \\
 \notag&\kern80pt
+ 10 (-4 + 3 r (9 r+8))\bigl((-1)^r(\partial^3L)_n+(\partial^3L)_{-n}\bigr) - \\
 \notag&\kern100pt
-  80 (9 r^2+ 24 r-1) \bigl((-1)^r(L^2)_n+(L^2)_{-n}\bigr) + \\
 \notag&\kern120pt
+ 45 ( r^2+ 6 r-4 ) \bigl((-1)^r(\partial^2L)_n+(\partial^2L)_{-n}\bigr) -  \\
 \notag&\kern140pt
 -45 (r^2+ 6 r+4) \bigl((-1)^r(\partial L)_n+(\partial L)_{-n}\bigr) -\\
 \notag&\kern220pt - 48 \bigl((-1)^rL_n+L_{-n}\bigr)\Bigr),
\end{align}
where  the modes $(\partial^k L^m)_n$ of the composite fields $:\!\partial^k T(z)^m\!:$\, are computed using the fermionic expression~\eqref{eq:T} of the stress energy tensor $T(z)$, expressions for $L_n$  and $(L^2)_n$  given
in~\eqref{L-modes}-\eqref{L2-modes}. Note that for $n=0$ we obtain in this way an
infinite family of commuting Hamiltonians $[H^r_0,H^k_0]=0$ as  particular polynomials in zero modes of $T(z)$ and its descendants.

In order to obtain the defining relations of the Virasoro algebra in the continuum limit,
only the leading $1/N$ term will be necessary below, that is the $H^r_n[1]$.
For the triplet W-algebra relations, the presence of non linearities requires
keeping also $H^r_n[2]$, and similarly (see below) $W^{\alpha,r}_n[1]$ and $W^{\alpha,r}_n[2]$.

\subsubsection{W-algebra modes in symplectic fermions}
In a similar way, we obtain the decompositions of (\ref{hatWp})--(\ref{hatWm})
\begin{equation}\label{W-decomp}
 W^{\alpha,r}_n=\sum_{k=1}^\infty\left(\ffrac{\pi}{N}\right)^{2k}W^{\alpha,r}_n[k],
\end{equation}
where the first two coefficients have at $N\to\infty$ the form
\begin{align}\label{W-decomp-coefs}
 W^{\alpha,r}_n[1]&=(-1)^rW^\alpha_n+ W^\alpha_{-n},\\
 W^{\alpha,r}_n[2]&=\ffrac{1}{48}\Bigl(\ffrac{48}{5} ( 3 r+2)\bigl((-1)^r(LW^\alpha)_n+(LW^\alpha)_{-n}\bigr)-\\
\notag&\kern60pt
-\ffrac{2}{5} (9 r + 1)\bigl((-1)^r(\partial^2W^\alpha)_n+(\partial^2W^\alpha)_{-n}\bigr)-\\
\notag&\kern80pt
-(18 r + 26)\bigl((-1)^r(\partial W^\alpha)_n+(\partial W^\alpha)_{-n}\bigr)-\\
\notag&\kern100pt
-(6 r+22)\bigl((-1)^r W^\alpha_n+ W^\alpha_{-n}\bigr)\Bigr),
\end{align}
with expressions for $W^{\alpha}_n$  and $(LW^{\alpha})_n$  given
in~\eqref{Wpm-modes}-\eqref{LW-modes}. Note that the leading term in $W^{\alpha,r}_n$
appears at $1/N^2$ and we have a convergence of the rescaled Fourier modes $\frac{N^2}{\pi^2} W^{\alpha,r}_n$ to
the modes $(-1)^rW^\alpha_n+ W^\alpha_{-n}$ of the chiral W-algebra currents, for any integer~$n$.

\subsection{Scaling limit of the commutators}
In this section we calculate commutators of $H^r_n$ and $W^{\alpha,r}_n$ and show that in the limit $N\to\infty$ the commutators
give the commutation relations (\ref{LL})--(\ref{WW}) of the chiral triplet W-algebra of symplectic fermions.
Using the explicit expression (\ref{hatH}) we obtain the following commutation relations
\begin{align}\label{H0H0}
 [H^0_n,H^0_m]&=2\sin\pi\ffrac{n-m}{2N}H^1_{n+m}+
  2\sin\pi\ffrac{n+m}{2N}H^1_{n-m},\\
 [H^0_n,H^1_m]&=2\sin\pi\ffrac{2n-m}{2N}H^2_{n+m}-
  2\sin\pi\ffrac{2n+m}{2N}H^2_{n-m}-\\
&\notag\kern50pt-2\sin\pi\ffrac{n}{2N}\bigl(\cos\pi\ffrac{n-m}{2N}H^0_{n+m}-\cos\pi\ffrac{n+m}{2N}H^0_{n-m}\bigr),\\
\label{H1H1} [H^1_n,H^1_m]&=2\sin\pi\ffrac{n-m}{N}H^3_{n+m}
-2\sin\pi\ffrac{n+m}{N}H^3_{n-m}-\\
&\notag\kern50pt-2\sin\pi\ffrac{n-m}{2N}(\cos\pi\ffrac{n+m}{2N}+\sin\pi\ffrac{n}{2N}\sin\pi\ffrac{m}{2N})H^1_{n+m}+\\
&\notag\kern90pt+2\sin\pi\ffrac{n+m}{2N}(\cos\pi\ffrac{n-m}{2N}-\sin\pi\ffrac{n}{2N}\sin\pi\ffrac{m}{2N})H^1_{n-m}.
\end{align}
Taking (\ref{H-decomp-N}) and the sign factor $(-1)^r$ in (\ref{H1}) into account we obtain the commutation relations of $L_n$'s as the limit
\begin{equation}\label{LL-N}
 [L_n,L_m]=\lim_{N\to\infty}\ffrac{N^2}{4\pi^2}\bigl([H^0_n,H^0_m]-[H^0_n,H^1_m]+[H^0_m,H^1_n]+[H^1_n,H^1_m]\bigr).
\end{equation}
Then, we substitute right hand sides of (\ref{H0H0})--(\ref{H1H1}), then substitute $H^r_n$ from (\ref{H-decomp-N})
and calculate the Laurent series in $N$ up to the second order which indeed gives the Virasoro commutation relations~(\ref{LL}).
We note that in this calculation we need to keep only first term in the decomposition~(\ref{H-decomp-N}). We also note
that the Virasoro central extension appears from the decomposition of the vacuum eigenvalues~(\ref{vac-eigen}) in $N$.

\medskip

The commutation relations $[H^r_n,W^{\alpha,s}_m]$ and $[W^{\alpha,r}_n,W^{\beta,s}_m]$ are given in App.~\bref{sec:comm-rel}.
To calculate the commutators
\begin{equation}
 [L_n,W^{\alpha}_m]=\lim_{N\to\infty}\ffrac{N^3}{4\pi^3}\bigl([H^0_n,W^{\alpha,0}_m]-[H^0_n,W^{\alpha,1}_m]-[H^1_m,W^{\alpha,0}_m]
+[H^1_n,W^{\alpha,1}_m]\bigr)
\end{equation}
we take the commutators from (\ref{H0W0})--(\ref{H1W1}) and substitute the Laurent decompositions (\ref{H-decomp-N}) and (\ref{W-decomp})
into them.
This gives the relations~(\ref{LW}).

\medskip

The calculation of the commutator
\begin{equation}
 [W^{\alpha}_n,W^{\beta}_m]=\lim_{N\to\infty}\ffrac{N^4}{4\pi^4}\bigl([W^{\alpha,0}_n,W^{\beta,0}_m]-[W^{\alpha,1}_n,W^{\beta,1}_m]
-[W^{\alpha,1}_m,W^{\beta,0}_m]+[W^{\alpha,1}_n,W^{\beta,1}_m]\bigr)
\end{equation}
differs from the previous two in the following point. Due to the factor $N^4$ 
 we should take the decompositions
of the right hand sides of (\ref{W0W0})--(\ref{W1W1}) up to the $4$th order. The leading order coming from the trigonometric functions is $0$ or $1$
and therefore we should keep the  $3$rd and $4$th orders in the decompositions (\ref{H-decomp-N}) and (\ref{W-decomp}) respectively.
This calculation results in the contribution from $(L^2)_n$ and $(LW^\alpha)_n$ in the chiral  W-algebra relations~(\ref{WW}), which is obtained after taking the limit. We have thus succeeded
in reproducing the algebra structure of the Virasoro $L_n$ and the $W^{\alpha}_n$ modes from our lattice approximations.

\subsection{A note on the odd-$N$ case scaling limit}\label{sec:odd-N-2}
The calculation of the $N\to\infty$ limit for the odd $N$ differs from the calculation
for the even $N$ only by normal ordering constants. Thus for the odd case the calculation
reapeats the even case but now instead of (\ref{vac-eigen})
we take the normal ordering constants
\begin{gather}\label{vac-eigen-odd}
 h_0=\half\Bigl(\ffrac{1}{\sin\frac{\pi}{2N}}-1\Bigr),\\
\notag h_\ell=-\ffrac{1}{2(\ell+1)}+\ffrac{1}{2^{\ell+1}}
\sum_{n=0}^{\ell/2}\left(\left({\footnotesize\ell-1\atop\ell/2+n-1}\right)-\left({\footnotesize\ell-1\atop\ell/2+n+1}\right)\right)
\ffrac{1}{\sin\pi\frac{2n+1}{2N}},\quad\mbox{\rm for even $\ell>0$}.
\end{gather}

\section{The XXZ representation and the W-algebra on a lattice}\label{sec:XXZ}
In this section, we give our definition of the lattice W-algebra in
the context of XXZ spin-chains at any root of unity $\q$. We begin with collecting basic facts about XXZ spin-chains with
quantum-group symmetry~\cite{PasquierSaleur,ReadSaleur07-2,GV}. Then, we define the lattice $W$-algebra using its centralizing property with the small quantum group and study its properties (generators, relations, representation theory, etc.) Finally, in the last subsection we discuss two scaling limits of this lattice algebra, describing $(p-1,p)$ and $(1,p)$ models.

\subsection{The Hamiltonian}
We consider the XXZ Hamiltonian of $N$ one-half spins with an open
boundary condition described by the ``quantum-group symmetric'' boundary term~\cite{PasquierSaleur},
\begin{equation}\label{XXZ-sigma}
H(\q) = \half\sum_{i=1}^{N-1}(\sigmax_i \sigmax_{i+1} + \sigmay_i \sigmay_{i+1}
+ \ffrac{\q+\q^{-1}}{2} \sigmaz_i \sigmaz_{i+1}) - \ffrac{\q-\q^{-1}}{4}(\sigmaz_1-\sigmaz_{N}),
\end{equation}
where $\sigmax_i, \sigmay_i$ and $\sigmaz_i$ are usual Pauli matrices
given in~\eqref{Pauli}.
We note that we have an opposite sign in front of the boundary
term~\eqref{XXZ-sigma}, with respect to the standard
choice~\cite{PasquierSaleur,ReadSaleur07-2}, which corresponds to a slightly different basis (corresponding
to $\q\to\q^{-1}$ or relabeling of sites from $N$ to $1$) we use in
order to adapt to our quantum-group notations.

 The Hamiltonian $H(\q)$ is not
hermitian (with respect to the usual bilinear form on the spin chain) due to the imaginary boundary term
$\frac{\sin(\pi/p)}{2\rmi}$ but its eigenvalues are real while its
Jordan form has cells of rank $2$. An easiest way to see non-trivial Jordan
forms in the spectrum is to study symmetries of the Hamiltonian densities which we discuss below.

We note finally that the Hamiltonian is massless when $|\q|=1$ and we
will consider only this critical case. Moreover,
roots of unity cases are most interesting for applications in the
scaling limit, and these are the ones we will consider  for the definition and study
of lattice versions of the triplet W-algebras.
We recall some  basic information about the quantum group $\LUresSL2$
symmetry at the root of unity cases in App.~\bref{app:qunatm-gr-def} where we also give relations to
more usual (in the spin-chain literature) quantum group generators
$S^{\pm}$, $S^z$ and $\q^{S^z}$.

\subsection{The Casimir operator and Temperley--Lieb algebra}
As a module over $\LQG$, the XXZ spin
chain $\chVv$ is a tensor product of $N$ copies of the
fundamental two-dimensional simple module $\XX_{2,1}$ (see
notations below in~\bref{sec:irreps})
 such that  the genereators are represented on $\XX_{2,1}$ as
\begin{equation*}
\E = \sigmap, \quad \F = \sigmam,  \quad \K = \ffrac{\q+\q^{-1}}{2}\one + \ffrac{\q-\q^{-1}}{2}\sigmaz, \quad \text{and}\quad \e=\f=\h=0.
\end{equation*}
Using the $(N-1)$-folded
comultiplications~\eqref{N-fold-comult-cap} for the capital generators together with   the general comultiplication~\eqref{N-fold-comult-e} and~\eqref{N-fold-comult-f} for the divided powers, which we  compute in App.~\bref{app:qunatm-gr-def},
we obtain
the XXZ representation of $\LQG$
\begin{equation}
\repQGq:\LQG\to\Endo_{\oC}(\chVv), \qquad\chVv = \bigotimes_{j=1}^{N}\XX_{2,1}.
\end{equation}
It has the usual XXZ expressions for the generators
\begin{equation}\label{XXZrep-EF}
\begin{split}
\repQGq(\E)&=\sum_{j=1}^{N}\one\tensor\dots\tensor\one\tensor
\sigma^+_j\tensor \q^{\sigmaz_{j+1}}\tensor \dots \tensor  \q^{\sigmaz_N},\\
\repQGq(\F) &=
\sum_{j=1}^{N} \q^{-\sigmaz_1}\tensor\dots\tensor \q^{-\sigmaz_{j-1}}\tensor
\sigma^-_j\tensor \one \tensor \dots \tensor \one
\end{split}
\end{equation}
and for the divided powers the action is
\begin{multline}\label{XXZrep-e}
\repQGq(\e)= \q^{\frac{p(p-1)}{2}}\Bigl(\bigotimes_{j=1}^{{N}}\sigmaz_j\Bigr)
 \sum_{1\leq j_1<j_2<\dots<j_p\leq{N}}(-1)^{j_p}
    \one\tensor\dots\tensor\one\tensor\sigmap_{j_1}\tensor\q^{\sigmaz_{j_1+1}}\tensor\q^{\sigmaz_{j_1+2}}\tensor\dots\\
  \dots\tensor\q^{(k-1)\sigmaz_{j_k-1}}\tensor\sigmap_{j_k}\q^{(k-1)\sigmaz_{j_k}}
   \tensor\q^{k\sigmaz_{j_k+1}}\tensor\dots\tensor\sigmap_{j_p}\q^{(p-1)\sigmaz_{j_p}}\tensor\sigmaz_{j_p+1}\tensor\dots\tensor\sigmaz_{N}
   \end{multline}
 and
 \begin{multline}\label{XXZrep-f}
\repQGq(\f)=(-1)^{p}\Bigl(\bigotimes_{j=1}^{{N}}\sigmaz_j\Bigr)
   \sum_{1\leq j_1<j_2<\dots<j_p\leq{N}}(-1)^{j_p}
   \one\tensor\dots\tensor\one\tensor\sigmam_{j_1}\tensor\q^{\sigmaz_{j_1+1}}\tensor\q^{\sigmaz_{j_1+2}}\tensor\dots\\
  \dots\tensor\q^{(k-1)\sigmaz_{j_k-1}}\tensor\sigmam_{j_k}
   \tensor\q^{k\sigmaz_{j_k+1}}\tensor\dots\tensor\sigmam_{j_p}\tensor\sigmaz_{j_p+1}\tensor\dots\tensor\sigmaz_{N}.
\end{multline}

Then, we recall that the quantum-group Casimir element
\begin{equation}\label{eq:casimir}
  \cas=
  \F\E+\ffrac{\q \K+\q^{-1}\K^{-1}}{(\q-\q^{-1})^2}
\end{equation}
commutes with the full  quantum
group $\LQG$ and
it is represented  on $\XX_{2,1}\tensor\XX_{2,1}$ as
\begin{multline*}
\Delta(\cas) = \half\bigl( \sigmax\tensor\sigmax + \sigmay\tensor\sigmay
	+ \ffrac{\q + \q^{-1}}{2}\sigmaz\tensor\sigmaz\bigr)\\
	+ \ffrac{\q - \q^{-1}}{4}\bigl(\one\tensor\sigmaz - \sigmaz\tensor\one\bigr)
	+\ffrac{3\q^3+ \q + \q^{-1} + 3\q^{-3}}{4(\q-\q^{-1})^2}\one\tensor\one.
\end{multline*}
This gives the usual XXZ coupling and the
Hamiltonian~\eqref{XXZ-sigma} is expressed as a sum over the densities
$\Delta(\cas)_{i,i+1}$ acting on $i$th and $(i+1)$th tensorands:
\begin{equation*}\label{XXZ-C}
H(\q)=\sum_{i=1}^{N-1}\Delta(\cas)_{i,i+1} - (N-1)\ffrac{3\q^3+ \q + \q^{-1} + 3\q^{-3}}{4(\q-\q^{-1})^2}.
\end{equation*}
Introducing the  more convenient notation for the Hamiltonian densities
\begin{equation}\label{cas-TL}
e_i = -\Delta(\cas)_{i,i+1}  + \ffrac{\q^3 + \q^{-3}}{(\q-\q^{-1})^2}\one,
 \qquad 1\leq i\leq N-1,
\end{equation}
the Hamiltonian takes the usual form~\cite{PasquierSaleur,ReadSaleur07-2}
\begin{equation}\label{XXZ-e}
H(\q)=-\sum_{i=1}^{N-1}e_i + (N-1)\ffrac{\q+\q^{-1}}{4}.
\end{equation}

It is straightforward to check that the operators $e_i$, for $1\leq i\leq N-1$,
satisfy the defining relations of the Temperley--Lieb algebra $\TLq{N}$
\begin{align*}
e_i^2 &= (\q+\q^{-1})e_i,\\
e_i e_{i\pm1}e_i &= e_i,\\
e_i e_j &= e_j e_i,\quad |i-j|>1.
\end{align*}

The relation~\eqref{cas-TL} of the Temperley--Lieb generators $e_i$ with the
 Casimir element and the coassociativity\footnote{By the coassociativity, we can write the
action of $\E$ on $\oC^2\otimes\oC^2\otimes\oC^2$ in two ways: (i) $\Delta^2(E)=\Delta(\one)\otimes\E + \Delta(\E)\otimes\K$
which obviously commutes with $e_1$ and (ii) $\Delta^2(E)=\one\otimes\Delta(\E) + \E\otimes\Delta(\K)$ which
now clearly commutes with $e_2$, and similarly for the other $\LQG$ generators and higher values of $N$.} of the
quantum-group comultiplication give the well-known result~\cite{PasquierSaleur}
\begin{equation}
\left[\repQGq\bigl(\LQG\bigr),\TLq{N}\right] = 0.
\end{equation}
Moreover, it was shown first in~\cite{M1} that  $\TLq{N}$ is the centralizer of the
representation of $\LQG$ on $\chVv$ and vice versa, i.e., they are
mutual centralizers,
\begin{equation}
\TLq{N}\cong\Endo_{\LQG}(\chVv),\qquad \repQGq\bigl(\LQG\bigr) \cong \Endo_{\TLq{N}}(\chVv),
\end{equation}
for any $\q$, including  the root of unity cases.

It was also shown~\cite{M1,ReadSaleur07-2} that the representation~\eqref{cas-TL} is faithful
and  projective $\TLq{N}$-modules
 are direct summands in a decomposition of the spin-chain $\chVv$.
Such a decomposition over $\TLq{N}$ can be systematically found using the symmetry
algebra -- the centralizing algebra $\LQG$ and a decomposition of $\chVv$ as a module over $\LQG$,
see e.g.~\cite{GV}. We will analyze these decompositions for explicitly describing generators of our new lattice W-algebra and in studying its modules below.
It is thus important to recall basics of representation theory of
the quantum groups $\LQG$ and $\UresSL2$
 for any root of unity first before going into details on the lattice W-algebra and a
 spin-chain decomposition over it.

\subsection{Modules over the quantum groups}
In this section, we describe a few important $\LQG$- and $\UresSL2$-modules which appear in  decompositions of the spin-chains. We begin with recalling results on the simple modules and then describe their projective covers~\cite{[BFGT]}.

\subsubsection{Simple $\LUresSL2$-modules}\label{sec:irreps}
A simple $\LUresSL2$-module $\repX_{s,r}$ is labeled by
the pair $(s,r)$, with $1\leq s\leq p$ and $r\in\oN$, and has the highest
weights $(-1)^{r-1}\q^{s-1}$ and $\frac{r-1}{2}$ with respect to $\K$ and $\h$
generators, respectively. The $sr$-dimensional module $\XX_{s,r}$
is spanned by elements $\stprp_{n,m}$, $0\leq n\leq s{-}1$,
$0\leq m\leq r{-}1$, where $\stprp_{0,0}$ is the highest-weight
vector and the left action of the algebra on $\XX_{s,r}$ is
given~by
\begin{align}
  \K \stprp_{n,m} &=
  (-1)^{r-1} \q^{s - 1 - 2n} \stprp_{n,m},\qquad
  &\h\, \stprp_{n,m} &=  \half(r-1-2m)\stprp_{n,m},\label{basis-lusz-irrep-1}\\
  \E \stprp_{n,m} &=
  (-1)^{r-1} [n][s - n]\stprp_{n - 1,m},\qquad
  &\e\, \stprp_{n,m} &=  m(r-m)\stprp_{n,m-1},\label{basis-lusz-irrep-2}\\
  \F \stprp_{n,m} &= \stprp_{n + 1,m},\qquad
  &\f\, \stprp_{n,m} &=  \stprp_{n,m+1},\label{basis-lusz-irrep-3}
\end{align}
where we set $\stprp_{-1,m}=\stprp_{n,-1}
=\stprp_{s,m} =\stprp_{n,r}=0$.
We note  that the  $\XX_{s,r}$ restricted to the subalgebra $\UresSL2$ is
isomorphic to the $r$-fold direct sum of the simple $\UresSL2$-module $\XX_{s}^\alpha$ with $\alpha=(-1)^{r-1}$
in notations of~\cite{[FGST]} and the  action of $\UresSL2$-generators is given by the first column of~\eqref{basis-lusz-irrep-1}-\eqref{basis-lusz-irrep-3}.

\subsubsection{Projective covers}
The simple modules $\XX_{s,r}$ have projective covers $\PP_{s,r}$
with the subquotient structure
\begin{equation}\label{proj-diag-LQG}
 \xymatrix@=12pt{
    &\stackrel{\XX_{s,1}}{\bullet}\ar[d]_{}
    \\
    &\stackrel{\XX_{p - s, 2}}{\circ}\ar[d]^{}
    \\
    &\stackrel{\XX_{s,1}}{\bullet}
  }\qquad\qquad\qquad
  \xymatrix@=12pt{
    &&\stackrel{\XX_{s,r}}{\bullet}
    \ar@/^/[dl]
    \ar@/_/[dr]
    &\\
    &\stackrel{\XX_{p - s, r - 1}}{\circ}\ar@/^/[dr]
    &
    &\stackrel{\XX_{p - s, r + 1}}{\circ}\ar@/_/[dl]
    \\
    &&\stackrel{\XX_{s,r}}{\bullet}&
  }
\end{equation}
where $r\geq2$.  We also note that the $\PP_{s,r}$ as a $\UresSL2$-module is
isomorphic to the $r$-fold direct sum of $\PP_{s}^\alpha$
with $\alpha=(-1)^{r-1}$. The action of $\E$, $\F$, and $\K$ in $\PP_{s}^\alpha$ can be found in~\cite{[FGST]}.

\subsection{XXZ decompositions}
We are now ready to show the decomposition of $\Hilb_N=\tensor^N\XX_{2,1}$ over
 $\LQG$.  Such a  decomposition for any root of unity and any $N$ was explicitly written in~\cite{GV} with the final result:
\begin{multline}\label{decomp-LQG}
\Hilb_{N}|_{\rule{0pt}{7.5pt}%
\LQG} \cong   \bigoplus_{\substack{s=N\modd2+1,\\s+N=1\modd 2}}^{p-1} \dim\bigl(\IrrTL{\frac{s-1}{2}}\bigr) \XX_{s,1}
\oplus \bigoplus_{r=1}^{r_m-1}\bigoplus_{\substack{s=0,\\rp+s+N=1\modd 2}}^{p-1}\dim\bigl(\IrrTL{\frac{rp+s-1}{2}}\bigr) \PP_{p-s,r}\\
\oplus \bigoplus_{\substack{s=0,\\s+s_m=1\modd 2}}^{s_m+1}\dim\bigl(\IrrTL{\frac{r_mp+s-1}{2}}\bigr) \PP_{p-s,r_m}\,,
\end{multline}
where we defined $N=r_m p+s_m$, for $r_m\in\oN$ and $-1\leq s_m\leq p-2$,
and the multiplicity of each $\PP_{p-s,r}$ and $\XX_{s,1}$ equals
$\dd^0_{\frac{rp+s-1}{2}}$ and $\dd^0_{\frac{s-1}{2}}$,
respectively, which are dimensions of
simple modules $\IrrTL{j}$ over $\TLq{N}$:
\begin{multline}\label{dimIrr}
\dim(\IrrTL{j}) \equiv \dd^0_j = \sum_{n\geq0}\dd_{j+np} -
\sum_{n\geq t(j)+1}\dd_{j+np-1-2(j\modd p)}\\
=\sum_{\substack{j'\geq j,\, (j'-j)\modd p =0,\\-2(j\modd
    p)-1}}  (-1)^{(j'-j)\modd p}\,\dd_{j'},\qquad j\modd p \ne
\ffrac{kp-1}{2},\quad k=0,1,
\end{multline}
where we introduce numbers $ \dd_j = \binom{N}{N/2+j} -
\binom{N}{N/2+j+1}$ and a step function $\stf(j)\equiv\stf$ as
\begin{equation}
\stf=
\begin{cases}\label{stf-def}
1,&\text{for}\quad  j\modd p> \ffrac{p-1}{2},\\
0,&\text{for}\quad  j\modd p< \ffrac{p-1}{2}.
\end{cases}
\end{equation}
We assume that $\dd^0_j=0$ whenever $j>N/2$, as usually.  Each simple module $\IrrTL{j}$ is defined as the head of
the standard $\TLq{N}$-module characterized by $2j$ through lines in the links
representation (see details in e.g.~\cite{M0,ReadSaleur07-2,GV}).


\medskip

 We next turn to an introduction of a lattice
W-algebra which uses the decompositions over $\LQG$ but before we give explicit examples.

\subsubsection{Examples}\label{examp-dec}
 Using~\eqref{decomp-LQG}, we now give several examples of decompositions of the XXZ spin-chains
over $\LQG$ for $p=2,3,4$.
\begin{itemize}
\item For $p=2$, we have
\begin{equation*}
\Hilb_2 = \PP_{1,1},\qquad \Hilb_3 = 2\XX_{2,1} \oplus \XX_{2,2},
\qquad \Hilb_4 = 2\PP_{1,1}\oplus \PP_{1,2},\qquad\dots
\end{equation*}
\item For $p=3$, we have
\begin{gather*}
\Hilb_2 =  \XX_{1,1}\oplus\XX_{3,1},\qquad \Hilb_3 = \XX_{2,1} \oplus \PP_{2,1},
\qquad \Hilb_4 = \XX_{1,1}\oplus 3\XX_{3,1}\oplus \PP_{1,1},\\
\Hilb_5 = \XX_{2,1} \oplus 4\PP_{2,1}\oplus \XX_{3,2}, \qquad \Hilb_6 = \XX_{1,1}\oplus 9\XX_{3,1}\oplus 4\PP_{1,1} \oplus \PP_{2,2},\\
 \Hilb_7 = \XX_{2,1} \oplus 13\PP_{2,1}\oplus 6\XX_{3,2}\oplus \PP_{1,2},\quad
 \Hilb_8 = \XX_{1,1}\oplus 28\XX_{3,1}\oplus 13\PP_{1,1} \oplus 7\PP_{2,2}\oplus\XX_{3,3}, \qquad\dots
\end{gather*}
\item For $p=4$, we have
\begin{gather*}
\Hilb_2 = \XX_{1,1}\oplus\XX_{3,1},\qquad \Hilb_3 = 2\XX_{2,1} \oplus \XX_{4,1},
\qquad \Hilb_4 = 2\XX_{1,1}\oplus2\XX_{3,1} \oplus\PP_{3,1},\\
\qquad\Hilb_5 = 4\XX_{2,1}\oplus4\XX_{4,1}\oplus \PP_{2,1},\qquad
\Hilb_6 = 4\XX_{1,1}\oplus4\XX_{3,1}\oplus 5 \PP_{3,1}  \oplus
\PP_{1,1},\\
\Hilb_{7} =  8\XX_{2,1}\oplus14\XX_{4,1}\oplus 6\PP_{2,1} \oplus \XX_{4,2},\qquad
\Hilb_8 = 8\XX_{1,1}\oplus8\XX_{3,1}\oplus 20 \PP_{3,1}  \oplus
6\PP_{1,1}, \oplus\PP_{3,2}\qquad\dots
\end{gather*}
where the multiplicities in the front of each direct summand are
dimensions of irreducible modules over $\TLq{N}$ -- the centralizer
of $\LQG$.
\end{itemize}
We see from the decompositions in~\bref{examp-dec} a general pattern: the
simple projective module $\XX_{p,2}$ appears for the first time at $N=2p-1$ sites.
This corresponds to the first time when the algebra of endomorphisms
of the module $\Hilb_{N}$ over
 the small quantum group $\UresSL2$ is larger than the
algebra  of endomorphisms respecting the full quantum group $\LQG$ (we discuss how $\LUresSL2$-modules are restricted to
$\UresSL2$ after~\eqref{basis-lusz-irrep-3} and~\eqref{proj-diag-LQG}.) Therefore,  we have isomorphisms
\begin{equation}\label{End-LQG-Ures-smallN}
\Endo_{\LQG}(\chVv)\cong\Endo_{\UresSL2}(\chVv),\qquad \text{for}\;\; 1\leq N\leq 2p-2,
\end{equation}
while we have an inclusion
\begin{equation}
\Endo_{\LQG}(\chVv)\subset\Endo_{\UresSL2}(\chVv),\qquad
\text{for}\;\; N\geq 2p-1.
\end{equation}
We recall that the Temperley--Lieb algebra $\TLq{N}$ is alternatively defined as
$\Endo_{\LQG}(\chVv)$.
\begin{Dfn}\label{Wlat-def}
 \textit{The lattice W-algebra} $\Wlatq{\q}{N}$, for $\q=e^{i\pi/p}$, is the algebra of endomorphisms
  $\Endo_{\UresSL2}(\chVv)$, i.e., the algebra of operators commuting with the action of the restricted quantum
group $\UresSL2$ on the tensor-product representation $\chVv=\tensor_{j=1}^{N}\XX_{2,1}$.
\end{Dfn}
 We note that the algebra $\Wlatq{\q}{2p-1}$ is a non-trivial extension of $\TLq{2p-1}$. It has dimension $\dim \TLq{2p-1} + 3$.
Note that the modules  $\XX_{p,1}$ and $\XX_{p,2}$ restricted to
$\UresSL2$ are $\XX_{p}^+$ and $2\XX^-_{p}$, respectively. This means
that $\Hom_{\UresSL2}(\XX_{p,2},\XX_{p,1})=0$ and
therefore the three additional elements/generators of $\Endo_{\UresSL2}(\Hilb_{2p-1})$ are endomorphisms on the direct summand $\XX_{p,2}$ and
are given by the two maps $\e$ and $\f$
mixing two copies of $\XX_p^-$ in the $s\ell(2)$-doublet $\XX_{p,2}$ and by the generator $\h$ restricted to this direct summand.
The corresponding generators in
 $\Wlatq{\q}{2p-1}$ can be explicitly constructed using the primitive idempotent $\idem_0$
given in App.~\bref{app:idem} --
 the normalized projector ($\idem_0^2=\idem_0$) onto the modules $\XX_{p,r}$ for any \textit{even} $r$ --
as $\W^+_1 = \repQG{2p-1}(\e\idem_0)$, $\W^0_1 = \repQG{2p-1}(\h\idem_0)$, and $\W^-_1 = \repQG{2p-1}(\f\idem_0)$.

We now turn to a definition of the lattice W-algebra $\Wlatq{\q}{N}$
 using similar generators.

\subsection{Generators via the primitive idempotent and $s\ell(2)$}
We consider an extension $\Wlatt{\q}{N}$ of the Temperley--Lieb algebra $\TLq{N}$ by the generators
\begin{gather}\label{wbp-p}
 \W^+_j=\repQG{2p-1}(\e\idem_0)_{j,j+2p-2},\\
\W^0_j=\repQG{2p-1}(\h\idem_0)_{j,j+2p-2},\\
\W^-_j=\repQG{2p-1}(\f\idem_0)_{j,j+2p-2},\label{wbm-p}
\end{gather}
where $j=1,2,\dots,N-2p+2$ and we introduce the notation $\mathsf{x}_{j,j+n}$, with $\mathsf{x}\in\Endo_{\oC}\Hilb_{n+1}$, for the operator $\one\otimes\ldots\otimes\one\otimes\mathsf{x}\otimes\one\otimes\ldots\otimes\one$ applied between the $j$th  and $(j+n)$th sites and acted by identity otherwise.
The algebra $\Wlatt{\q}{N}$ commutes with $\UresSL2$
\begin{equation}\label{Uq-W-comm}
\left[\W^{\alpha}_j,\repQGq(\UresSL2)\right] = 0, \qquad 1\leq j\leq
N-2p+2,\quad \alpha\in\{0,\pm\},
\end{equation}
by construction: $\W^{\alpha}_j$ acts non-trivially on $2p-1$
tensorands between $j$th and $(j+2p-2)$th sites and by the identity on
the other tensorands, the $2p-1$ tensorands consist the module
isomorphic to the module $\Hilb_{2p-1}$ and the $\W^{\alpha}_j$ are
intertwiners respecting the $\UresSL2$-action on this $\Hilb_{2p-1}$. One should  use then the
coassociativity of the comultiplication in $\UresSL2$: we apply $\Delta^{2p-2}(\cdot)$ on the $j$th tensor
component of the element $\Delta^{N-2p+1}(x)$, for any $x\in\UresSL2$, with the resulting
operator $\Delta^{N-1}(x)$ acting on $\chVv$ obviously commuting with $\W^{\alpha}_j$. This finishes our proof of the statement in~\eqref{Uq-W-comm}.

We could write an explicit XXZ expression for $\W^{\alpha}_j$ generators using
the known expression for the central idempotent $\idem_0$ as a polynomial in the Casimir operator, see~\bref{app:idem}:
\begin{equation}
\Delta^n(\idem_0) =
\ffrac{1}{4\prod_{j=1}^{p-1}(2-\q^j-\q^{-j})^2}\bigl(\Delta^n(\casim)+2\bigr)
\prod_{j=1}^{p-1}\bigl(\Delta^n(\casim)-\q^j-\q^{-j}\bigr)^2,
\end{equation}
where
\begin{equation}
\Delta^n(\casim) = (\q-\q^{-1})^2 \Delta^n(\F)\Delta^n(\E)+\q \Delta^n(\K)+\q^{-1}\Delta^n(\K^{-1}).
\end{equation}
Then, the action of~$\repQG{2p-1}(\idem_0)_{j,j+2p-2}$ is obtained using the action~\eqref{XXZrep-EF}, where we appropriately shift the indices of the Pauli $\sigma$ matrices and replace $N$ by $2p-1$.
We also need formulas for $(2p-2)$-folded comultiplication of the divided powers $\e$ and $\f$ applied on the spin-$\half$ chain. They are similarly obtained from~\eqref{XXZrep-e} and~\eqref{XXZrep-f}.
 Using all these formulas one can write the generators $\W^{\alpha}_j$ in terms of Pauli matrices, to obtain expressions similar to TL generators $e_j$, but they  are actually very bulky and we do not give them here.

Note that the operators~\eqref{wbp-p}-\eqref{wbm-p} do not commute with the $U s\ell(2)$ part of $\LQG$ -- they form an $s\ell(2)$ triplet. We also see from the definition which involves the $U s\ell(2)$ generators that for all $j$ the generators $\W^{\alpha}_j$ themself
satisfy $s\ell(2)$ relations.
\begin{prop}\label{prop:sl2-Wlatt}
For $1\leq j\leq N-2p+2$, the generators $\W^+_j$, $\W^0_j$ and
$\W^-_j$ satisfy the $s\ell(2)$ relations
\begin{equation}
 [\W^+_j,\W^-_j] =2\W^0_j,\qquad [\W^0_j,\W^\pm_j]=\pm\W^\pm_j.
\end{equation}
and they span a basis in the adjoint representation of $s\ell(2)$:
\begin{align}
 [\e,\W^+_j]&=0,&&[\e,\W^0_j]=-\W^+_j,&&[\e,\W^-_j]=2\W^0_j,\\
 [\f,\W^+_j]&=-2\W^0_j,&&[\f,\W^0_j]=\W^-_j,&&[\f,\W^-_j]=0.
\end{align}
\end{prop}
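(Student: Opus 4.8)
The plan is to reduce every relation to a computation on the single block of $2p-1$ tensorands on which the $\W^{\alpha}_j$ act, using that $\idem_0$ is a \emph{central} idempotent.

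Recall from App.~\bref{app:idem} that $\idem_0$ is a polynomial in the Casimir $\casim$, hence commutes with all of $\LQG$; in particular $\repQG{m}(\idem_0)$ commutes with $\repQG{m}(\e)$, $\repQG{m}(\f)$ and $\repQG{m}(\h)$ for every $m$, and it is idempotent. For the relations internal to a fixed site $j$, all three generators act nontrivially only on the tensorands $j,\dots,j+2p-2$, which carry a copy of $\Hilb_{2p-1}$, and $\mathsf{x}\mapsto\mathsf{x}_{j,j+2p-2}$ is an algebra homomorphism; so it is enough to compute in $\Endo_{\oC}(\Hilb_{2p-1})$. For $x,y\in\{\e,\h,\f\}$,
\begin{equation*}
[\,\repQG{2p-1}(x\idem_0),\,\repQG{2p-1}(y\idem_0)\,]=\repQG{2p-1}\bigl((xy-yx)\idem_0^{\,2}\bigr)=\repQG{2p-1}\bigl([x,y]\,\idem_0\bigr),
\end{equation*}
since $\idem_0$ commutes with $x$ and $y$. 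Substituting the $s\ell(2)$ relations $[\e,\f]=2\h$, $[\h,\e]=\e$, $[\h,\f]=-\f$, which hold in $\LQG$, gives $[\W^+_j,\W^-_j]=2\W^0_j$ and $[\W^0_j,\W^\pm_j]=\pm\W^\pm_j$.

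For the adjoint action the point is that the global generators $\repQG{N}(\e)=\Delta^{N-1}(\e)$, $\repQG{N}(\f)$, $\repQG{N}(\h)$ interact with $\W^{\alpha}_j$ \emph{as if they were localized to the block}, which I would establish by the same coassociativity argument that proves~\eqref{Uq-W-comm}. Writing
\begin{equation*}
\Delta^{N-1}(\e)=\bigl(\mathrm{id}^{\otimes(j-1)}\otimes\Delta^{2p-2}\otimes\mathrm{id}^{\otimes(N-2p+2-j)}\bigr)\,\Delta^{N-2p+1}(\e)
\end{equation*}
and expanding, in the commutator with $\W^{\alpha}_j$ only the $j$-th factor of $\Delta^{N-2p+1}(\e)$ contributes (the other factors have disjoint support and pass through); and this factor is a sum of terms lying either in $\UresSL2$ — which commute with $\W^{\alpha}_j$ by~\eqref{Uq-W-comm} — or of the form $\e$ times a product of Cartan grouplikes, which commute with $\e,\f,\h$ and with $\idem_0$ (recall $[\e,\K]=[\f,\K]=0$). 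Collecting the surviving terms one gets $[\repQG{N}(\e),\W^{\alpha}_j]=\bigl(\repQG{2p-1}([\e,x]\idem_0)\bigr)_{j,j+2p-2}$ with $x=\e,\h,\f$ for $\alpha=+,0,-$, i.e.\ the same value as for $N=2p-1$, $j=1$. With $[\e,\e]=0$, $[\e,\h]=-\e$, $[\e,\f]=2\h$ this yields $[\e,\W^+_j]=0$, $[\e,\W^0_j]=-\W^+_j$, $[\e,\W^-_j]=2\W^0_j$, and the three relations for $\f$ follow in the same way (or by applying the Jacobi identity together with $[\e,\f]=2\h$).

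The hard part is this last reduction: one must control the iterated coproduct of the divided powers $\e$, $\f$, $\h$ and check that the $\K$-tails they generate act as the identity on the spectator tensorands (the $\K^{p}$-factors acting on the fundamental must combine to $+\one$), so that no spurious $N$- or $j$-dependent factor appears. An economical alternative is to verify the six relations directly for $N=2p-1$ and $j=1$ — where $\W^{\alpha}_1=\repQG{2p-1}(\,\cdot\,\idem_0)$ and the identities above are a one-line computation in $\LQG$ — and then propagate them to all $N$ and all $j$ by the very locality/coassociativity argument used for~\eqref{Uq-W-comm}.
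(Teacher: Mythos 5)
The half of your argument that is internal to a fixed $j$ — centrality and idempotency of $\idem_0$ reduce $[\W^+_j,\W^-_j]=2\W^0_j$ and $[\W^0_j,\W^\pm_j]=\pm\W^\pm_j$ to the $s\ell(2)$ relations in $\LQG$ — is correct, and it is exactly the paper's (tacit) justification. The genuine gap is in the adjoint-action half, which you yourself call ``the hard part'' and leave open, and both escape routes you offer break at concrete points. First, the term of $\Delta^{N-1}(\e)$ (resp.\ $\Delta^{N-1}(\f)$) that survives in the commutator is the one placing $\e$ (resp.\ $\f$) on the block, and by \eqref{e-comult} it drags the tail $\K^p\otimes\dots\otimes\K^p$ over the $j-1$ spectator sites to the left of the block; on $\XX_{2,1}$ one has $\K^p=-\one$ (eigenvalues $\q^{\pm p}=-1$), so this tail contributes $(-1)^{j-1}$ and not the $+\one$ your parenthesis requires. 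This is the same sign $(-1)^j$ by which the paper itself says the generators \eqref{wbp-sym}--\eqref{wbm-sym} differ from the $p=2$ fermionic ones \eqref{wbp}--\eqref{wbm}; with the bare definitions \eqref{wbp-p}--\eqref{wbm-p} the adjoint relations come out multiplied by this $j$-dependent sign unless the generators are dressed accordingly. In particular your fallback — verify at $(N,j)=(2p-1,1)$ and ``propagate by the very argument used for \eqref{Uq-W-comm}'' — cannot work verbatim: in \eqref{Uq-W-comm} the commutator vanishes identically, so no tail ever survives, whereas here a nonzero term survives together with its $\K^p$-tail, and that tail is precisely the $N$- and $j$-dependence you must control.

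Second, your disposal of the cross terms of \eqref{N-fold-comult-e} is not licensed by \eqref{Uq-W-comm}: that relation concerns the global coproduct $\Delta^{N-1}(u)$ for $u\in\UresSL2$, while what you need is that a factor $\E^{r}\K^{c}$ sitting in the block slot commutes with $\Delta^{2p-2}(x\idem_0)$, i.e.\ that $[\E^{r}\K^{c},x\idem_0]=0$ in $\LQG$. For $x=\e$ this is immediate from \eqref{zero-rel}, but for $x=\h$ and $x=\f$ it is exactly where the mixed relations \eqref{hE-hF-rel} and \eqref{Ef-rel} enter: one needs $A\,\idem_0=0$ (true, since $A$ in \eqref{A-element} is supported on the idempotents $\idem_s$ with $1\le s\le p-1$, orthogonal to $\idem_0$) and $\F^{p-1}\ffrac{\q\K-\q^{-1}\K^{-1}}{\q-\q^{-1}}\,\idem_0=0$ (true because on the Steinberg block selected by $\idem_0$, a sum of copies of $\XX_{p}^{-}$, the $\K$-factor vanishes on the only weight space not annihilated by $\F^{p-1}$), together with the analogous vanishing of $[\F,\e]\idem_0$. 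Without these computations the assertion that ``only the $j$-th factor contributes'' is unproved precisely for the relations $[\e,\W^-_j]=2\W^0_j$ and $[\f,\W^+_j]=-2\W^0_j$. So: same strategy as the paper, complete for the internal $s\ell(2)$ relations, but the triplet relations under $\e,\f$ are not established by your sketch; to close it you must (i) kill the cross terms via $A\,\idem_0=0$ and \eqref{Ef-rel} multiplied by $\idem_0$, and (ii) account for the $(-1)^{j-1}$ produced by the $\K^p$-tail.
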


We therefore have that there exists an action of $s\ell(2)$ on
$\Wlatt{\q}{N}$ by derivatives of the associative  multiplication.

To get more relations in $\Wlatt{\q}{N}$, we need to express
$\repQG{2p-1}(\idem_0)_{j,j+2p-2}$ in TL generators $e_k$ with $j\leq k\leq
j+2p-3$. We now turn to this problem.

\subsection{Generators via the $\q$-symmetrizer}
We now give an alternative description of the generators using the $\q$-symmetrizers $\qS r$
projecting onto the sub-representation of the maximum spin in the tensor product of spin-$\half$ representations;
we assume now that $\q$ is generic. The intertwiners $\qS r$ are elements from $\TLq{N}$ and defined by the recursion relation
\begin{equation}\label{qSym-def}
\qS{r+1} = \qS{r} -\ffrac{[r]}{[r+1]}\qS{r}e_r\qS{r},
\end{equation}
where $\qS 1=1$ and $e_r$ is the XXZ representation of the TL-generator $e_r$. Using~\eqref{qSym-def}, we easily get first ones
\begin{align*}
&\qS 2 = 1 - \ffrac{1}{[2]}e_1,\\
&\qS 3 = 1 - \ffrac{[2]}{[3]}(e_1+e_2) + \ffrac{1}{[3]}\{e_1,e_2\}.
\end{align*}
We see that at the case of a root of unity the $\qS r$ are degenerate for $r\geq p$ (because $[p]=0$) but not all of them. Indeed, $\qS 2$ is degenerate at $\q=i$ while $\qS 3$ is
well-defined and $\qS 3\to 1-e_1e_2-e_2 e_1$.
In general, the nondegenerate ones are $\qS{np-1}$, for any $n\geq1$.
This follows from the fact that the (projective and irreducible)
module $\XX_{p,n}$ is the limit $\q\to e^{i\pi/p}$ of the
$np$-dimensional Weyl module (generically irreducible module of the spin $\frac{np-1}{2}$) and the $\XX_{p,n}$ is the direct summand
in the limit. Therefore, there exists a projector onto this direct summand and this projector for $N=2p-1$ is simply given by the primitive central idempotent $\idem_0$ of the quantum group.
 We thus observe that
the primitive idempotent $\idem_0$ acted on $\Hilb_{2p-1}$ (and used
above in the definition of the generators $\W^{\alpha}_j$) coincides with $\qS{2p-1}$. For example, at $\q=i$ the operator $\qS 3 = 1-e_1e_2-e_2 e_1$ equals
$\repQG{3}(\idem_0)$ acted on $3$ sites.
This observation allows us to give a more useful definition of the lattice W-algebra generators
\begin{gather}\label{wbp-sym}
 \W^+_j=\repQG{2p-1}(\e)_{j,j+2p-2}\,\qS{2p-1}(j),\\
\W^0_j=\repQG{2p-1}(\h)_{j,j+2p-2}\,\qS{2p-1}(j),\\
\W^-_j=\repQG{2p-1}(\f)_{j,j+2p-2}\,\qS{2p-1}(j),\label{wbm-sym}
\end{gather}
where $\qS{2p-1}(j)$ is the Jones--Wenzl projector of the TL subalgebra generated by $e_k$, with $j\leq k\leq j+2p-2$, {\it i.e.}, it is defined by the recursion relation
\begin{equation}\label{qSym-def-shifted}
\qS{r+1}(j) = \qS{r}(j) -\ffrac{[r]}{[r+1]}\qS{r}(j)e_{r+j-1}\qS{r}(j).
\end{equation}
So, on the full spin-chain, this operator is applied between the $j$th  and $(j+2p-2)$th
sites and acts by identity
on the other tensorands.

We first note that the generators~\eqref{wbp-sym}-\eqref{wbm-sym} of the algebra $\Wlatt{i}{N}$ (at $\q=i$) coincide with the ones introduced in~\bref{sec:lat-W-XX}, where we used a representation of the walled Brauer algebra, up to a sign factor $(-1)^j$.
As in the case of free fermions, the algebra $\Wlatt{\q}{N}$ has many defining relations in addition to the $s\ell(2)$ relations. We do not see any reasons to write  all of them down because they do not actually help in studying representation theory. Moreover, we believe that most of them are relations coming from a particular representation of a bigger algebra which still needs to be discovered. Among the defining relations we can easily get few simple ones. Recall that we have relations in the TL algebra $\qS{r}e_j=e_j\qS{r}=0$, for any $1\leq j\leq r-1$. We therefore have
\begin{equation}
\W^{\alpha}_k e_j = e_j \W^{\alpha}_k = 0, \qquad k\leq j\leq k+2p-3, \quad \alpha = \pm, 0.
\end{equation}

Finally, we believe that $\Wlatt{\q}{N}$ generates the full
centralizer of $\UresSL2$. This is indeed true for $N\leq 2p-1$ and for all $N$ at $p=2$ case where  the definition
\eqref{wbp}-\eqref{wbm} of the lattice W-algebra coincides with $\Wlatt{\q=i}{N}$ defined here
and we have proven Thm.~\bref{thm:centralizer} about the centralizer of $\UresSL2$ for this $\q=i$ case.  We thus finish this subsection with the
following very reasonable conjecture which will be proven elsewhere.
\begin{Conj}
The associative algebra $\Wlatt{\q}{N}$ defined by the generators
$\W^{\alpha}_j$, with $1\leq j\leq N-2p+2$ and $\alpha\in\{0,\pm\}$,
and $e_k$, with $1\leq k\leq N-1$, in
the XXZ representation is isomorphic to the centralizer
$\Wlatq{\q}{N}$ of $\UresSL2$.
\end{Conj}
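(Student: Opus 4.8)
\emph{Proof proposal.}
Since the $e_k$ commute with all of $\repQGq(\LQG)\supseteq\repQGq(\UresSL2)$ by~\eqref{cas-TL}, and the $\W^{\alpha}_j$ commute with $\repQGq(\UresSL2)$ by~\eqref{Uq-W-comm}, the inclusion $\Wlatt{\q}{N}\subseteq\Wlatq{\q}{N}$ is already in hand; all the content is the reverse inclusion. The plan is a double-centralizer argument built on the two Temperley--Lieb Schur--Weyl facts recalled in the text, $\Endo_{\LQG}(\chVv)=\TLq{N}$ and $\Endo_{\TLq{N}}(\chVv)=\repQGq(\LQG)$, together with the explicit $\LQG$-decomposition~\eqref{decomp-LQG}.

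The first step is to make the target transparent. Restricting~\eqref{decomp-LQG} along $\XX_{s,r}|_{\UresSL2}\cong r\,\XX_s^{(-1)^{r-1}}$ and $\PP_{s,r}|_{\UresSL2}\cong r\,\PP_s^{(-1)^{r-1}}$ writes $\chVv$ as an explicit direct sum of indecomposable $\UresSL2$-modules, which both pins down the Wedderburn data (hence $\dim\Wlatq{\q}{N}$) and shows that for $N$ large $\chVv$ contains every indecomposable projective $\UresSL2$-module (the $\PP_s^{\pm}$, $1\le s\le p-1$, and the Steinberg modules $\XX_p^{\pm}$) as a summand, so $\chVv$ is a progenerator over $\repQGq(\UresSL2)$; the cases $N\le 2p-1$ are already settled, so one assumes $N\ge 2p$.

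The core step is the commutant identity $\Endo_{\Wlatt{\q}{N}}(\chVv)=\repQGq(\UresSL2)$ ($\supseteq$ being clear). An operator commuting with $\Wlatt{\q}{N}$ commutes with $\TLq{N}$, hence equals $\repQGq(y)$ for some $y$; by~\eqref{wbp-sym}--\eqref{wbm-sym} the $\W^{\alpha}_j$ are the divided-power $s\ell(2)$ generators $\e,\h,\f$ supported on the window $[j,j+2p-2]$ and precomposed with the Jones--Wenzl projector $\qS{2p-1}(j)\in\TLq{N}$, and since $\repQGq(y)$ already commutes with the latter, the conditions $[\repQGq(y),\W^{\alpha}_j]=0$ say that $\repQGq(y)$ commutes with $\e,\h,\f$ on every $(2p-1)$-site window. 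One must deduce from this that $\repQGq(y)$ commutes with the globally-acting $\e,\h,\f$, whose centralizer inside $\repQGq(\LQG)$ is exactly $\repQGq(\UresSL2)$ because $\LQG=\langle\UresSL2,\e,\h,\f\rangle$ and, on each summand of~\eqref{decomp-LQG}, the divided-power $s\ell(2)$ acts precisely on the $\UresSL2$-multiplicity spaces. Granting the identity, $\Wlatq{\q}{N}=\Endo_{\UresSL2}(\chVv)=\Endo_{\repQGq(\UresSL2)}(\chVv)=\Endo_{\Endo_{\Wlatt{\q}{N}}(\chVv)}(\chVv)$, which collapses to $\Wlatt{\q}{N}$ by the double-centralizer theorem applied to the progenerator $\chVv$.

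The hard part is the localization step: the $\W^{\alpha}_j$ are supported on $2p-1$ consecutive sites and truncated by a Jones--Wenzl projector, so a priori they see strictly less than the global divided powers, and recovering the full $s\ell(2)$-action from them is the non-semisimple, symmetry-broken analogue of the (already nontrivial) statement that the two-site generators $e_j$ generate $\Endo_{\LQG}(\chVv)$. I would attack it by induction on $N$, writing $\chVv=\Hilb_{N-1}\tensor\XX_{2,1}$: commutation with $\W^{\alpha}_j$ for $1\le j\le N-2p+1$ forces, via the induction hypothesis applied to $\Hilb_{N-1}$, the operator to restrict to $\repQGq(\UresSL2)$ on the first $N-1$ sites, after which the remaining $\W^{\alpha}_{N-2p+2}$ and $e_{N-1}$ control the last tensorand; the $s\ell(2)$-module-algebra structure of Proposition~\bref{prop:sl2-Wlatt}, under which the $\W^{\alpha}_j$ span spin-$1$ subspaces and so their products reach every $s\ell(2)$-isotypic component of $\Endo_{\UresSL2}(\chVv)$, is what makes the bookkeeping finite. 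Because of the non-semisimplicity, the cleanest alternative -- and the one actually used for $\q=i$ in Theorem~\bref{thm:centralizer} -- is instead to compute $\dim\Wlatt{\q}{N}$ directly, from the relations of Appendix~\bref{App:W-rel} and the $s\ell(2)$-decomposition, and match it against the value of $\dim\Endo_{\UresSL2}(\chVv)$ read off from~\eqref{decomp-LQG}; carrying that dimension count out in general, where the free-module shortcut available at $\q=i$ is no longer available, is the main technical obstacle.
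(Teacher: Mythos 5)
You should note first that the paper itself offers no proof of this statement: it is explicitly a conjecture, deferred to future work, and the only cases settled in the paper are the easy inclusion $\Wlatt{\q}{N}\subseteq\Wlatq{\q}{N}$ (this is \eqref{Uq-W-comm}), the range $N\leq 2p-1$ where the claim follows from \eqref{End-LQG-Ures-smallN}, and $p=2$, where Thm.~\bref{thm:centralizer} is proved by an explicit Clifford-algebra/free-fermion computation. Your proposal does not close the general case either, and you say so yourself: the crux is showing that an operator commuting with the window-localized, Jones--Wenzl-truncated generators $\W^{\alpha}_j$ of \eqref{wbp-sym}--\eqref{wbm-sym} (together with $\TLq{N}$) necessarily commutes with the global divided powers $\e,\h,\f$, and neither your inductive sketch on $\Hilb_{N-1}\tensor\XX_{2,1}$ nor the dimension-count alternative (which at $\q=i$ leans on the fermionic realization and is not available for general $p$) is carried out. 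That localization step is essentially the entire content of the conjecture, so what you have is a plausible strategy, not a proof.

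There is also a gap you do not flag. Granting the commutant identity $\Endo_{\Wlatt{\q}{N}}(\chVv)=\repQGq(\UresSL2)$, your final step reads $\Wlatq{\q}{N}=\Endo_{\Endo_{\Wlatt{\q}{N}}(\chVv)}(\chVv)$ and then ``collapses'' this to $\Wlatt{\q}{N}$. But the progenerator property you establish is for $\chVv$ over $\repQGq(\UresSL2)$; double centralizer for that algebra gives $\Endo_{\Wlatq{\q}{N}}(\chVv)=\repQGq(\UresSL2)$, and hence only that the double commutant of $\Wlatt{\q}{N}$ equals $\Wlatq{\q}{N}$. To conclude you need $\Wlatt{\q}{N}$ to coincide with its own double commutant, e.g.\ that $\chVv$ is a generator in $\Wlatt{\q}{N}$-mod -- which is very close to what is being proven and cannot be taken for granted for a non-semisimple algebra: the upper-triangular $2\times 2$ matrices acting on $\oC^2$ have commutant the scalars and double commutant all of $\Endo_{\oC}(\oC^2)$. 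So even with the localization step granted, the Morita-theoretic frame needs an extra input (faithfulness plus a generator property of $\chVv$ over $\Wlatt{\q}{N}$, or a direct dimension comparison with $\dim\Endo_{\UresSL2}(\chVv)$ read off from \eqref{decomp-LQG}) before the conjectured isomorphism follows.
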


We next give a short discussion of our results on representation theory of the lattice W-algebra $\Wlatq{\q}{N}$ and relations with  results  on the representation theory of  the chiral  triplet W-algebras in the limit, for $c=-2$ and $c=0$ cases, in particular.

\newcommand{\repsl}[1]{\oC^{#1}}
\subsection{Representation theory of $\Wlatq{\q}{N}$}
By the definition~\bref{Wlat-def}, the algebra $\Wlatq{\q}{N}$ is the full
centralizer of $\UresSL2$,
{\it i.e.}, it is defined as $\Wlatq{\q}{N}=\Endo_{\UresSL2}(\chVv)$ the algebra of all  operators commuting with  the small quantum group $\UresSL2$.
Its representation theory can be thus studied using the decomposition of the XXZ spin-chain over  $\UresSL2$. Such a decomposition can be easily obtained by restriction of
 the $\LQG$-module $\chVv$
in~\eqref{decomp-LQG} on the subalgebra $\UresSL2$. The idea is then to describe the multiplicity spaces in front of $\UresSL2$ direct summands. These multiplicity spaces are then irreducible modules over $\Wlatq{\q}{N}$. The next step is to describe all possible homomorphisms between the $\UresSL2$ direct summands. These homomorphisms represent the action of  $\Wlatq{\q}{N}$ in its reducible and indecomposable modules. We only announce our results about $\Wlatq{\q}{N}$-modules in order to make the paper not too long. All the necessary details will be given in a forthcoming publication~\cite{next}.

\medskip

We begin with the description of projective covers for simple $\Wlatq{\q}{N}$-modules and describe then their content with respect to the TL subalgebra.
Note that $\Wlatq{\q}{N}$ is represented faithfully in the spin
chain, by its definition as the subalgebra $\Endo_{\UresSL2}\chVv$ in the algebra of all
operators acting on~$\chVv$. Therefore, the projective cover of an irreducible $\Wlatq{\q}{N}$-module can be
identified with the indecomposable $\Wlatq{\q}{N}$-submodule in $\chVv$ of maximum dimension and with the property being able to cover the irreducible module.

We have found that the algebra $\Wlatq{\q}{N}$ has the following projective covers.
There are modules $\modwX^\pm_p$ which are irreducible and projective simultaneously and we denote them also by $\modwP^\pm_p$.
There are $p - 1$ cells (taking into account odd and even values of $N$) where reducible but indecomposable modules exist.
Each cell contains projective modules $\modwP^+_s$, $\modwP^-_{p-s}$
and $\modwP^0_s$, with $1\leq s\leq p-1$, and their
subquotient structure is given in Fig.~\ref{fig:W-proj} (from the left to
the right).
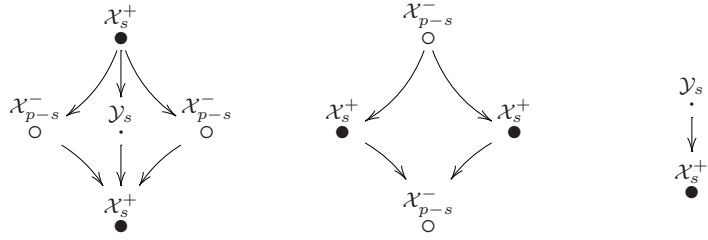
\begin{figure}
\begin{equation*}
   \xymatrix@R=14pt@C=12pt@W=3pt@M=2pt{
    &&\stackrel{\modwX^+_{s}}{\bullet}
    \ar@/^/[dl]
    \ar[d]^{}
    \ar@/_/[dr]
    &\\
    &\stackrel{\modwX^-_{p - s}}{\circ}\ar@/^/[dr]
    &\stackrel{\modwY_{s}}{\cdot}\ar[d]^{}
    &\stackrel{\modwX^-_{p - s}}{\circ}\ar@/_/[dl]
    \\
    &&\stackrel{\modwX^+_{s}}{\bullet}&
  }
\quad
\xymatrix@R=14pt@C=12pt@W=3pt@M=2pt{
    &&\stackrel{\modwX^-_{p-s}}{\circ}
    \ar@/^/[dl]
    \ar@/_/[dr]
    &\\
    &\stackrel{\modwX^+_{s}}{\bullet}\ar@/^/[dr]
    &
    &\stackrel{\modwX^+_{s}}{\bullet}\ar@/_/[dl]
    \\
    &&\stackrel{\modwX^-_{p-s}}{\circ}&
  }
\quad
\xymatrix@R=14pt@C=12pt@W=3pt@M=2pt{
    &&&\\
    &
    &\stackrel{\modwY_{s}}{\cdot}\ar[d]^{}
    &
    \\
    &&\stackrel{\modwX^+_{s}}{\bullet}&
  }
\end{equation*}
\caption{Projective covers of irreducible $\Wlatq{\q}{N}$-modules denoted by $\modwX^+_s$, $\modwX^-_{p-s}$, and  $\modwY_s$. Same diagrams also describe subquotient structure of indecomposable modules over the chiral triplet W-algebra from~\cite{[FGST3]}.}
\label{fig:W-proj}
\end{figure}
These  modules are projective covers of irreducible modules denoted by $\modwX^+_s$, $\modwX^-_{p-s}$, and  $\modwY_s$, respectively.
 The
 decomposition of the irreducible $\Wlatq{\q}{N}$-modules
onto  modules over the two commuting algebras, $\TLq{N}$ and $U s\ell(2)$, can be written as
\begin{align}
 \modwY_s &= \IrrTL{\frac{s-1}{2}}\tensor\repsl{1},&\qquad 1\leq s\leq p-1,\label{modwY-decomp}\\
 \modwX^+_s&=\bigoplus_{r\geq1}\IrrTL{\frac{2rp-s-1}{2}}\tensor\repsl{2r-1},&\qquad
 1\leq s\leq p,\label{modwXp-decomp}\\
 \modwX^-_{p-s}&=\bigoplus_{r\geq1}\IrrTL{\frac{2rp+s-1}{2}}\tensor\repsl{2r},&\qquad
 0\leq s\leq p-1,\label{modwXm-decomp}
\end{align}
where the direct sums are finite,  $\IrrTL{j}$ are irreducible $\TLq{N}$-modules defined after~\eqref{dimIrr} and  as usually we imply that $\IrrTL{j}\equiv0$ whenever $j>N/2$ and  $s+N=1\modd 2$; the tensorands   $\repsl{r}$ denote the
$r$-dimensional irreducible modules over $s\ell(2)$.
 We note that the modules $\modwX^-_{p-s}$ are non-zero only for
$N\geq 2p-1$. For $N<2p-1$, we have modules $\modwY_s$ and
$\modwX^+_{s}$ with trivial $U s\ell(2)$ content and they are
also  irreducible modules over $\TLq{N}$, in accordance
with~\eqref{End-LQG-Ures-smallN}.

The dimensions of irreducible $\Wlatq{\q}{N}$-modules can be expressed
in terms of the multiplicities
 $\dd^0_j$ introduced in~\eqref{dimIrr}.
The dimension of $\modwY_s$ is equal to
$\dd^0_{\frac{s-1}{2}}$ and the module is the singlet
with respect to the $U s\ell(2)$ action.
The dimensions of the irreducible $\Wlatq{\q}{N}$-modules
$\modwX^{\pm}_s$ are
\begin{equation*}
\dim(\modwX^{+}_s) =
\sum_{r\geq1}(2r-1)\dd^0_{\frac{2rp-s-1}{2}},\qquad
\dim(\modwX^{-}_s)= \sum_{r\geq1}2r\,\dd^0_{\frac{(2r+1)p-s-1}{2}}.
\end{equation*}

Finally, we give the $\TLq{N}\tensor U s\ell(2)$
content of the projective $\Wlatq{\q}{N}$-modules
\begin{equation}\label{Wlat-proj-Tl-sl}
\begin{split}
\modwP^+_s|_{\rule{0pt}{7.5pt}%
\TLq{N}\tensor U s\ell(2)} &=
\bigoplus_{n\geq1}\PrTL{\frac{2np-s-1}{2}}\tensor\repsl{2n-1},\qquad
1\leq s\leq p,\\
\modwP^-_{p-s}|_{\rule{0pt}{7.5pt}%
\TLq{N}\tensor U s\ell(2)} &=
\bigoplus_{n\geq1}\PrTL{\frac{2np+s-1}{2}}\tensor\repsl{2n} \qquad
0\leq s\leq p-1,
\end{split}
\qquad s+N=1\modd2,
\end{equation}
where the direct sums are finite of course, similarly to those for the simple modules above, and the subquotient
structure of the projective $\TLq{N}$-modules $\PrTL{j}$ can be found, {\it e.g.}, in~\cite{GV}
and we use the identity
$\PrTL{\frac{rp-1}{2}}=\IrrTL{\frac{rp-1}{2}}$ for modules over $\TLq{N}$.

 We note also that the modules $\modwP^-_{p-s}$ are zero unless
 $N\geq 2p-1$. For $N<2p-1$, we have projective modules $\modwP^0_s$ and
$\modwP^+_{s}$ with trivial $U s\ell(2)$ content -- the two
 subquotients $\modwX^-_{p-s}$ in the first diagram on the left
 in Fig.~\ref{fig:W-proj} are actually absent (because of our
 notation $\IrrTL{j>N/2}=0$ for $\TLq{N}$-modules), and the module $\modwP^+_s$ has
 only three irreducible subquotients. These projectives are
the projective covers over $\TLq{N}$ and the algebra $\Wlatq{\q}{N}$
 has thus the same representation theory whenever $N<2p-1$, in accordance
with~\eqref{End-LQG-Ures-smallN}.

Finally, the  decomposition of the full spin-chain over $\Wlatq{\q}{N}$ can be written as
\begin{equation}\label{decomp-Wlat}
\Hilb_{N}
|_{\rule{0pt}{7.5pt}%
\Wlatq{\q}{N}}
\cong \!\!\!\!\bigoplus_{\substack{s=N\modd2+1,\\s+N=1\modd
    2}}^{p-1}\!\!\!\!\!\!\bigl(
s\modwP^+_s \,\oplus\,
(p-s)\modwP^-_{p-s}\bigr) \; \oplus
\delta_{(p+N)\modd2,1}\,p\,\modwP^+_p\oplus \delta_{N\modd2,1}\,p\,\modwP^-_p,
\end{equation}
where $\delta_{a,b}$ is the usual Kronecker symbol.

\subsection{Two scaling limits: $(p-1,p)$ and $(1,p)$ theories}

In the free fermions case ($p=2$) we already took the scaling limit of our spin-chains, and thus of the lattice W-algebra projective modules. It is easy to see that in this case  the modules $\modwY_s$ are absent and we have only two reducible and indecomposable projectives $\modwP^{\pm}_{1}$ of the ``diamond'' structure (four instead of five subquotients for $\modwP^{+}_{1}$) on even number of sites. For odd number $N$, the spin-chain is semisimple and both the projectives $\modwP^{\pm}_{2}=\modwX^{\pm}_{2}$ are irreducible.   This subquotient structure persists in the limit and give exactly the modules over the triplet W-algebra~\cite{GK}, with the same character of course. This result is not very surprising as long as we consider free theories.

The case $p=2$ enjoys a special symmetry:  the models defined by the two hamiltonians $H(\q)$ and  $-H(\q)$ are equivalent, and exhibit  the same scaling limit. This is not the case for larger values of $p$. While the scaling limit of $H(\q)$ corresponds to the $(p-1,p)$ theory, the hamiltonian $-H(\q)$ corresponds to the $(1,p)$ theory. Physically, this is because two Hamiltonians of opposite sign have in general very different structure of ground state and excited states.

We start by considering the  case of $H(\q)$: for instance, the choice $p=3$ or $\q=e^{\frac{i\pi}{3}}$, which gives rise  to a  $c=0$
logarithmic CFT~\cite{ReadSaleur07-2} in the scaling limit. The generating function of energy levels of the Hamiltonian $H(\q)$ from~\eqref{XXZ-sigma} is well known~\cite{PasquierSaleur} thanks to the Bethe ansatz, and the partition function (or energy levels generating function) for scaling states in the XXZ spin-chain can be written as
\begin{equation}\label{gen-func-lim}
\displaystyle \lim_{N\to\infty} \sum_{{\rm states} \ i} q^{\frac{N}{2\pi v_F} \left(E_i(N) - N e_{\infty} \right)} = q^{-c/24} \sum_{j\geq0}(2j+1)\dfrac{q^{h_{1,1+2j}}-q^{h_{1,-1-2j}}}{\prod_{n=1}^{\infty} \left( 1 - q^n\right)},
\end{equation}
where $v_F=\frac{\pi \sin \gamma}{\gamma}$ (with $2\cos{\gamma}=\q+\q^{-1}$) is the Fermi velocity,  with the central charge
\begin{equation}\label{eqCentralCharge}
\displaystyle c= c_{p-1,p} = 1 - \ffrac{6}{p (p-1)},
\end{equation}
and $E_i(N)$ is the eigenvalue of the $i^{th}$ (counted from the vacuum) eigenstate of $H=-\sum_i e_i$. Here, we have also subtracted from the Hamiltoninan $H$ the density $e_{\infty} = \lim_{N \rightarrow \infty} E_0(N)/N$, with $ E_0(N)$ the ground-state energy; we also use the standard notation for  the conformal weights
\begin{equation}
\displaystyle h_{r,s} = \ffrac{ \left(p r - (p-1)s \right)^2 - 1}{4 p (p-1)}.
\end{equation}
Note that the central charge here is the one from $(p-1,p)$ Minimal Models but the field content is actually different.

Let us denote by  $\Verma_{h}$ the Virasoro Verma module generated from the highest-weight state of weight $h$.
 The expression  at the multiplicity $(2j+1)$ on the right-hand side  of~\eqref{gen-func-lim} coincides
with the Virasoro character ${\rm Tr}\, q^{L_0 - c/24}$ of the so-called \textit{Kac module} with conformal weight $h_{1,1+2j}$ defined
as the quotient
$\Verma_{h_{1,1+2j}}/\Verma_{h_{1,-1-2j}}$ by the submodule corresponding to the singular vector in $\Verma_{h_{1,1+2j}}$ at level $2j+1$.
These Kac modules correspond  to the scaling limit of the standard $\TLq{N}$-modules characterized by $2j$ through lines in the links representation (see details in~\cite{ReadSaleur07-2,GV}).

 It is also well-established~\cite{ReadSaleur07-2} (see also~\cite{GV}) that the simple TL modules $\IrrTL{j}$ go over in the scaling limit to the simple Virasoro  modules with the highest weight $h_{1,2j+1}$, which we will simply denote by $\IrrTL{1,2j+1}$. Then, using the decomposition~\eqref{modwY-decomp} of the  simple $\modwY_s$ modules, we see that they correspond to the  Virasoro irreducible representations from the $(p-1,p)$ minimal model  Kac table. In the case $p=3$ or $c=0$, we have only one module $\modwY_1$ which has dimension $1$ for any $N$ and this module corresponds to the unique operator (identity) in the $(2,3)$ Kac table. For any $p\geq2$, these are also simple modules over the triplet chiral W-algebra first observed in~\cite{[FGST3]}.

Using now the decomposition~\eqref{modwXp-decomp} and~\eqref{modwXm-decomp} of the  simple $\modwX^{\pm}_s$ modules as TL-$Us\ell(2)$ bimoules and the identification of the simple TL modules $\IrrTL{j}$ with the simple Virasoro modules $\IrrTL{1,2j+1}$ in the scaling limit, we  determine the Virasoro  and  $U s\ell(2)$ algebras content for the scaling limit of $\modwX^{\pm}_s$ (we use the same notation for the limits of $\Wlatq{\q}{N}$-modules):
\begin{align}
 \modwX^+_s&=\bigoplus_{r\geq1}\IrrTL{1,2rp-s}\tensor\repsl{2r-1},&\qquad
 1\leq s\leq p,\label{modwXp-limit}\\
 \modwX^-_{p-s}&=\bigoplus_{r\geq1}\IrrTL{1,2rp+s}\tensor\repsl{2r},&\qquad
 0\leq s\leq p-1,\label{modwXm-limit}
\end{align}
where now the direct sums are infinite and are taken over all $s\ell(2)$-spins of a fixed parity; we  also recall that $\IrrTL{1,j}$ is
the simple Virasoro module of the weight $h_{1,j}$. The decompositions~\eqref{modwXp-limit} and~\eqref{modwXm-limit} allow us to
compute the characters ${\rm Tr}\, q^{L_0 - c/24}$ of the limits.
The characters  of  $\modwX^{\pm}_s$  are then expressed exactly as in~\cite{[FGST3]}. Having an identification of
the scaling limit of simples over $\Wlatq{\q}{N}$ with simples over the chiral W-algebra as the Virasoro-$U s\ell(2)$ bimodules,
we believe that the action of our lattice W-algebra on the simples indeed converges in $N\to\infty$ limit to
the action of the chiral W-algebra on the corresponding $\modwX^{\pm}_s$ spaces, as  was shown for $p=2$ case above.

We also note that the limits $\modwY_s$ and $\modwX^{\pm}_s$ we have constructed (in total, $3p-1$ modules)
do not give all the possible irreducible modules (in total, $\half(p-1)(p-2)+2p(p-1)$) for the chiral W-algebra
in the $(p-1,p)$ models constructed in~\cite{[FGST3]}. It is obvious to us that in order to obtain all the
irreducible modules in the  $(p-1,p)$ model we have to use a bigger quantum group discovered in~\cite{[FGST4]} and
the corresponding lattice model, which we leave for a future work.

We can go further and compare also indecomposable but reducible modules. The structure of projective modules $\modwP^{\pm}_{s}$ does not depend on the number of sites and should  persists in the scaling limit. Therefore, we get the same diagrams  in the scaling limit as those in Fig.~\ref{fig:W-proj}. It is interesting to note that indecomposable  modules over the chiral triplet W-algebra for $c=0$ ($p=3$) with the same subquotient structure were proposed in~\cite{GabRunW} (note also that for any $(p-1,p)$ theory similar modules with $5$, $4$, and $2$ subquotients involving the minimal model content were constructed in~\cite{[FGST3]}).

We now consider the Hamiltonian with the opposite sign, that is $-H(\q)$.
Its low-lying spectrum  is  the same
as for the Hamiltonian $H(-\q^{-1})$ which has the Heisenberg coupling with the
opposite sign. The spectra
are related by the similarity transformation
\begin{equation}
-H(\q) = e^{-\rmi\pi\sum_{j=1}^{N}j\sigmaz_j} H(-\q^{-1})e^{\rmi\pi\sum_{j=1}^{N}j\sigmaz_j}.
\end{equation}
This observation
tells us that the generating function of low-lying states for $-H(\q)$ converges to characters from the $(1,p)$ theories. Indeed note that $-\q^{-1} = \q^{p-1}$ and recall that the generating function of low-lying states for $H\bigl(e^{\frac{i\pi(p-p')}{p}}\bigr)$ converges to characters from the $(p',p)$ theory~\cite{PasquierSaleur}. Further, at this limit we observe that the structure of the projective covers $\modwP^{\pm}_{s}$ will be slightly different -- the subquotients $\modwY_s$ corresponding to the minimal model content disappear in the $(1,p)$ scaling limit. It is indeed easy to see for instance that when  $p=3$ when we have just one $\modwY_{s=1}$, and it is spanned by one state which is now {\sl highest}-energy state for the Hamiltonian $-H(\q)$ (or $H(-\q^{-1})$). This state therefore will go to ``infinity'' in the limit to $(1,3)$ or $c=-7$ theory. We expect that similarly all the subquotients   $\modwY_{s}$ have just one point  of condensation of energy levels and they thus disappear in the second, $(1,p)$, scaling limit.  So, all the diagrams of the reducible but indecomposable triplet $W$-algebra projectives will have only $4$ subquotients in the limit corresponding to the $(1,p)$ theories,
as it was expected. The characters, Virasoro-$U s\ell(2)$ bimodule structure on the second scaling limit of simples $\modwX^{\pm}_s$ and the subquotient structure of their projective covers in the limit are indeed the same ones (compare e.g. with~\cite{[FGST2]}) as for the $(1,p)$ triplet W-algebra. Contrary to $(p-1,p)$ models, here we recover all the irreducible modules for the chiral algebra.

\section{Conclusions}

We believe we have reached our main goal of understanding the lattice origin and analog of W-algebra symmetry in chiral (boundary) LCFTs. The next step  should be the analysis of what happens in the non chiral case, that is for bulk LCFTs. In this direction, the results in~\cite{GRS1,GRS2,GRS3} could serve as a good starting point. There is now a lot of evidence that the bulk boundary relationship is extremely complicated in the logarithmic case, and that naive ``periodicized'' versions of the XXZ spin chain or the underlying loop models will not have W-symmetry. What has to be done to restore this symmetry then remains an open question of crucial importance, and we think the study of lattice W-algebras in this context will be the source of further progress.

The case $c=-2$ has also shed light on the mathematical nature of the relationship between the Temperley-Lieb and the Virasoro algebras, or the lattice and continuum W-algebras. For a finite chain, the Fourier modes $H_n^r$ and $W_n^{\alpha,r}$ define a finite dimensional algebra whose structure constants are function of the size $N$. The scaling limit involves not only taking $N\to\infty$ but also normal ordering with respect to the ground state. In this scaling limit, the generators $H_n^r$ and $W_n^{\alpha,r}$ expand on elements from the enveloping algebra of the Virasoro and the triplet $W$-algebra. Infinitely many lattice ``approximations'' converge to the same limit in leading order -- {\it e.g.}, all the $H_n^r$ go to $(-1)^r L_n+L_{-n}$ -- but they differ at next to leading order. While our calculations were considerably simplified by the underlying presence of free fermions, we believe it should be possible to extend the analysis to $\q$ another root of unity. It is important to note that different Hamiltonians will lead, for the same lattice algebra, to different continuum limits, as exemplified above in the $(1,p)$ and $(p-1,p)$ cases.

To finish, we now would like to get back to the
algebra of  zero modes in the $\q=i$ case.
The operators $H^r_0$ and $W^{\alpha,r}_0$ have the following properties
\begin{equation}
 [H^r_0,H^s_0]=0,\qquad [H^r_0,W^{\alpha,s}_0]=0
\end{equation}
and
\begin{align}
 [W^{0,r}_0,W^{+,s}_0]&=-8W^{+,r+s+2}_0+8W^{+,r+s}_0,\\
[W^{0,r}_0,W^{-,s}_0]&=8W^{-,r+s+2}_0-8W^{-,r+s}_0,\\
[W^{+,r}_0,W^{-,s}_0]&=4W^{0,r+s+2}_0-4W^{0,r+s}_0,
\end{align}
for $r,s\in2\oN_0$.
For finite $N$ operators $H^r_0$ and $W^{\alpha,r}_0$ are linearly dependent
\begin{align}
 H^N_0&=\sum_{r=0}^{\frac{N}{2}-1}(-1)^{\frac{N}{2} - r-1}
S_{\frac{N}{2} - r}\bigl(\cos^2\pi\ffrac{0}{N},\cos^2\pi\ffrac{1}{N},\cos^2\pi\ffrac{2}{N},\dots,
        \cos^2\pi\ffrac{N/2-1}{N}\bigr)H^{2r}_0,\\
 W^{\alpha,N-2}_0&=\sum_{r=0}^{\frac{N}{2}-2}(-1)^{\frac{N}{2} - r}
S_{\frac{N}{2} - r-1}\bigl(\cos^2\pi\ffrac{1}{N},\cos^2\pi\ffrac{2}{N},\dots,\cos^2\pi\ffrac{N/2-1}{N}\bigr)W^{\alpha,2r}_0
\end{align}
where $S_r(x_1,x_2,\dots x_k)$ is the $r$th elementary symmetric polynomial in the variables $x_1,x_2,\dots x_k$.
We also recall that $H^0_0$ coincides with the Hamiltonian $H$, thus we have $4N-10$ currents that commute with
the Hamiltonian.

Interestingly, the algebra of $W^{\alpha,r}_0$ can be identified through a quotient algebra of the $s\ell(2)$ loop algebra $s\ell(2)[t,t^{-1}]$ in the following way
\begin{equation}
 W^{+,r}_0=4(t^2-1)t^r e,\qquad W^{-,r}_0=-4(t^2-1)t^r f,\qquad W^{0,r}_0=-8(t^2-1)t^r h,
\end{equation}
where the parameter $t$ satisfies the equation
\begin{equation}\label{zero-mod-rel}
P(t)\equiv(t^2-1) \sum_{r=0}^{\frac{N}{2}-1}(-1)^{r}
S_{r}\bigl(\cos^2\pi\ffrac{1}{N},\cos^2\pi\ffrac{2}{N},\dots,\cos^2\pi\ffrac{N/2-1}{N}\bigr)t^{N-2r-2}=0.
\end{equation}
Recalling the well known property of elementary symmetric polynomials
\begin{equation*}
\prod_{j=1}^n(t-t_j) = \sum_{r=0}^n (-1)^r S_r(t_1,\dots,t_n)t^{n-r},
\end{equation*}
  we obtain that the solutions of the equation~\eqref{zero-mod-rel} are $\{\pm\cos\ffrac{\pi j}{N}, \, 0\leq j\leq N/2-1\}$.
We have thus the representation of the  loop algebra $s\ell(2)\tensor\oC[t,t^{-1}]$ realized by the quotient by the ideal generated by the polynomial $P(t)$ from~\eqref{zero-mod-rel}, {\it i.e.}, we have relations $P(t)e=0$, $P(t)f=0$, {\it etc.}
It is obvious that this representation (or the quotient) of the  $s\ell(2)$ loop algebra is isomorphic to the following tensor-product representation
\begin{equation}\label{tensor-prod-loop}
V[t_1,\dots,t_N]\equiv\bigotimes_{j=0}^{N/2-1}\bigl(\oC^2\bigl[\cos(\pi j/N)\bigr]\otimes\oC^2\bigl[-\cos(\pi j/N)\bigr]\bigr),
\end{equation}
where by $\oC[t_j]$ we denote the so-called evaluation representation of the loop algebra -- any element $p(t) x$ acts by $p(t_j)x$, with $p(t)\in\oC[t,t^{-1}]$ and $x\in s\ell(2)$. The action on the full spin chain~\eqref{tensor-prod-loop} (defined on $N$ sites) is then given by the usual repeated comultiplication of a Lie algebra: $\Delta(x) = x\otimes \one + \one\otimes x$.

Therefore, the zero modes of the $W^{\alpha,r}$ generators, with  $\alpha\in\{\pm,0\}$, can be written as
\begin{equation}\label{W-zero-loop}
W^{\alpha,r}_0 = \sum_{j=1}^N (t_j^2-1)t^r_j\, x_{\alpha}, \qquad x_+=4 e, \; x_0=-8 h,\; x_-=-4f,
\end{equation}
where we also assume that all $t_j$'s are distinct and are from the set $\{\pm\cos\ffrac{\pi j}{N}, \, 0\leq j\leq N/2-1\}$.
Note that this expression is given in a different basis than the usual spin-basis of the XX chains.

We note further that the representation~\eqref{tensor-prod-loop} of the loop algebra is irreducible because all the evaluation parameters $t_j$ are distinct, see {\it e.g.}~\cite{Kac-book}. For the subalgebra generated by the zero modes $W^{\alpha,r}_0$, it is a reducible representation. Indeed the representation~\eqref{tensor-prod-loop} is isomorphic to the initial spin-chain representation where the zero modes are represented by~\eqref{hatWp}-\eqref{hatWm} and the former one is reducible even  for the full lattice W-algebra $\Wlatq{i}{N}$ -- due to presence of the fermionic zero modes and the decomposition onto bosonic and fermionic states.  The spin-chain representation of  $\Wlatq{i}{N}$ is a direct sum of two reducible but indecomposable representations, as discussed earlier in the   more general context of XXZ spin-chains. It is easy to see that the indecomposability here is due the action of non-zero modes $W^{\alpha,r}_n$, as they contain fermionic zero modes $\eta^{\pm}_0$, while the action of $W^{\alpha,r}_0$ is semisimple. The module  $V[t_1,\dots,t_N]$
over the algebra of zero modes is thus semisimple and it is decomposed as
\begin{equation*}
V[t_1,\dots,t_N]|_{\langle W^{\alpha,r}_0\rangle} = 4\bigl(\modwX^+\oplus\modwX^-\bigr),\qquad \text{for even}\; N,
\end{equation*}
where $\modwX^{\pm}$ are irreducible $\Wlatq{i}{N}$-modules and they consist of bosonic and fermionic states, respectively, generated from the vacuum state by negative fermionic modes $\eta^{\pm}_{n<0}$, and $\modwX^{\pm}$ has dimension $2^{N-3}$. Similarly for odd $N$, but we have the multiplicity $2$ instead of $4$.

The algebra of the zero modes $W^{\alpha,r}_0$ is reminiscent of  the Onsager algebra \cite{Ahn} that appears in Onsager's original solution of the Ising model, or the chiral Potts model. Like our algebra of zero modes, the Onsager algebra is also identified as a subalgebra of the $s\ell(2)$ loop algebra \cite{DRoan}, but a different subalgebra from the one we have found here. Now the use of such an identification in the context of the Ising or chiral Potts model is
 that it leads to the spectrum of the Hamiltonian, which in these cases is a particular element of the loop algebra. Things are a bit different for the XX spin chain we are considering. The boundary terms prevent the usual analysis using the Onsager algebra. Meanwhile, we have another ``version'' of this algebra, where the Hamiltonians $H^r_0$ are somehow ``disconnected'', since they commute with all the zero modes. Of course, we  can decide to compute the spectrum of the conserved quantities  $W^{0,r}_0$ nonetheless. Indeed using the identification~\eqref{W-zero-loop} we obtain
\begin{equation}
\mathrm{Spec}(W^{0,r}_0) = \Bigl\{\sum_{j=1}^N m_j (t_j^2-1)t^r_j;\; m_j=\pm4,\, t_j^2=\cos\ffrac{\pi j}{N}, \, 0\leq j\leq \ffrac{N}{2}-1\Bigr\}.
\end{equation}
where $N$ is the number of sites. Of course, this spectrum (the set of eigenvalues of $W^{0,r}_0$) coincides with the eigenvalues extracted  from the fermionic expression~\eqref{hatWz} at $n=0$.

It is natural to wonder at this stage whether higher spin representations, or different choices of the parameters $t_j$, are related with interesting new models exhibiting some sort of W-algebra symmetry. Another question is whether the lattice W-algebras for other roots of unity have anything to do with the $s\ell(2)$ loop algebra.

A last remark of interest concerns the fact that the zero modes $H_0^r$ and $W_0^{\alpha,s}$ commute, which implies that our XX spin chain has more conserved quantities than the $H_0^r$, which are the standard quantities obtainable from derivatives of the transfer matrix with respect to the spectral parameter. Of course, since we are dealing with free fermions, the existence of other conserved quantities apart from the $H_0^r$ is obvious. We have not been able however to extend the commutation of the zero modes to other values of $\q$. It is tempting to speculate that there are other possible lattice models for which such additional conserved quantities could be built, and that they would be better candidates to reproduce the bulk W-symmetric LCFTs in the continuum limit.

\mbox{}\medskip

\noindent {\bf Acknowledgments}: We thank B.L. Feigin,  D. Ridout, V. Schomerus and A.M. Semikhatov  for discussions.
This work was supported in part by the ANR Project 2010 Blanc SIMI 4: DIME. The work of AMG was supported in
part by Marie-Curie IIF fellowship, the RFBR grant 10-01-00408 and the RFBR--CNRS grant 09-01-93105.
The work of IYuT is partially supported
by the grant RFBR 11-02-00685.

\appendix

\section{Quantum groups}\label{app:qunatm-gr-def}
Here, we collect  definitions of different quantum groups at roots
of unity and iterated
comultiplication formulas.  In setting the
notation and recalling the basic facts about $\LUresSL2$ needed below,
we largely follow~\cite{[BFGT]}. We introduce the standard notation $[n]=\ffrac{\q^n-\q^{-n}}{\q-\q^{-1}}$ for $\q$-numbers
 and set $[n]!=[1][2]\dots[n]$.

\subsection{The small or restricted quantum group}
The quantum group $\UresSL2$ is the ``small'' quantum $s\ell(2)$
with $\q = e^{i\pi/p}$ and the  generators $\E$, $\F$, and
$\K^{\pm1}$ satisfying the standard relations for the quantum $s\ell(2)$,
\begin{equation}\label{Uq-relations}
  \K\E\K^{-1}=\q^2\E,\quad
  \K\F\K^{-1}=\q^{-2}\F,\quad
  [\E,\F]=\ffrac{\K-\K^{-1}}{\q-\q^{-1}},
\end{equation}
with  additional relations,
\begin{equation}\label{root-rel}
  \E^{p}=\F^{p}=0,\quad \K^{2p}=\one,
\end{equation}
and the comultiplication is given by
\begin{equation}
  \Delta(\E)=\one\otimes \E+\E\otimes \K,\quad
  \Delta(\F)=\K^{-1}\otimes \F+\F\otimes\one,\quad
  \Delta(\K)=\K\otimes \K.\label{Uq-comult-relations}
\end{equation}
This associative algebra is finite-dimensional, $\mathrm{dim}\UresSL2 = 2p^3$.

\subsection{The full or Lusztig quantum group $\LQG$}
The Lusztig (or full) quantum group $\LQG$
with $\q = e^{i\pi/p}$, for any integer $p\geq2$,  is generated by $\E$, $\F$, and
$\K^{\pm1}$ satisfying the relations~\eqref{Uq-relations} and~\eqref{root-rel},
and
additionally by the divided powers $\f\sim \F^p/[p]!$ and $\e\sim \E^p/[p]!$ which satisfy the usual $s\ell(2)$-relations:
\begin{equation}
  [\h,\e]=\e,\qquad[\h,\f]=-\f,\qquad[\e,\f]=2\h.
\end{equation}
There are also  `mixed' relations
\begin{gather}
  [\h,\K]=0,\qquad[\E,\e]=0,\qquad[\K,\e]=0,\qquad[\F,\f]=0,\qquad[\K,\f]=0,\label{zero-rel}\\
  [\F,\e]= \ffrac{1}{[p-1]!}\K^p\ffrac{\q \K-\q^{-1} \K^{-1}}{\q-\q^{-1}}\E^{p-1},\qquad
  [\E,\f]=\ffrac{(-1)^{p+1}}{[p-1]!} \F^{p-1}\ffrac{\q \K-\q^{-1} \K^{-1}}{\q-\q^{-1}},
    \label{Ef-rel}\\
  [\h,\E]=\ffrac{1}{2}\E A,\quad[\h,\F]=- \ffrac{1}{2}A\F,\label{hE-hF-rel}
\end{gather}
where
\begin{equation}\label{A-element}
  A=\,\sum_{s=1}^{p-1}\ffrac{(u_s(\q^{-s-1})-u_s(\q^{s-1}))\K
        +\q^{s-1}u_s(\q^{s-1})-\q^{-s-1}u_s(\q^{-s-1})}{(\q^{s-1}
         -\q^{-s-1})u_s(\q^{-s-1})u_s(\q^{s-1})}\,
        u_s(\K)\idem_s
\end{equation}
with the polynomials $u_s(\K)=\prod_{n=1,\;n\neq s}^{p-1}(\K-\q^{s-1-2n})$, and
$\idem_s$ are central primitive idempotents defined just below
following~\cite{[FGST]}.
The relations~\eqref{Uq-relations}-\eqref{A-element} are the defining
relations of the quantum group $\LQG$.

\subsection{Central idempotents}\label{app:idem}
We recall the primitive central idempotents in $\UresSL2$~\cite{[FGST]}:
\begin{multline*}
  \idem_s=\ffrac{1}{\psi_s(\beta_s)}\bigl(
  \psi_s(\casim)-\frac{\psi_s^\prime(\beta_s)}{\psi_s(\beta_s)}(\casim
   -\beta_s)\psi_s(\casim)\bigr),\quad
  1\leq s\leq p-1,\\
  \idem_0=\ffrac{1}{\psi_0(\beta_0)}\psi_0(\casim),\qquad
  \idem_p=\ffrac{1}{\psi_p(\beta_p)}\psi_p(\casim),
\end{multline*}
with the polynomials
 \begin{multline*}
   \psi_s(x)=(x-\beta_0)\,(x-\beta_p)
   \smash{\prod_{\substack{j=1\\
         j\neq s}}^{p-1}}(x-\beta_j)^2,\quad 1\leq s\leq p-1,\\
      \psi_0(x)=(x-\beta_p)\prod_{j=1}^{p-1}(x-\beta_j)^2, \quad
      \psi_p(x)=(x-\beta_0)\prod_{j=1}^{p-1}(x-\beta_j)^2,
\end{multline*}
where $\beta_j=\q^j+\q^{-j}$,
and the Casimir element is
\begin{equation*}
  \casim=(\q-\q^{-1})^2 \E\F+\q^{-1}\K+\q \K^{-1}.
\end{equation*}

\subsection{Comultiplication in $\LQG$}
The quantum group $\LQG$ has the Hopf-algebra structure
with the comultiplication
\begin{gather}
  \Delta(\E)=\one\otimes \E + \E\otimes \K,\quad
  \Delta(\F)=\K^{-1}\otimes \F + \F\otimes\one,\quad
  \Delta(\K)=\K\otimes \K,\\
  \Delta(\e)=\e\tensor\one +\K^p\tensor \e
  +\ffrac{1}{[p-1]!} \sum_{r=1}^{p-1}\ffrac{\q^{r(p-r)}}{[r]}\K^p\E^{p-r}\tensor \E^r
  \K^{-r},\label{e-comult}\\
 \Delta(\f)= \f\tensor \one + \K^p\tensor \f+\ffrac{(-1)^p}{[p-1]!}
  \sum_{s=1}^{p-1}\ffrac{\q^{-s(p-s)}}{[s]}\K^{p+s}\F^s\tensor \F^{p-s}.\label{f-comult}
\end{gather}
The antipode and counity are not used in the paper but a reader can
find them, for exmaple, in~\cite{[BFGT]}.

The $N$-folded comultiplication for the capital
generators $\E$ and $\F$ is
\begin{equation}\label{N-fold-comult-cap}
\begin{split}
\Delta^{N}(\E) &=
\sum_{j=1}^{N+1}\underbrace{\one\tensor\dots\tensor\one}_{j-1}\tensor
\E\tensor \K \tensor \dots \tensor \K,\\
\Delta^{N}(\F) &=
\sum_{j=1}^{N+1}\underbrace{\K^{-1}\tensor\dots\tensor \K^{-1}}_{j-1}\tensor
\F\tensor \one \tensor \dots \tensor \one,
\end{split}
\end{equation}
and for the divided powers $\e$ and $\f$ is
\begin{align}
 \Delta^N(\e)&=\sum_{n=1}^{N+1}  \bigotimes_{j=1}^{n-1} \K^p\tensor \e\tensor\bigotimes_{j=n+1}^{N+1} \one&\label{N-fold-comult-e}\\
& \mbox{}\qquad\qquad +\sum_{\stackrel{0\leq j_1,j_2,\dots,j_{N+1}\leq p-1}{j_1+j_2+\dots+j_{N+1}=p}}\frac{\q^{\sum_{r<s}j_rj_s}}{[j_1]![j_2]!\dots[j_{N+1}]!}
  \bigotimes_{m=1}^{N+1}
  \E^{j_m}\K^{p+\sum_{s=1}^{m-1}j_s},&\notag\\
 \Delta^N(\f)&=\sum_{n=1}^{N+1}  \bigotimes_{j=1}^{n-1} \K^p\tensor \f\tensor\bigotimes_{j=n+1}^{N+1} \one&\label{N-fold-comult-f}\\
 & \mbox{}\qquad\quad+(-1)^{p-1}\!\!\sum_{\stackrel{0\leq j_1,j_2,\dots,j_{N+1}\leq p-1}{j_1+j_2+\dots+j_{N+1}=p}}\frac{\q^{\sum_{r<s}j_rj_s}}{[j_1]![j_2]!\dots[j_{N+1}]!}
  \bigotimes_{m=1}^{N+1}
  \F^{j_m}\K^{p+\sum_{s=1}^{m}j_s}.&\notag
\end{align}

\subsection{Standard spin-chain notations}
We note the Hopf-algebra homomorphism
\begin{equation*}
\E\mapsto S^+ k, \qquad \F\mapsto k^{-1} S^-,\qquad \text{with}\quad k=\sqrt{\K},
\end{equation*}
where we introduced the more usual (in the spin-chain
literature~\cite{PasquierSaleur,DFMC}\footnote{We note that our
  convention for the spin-chain  representation differs from the one
  in~\cite{PasquierSaleur} by the change $\q\to\q^{-1}$.}) quantum group generators
 \begin{equation}
 S^{\pm}=\sum_{1\leq j\leq N} \q^{-\sigma_1^z/2}\otimes\ldots
 \q^{-\sigma_{j-1}^z/2}\otimes\sigma_{j}^{\pm}\otimes\q^{\sigma_{j+1}^z/2}\otimes
 \ldots \otimes \q^{\sigma_{N}^z/2}
 \end{equation}
together with $k=\q^{S^z}$ and the relations
\begin{gather*}
    k S^{\pm}k^{-1}=\q^{\pm 1}S^{\pm},\qquad
   \left[S^{+},S^{-}\right]=\ffrac{k^2-k^{-2}}{\q-\q^{-1}},\\
    \Delta(S^{\pm})=k^{-1}\otimes S^{\pm}+S^{\pm}\otimes k.
\end{gather*}

\subsubsection{The case of XX spin-chains}
For $p=2$ or ``XX spin-chain'' case, the  $(N-1)$-folded coproduct of the renormalized powers $\e$ and $\f$ reads
\begin{multline}\label{N-fold-comult-ren-e}
\Delta^{N-1}\e =
\sum_{j=1}^{N}\underbrace{\one\tensor\dots\tensor\one}_{j-1}\tensor
\e\tensor \K^2 \tensor \dots \tensor \K^2 +\\
+ \q\sum_{t=0}^{N-2}\sum_{j=1}^{N-1-t} \underbrace{\one\tensor\dots\tensor\one}_{j-1}\tensor
\E\tensor \underbrace{\K \tensor \dots \tensor \K}_{t}\tensor \E\K\tensor \K^2 \tensor \dots \tensor \K^2
\end{multline}
and
\begin{multline}\label{N-fold-comult-ren-f}
\Delta^{N-1}\f =
\sum_{j=1}^{N}\underbrace{\K^2\tensor\dots\tensor \K^2}_{j-1}\tensor
\f\tensor \one \tensor \dots \tensor \one +\\
+ \q^{-1}\sum_{t=0}^{N-2}\sum_{j=1}^{N-1-t}
\underbrace{\K^2\tensor\dots\tensor \K^2}_{j=1}\tensor
\K^{-1}\F\tensor \underbrace{\K^{-1} \tensor \dots \tensor \K^{-1}}_{t}\tensor \F\tensor \one \tensor \dots \tensor \one.
\end{multline}
These renormalized powers can also be expressed in terms of the more usual spin-chain operators, and one finds at $p=2$
\begin{equation*}
\Delta^{N-1}(\e)=\q S^{+(2)}k^{2},\qquad \Delta^{N-1}(\f)=\q^{-1}
k^{-2}S^{-(2)},
\end{equation*}
where $\q=i$ and
 \begin{equation}
 S^{\pm (2)}=\sum_{1\leq j<k\leq N-1} \q^{-\sigma_1^z}\otimes\ldots
 \otimes \q^{-\sigma_{j-1}^z}\otimes\sigma_{j}^{\pm}
 \otimes1\otimes\ldots\otimes 1\otimes \sigma_k^{\pm} \otimes
 \q^{\sigma_{k+1}^z}\otimes \ldots \otimes \q^{\sigma_{N}^z}.
 \end{equation}

\section{The proof about the centralizer of $\UresSL2$ for $\q=i$}\label{app:Ures-centr}
In this section we prove the following theorem.
\begin{Thm}\label{thm:centralizer}
The centralizer of the $\UresSL2$ representation at $\q=i$ -- the open XX spin-chain -- is generated by
\begin{equation*}
e_k=c_k c_{k+1}^\dagger + c_{k+1} c_k^\dagger
  +i\bigl(c_k^\dagger c_k-c^\dagger_{k+1}c_{k+1}\bigr)
\end{equation*}
and
\begin{gather*}
 \W^+_j=(-1)^{j}\bigl(c^\dagger_jc^\dagger_{j+1}+i c^\dagger_jc^\dagger_{j+2}
-c^\dagger_{j+1}c^\dagger_{j+2}\bigr),\\
\W^0_j=-\ffrac{1}{2}(1 + i c^\dagger_j c_{j+1} - c^\dagger_j c_{j+2} + i c^\dagger_{j+1} c_j -
   2 c^\dagger_{j+1} c_{j+1} - i c^\dagger_{j+1} c_{j+2} - c^\dagger_{j+2} c_j -
  i c^\dagger_{j+2}c_{j+1}),\\
\W^-_j=(-1)^{j+1}\bigl(c_jc_{j+1}+i c_jc_{j+2}
-c_{j+1}c_{j+2}\bigr),
\end{gather*}
where  $1\leq k\leq N-1$ and $1\leq j\leq N-2$.
\end{Thm}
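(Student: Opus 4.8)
The plan is to combine the two relevant double-centralizer results — the Temperley--Lieb algebra as the centralizer of $\LQGi$ and the walled Brauer algebra as the centralizer of $\gl(1|1)$ — with an analysis in the momentum-space fermions $\eta^{\pm}$, reducing everything to a standard irreducibility statement. That $e_k$ and $\W^{\alpha}_j$ lie in $\Endr(\chVv)$ is immediate: the $e_k$ commute with all of $\LQGi\supseteq\UresSL2$, and for the $\W^{\alpha}_j$ one uses the walled-Brauer presentation $\W^0_j=-\tfrac12\W_j+\tfrac{i}{2}(e_j-e_{j+1})$, $\W^+_j=[\e,\W_j]$, $\W^-_j=[\f,\W_j]$: since $\W_j$ centralizes the $\gl(1|1)$-image (hence $\repQGgl(\UresSL2)$) by~\cite{Serg}, and since in $\LQGi$ at $p=2$ one has $[\E,\e]=[\K,\e]=0$ and $[\F,\e]$ is a polynomial in $\E,\K$, the Jacobi identity gives $[\E,\W^{\pm}_j]=[\F,\W^{\pm}_j]=[\K,\W^{\pm}_j]=0$. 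Hence the subalgebra $\mathcal{A}:=\langle e_k,\W^{\alpha}_j\rangle$ is contained in $\Endr(\chVv)$, and everything reduces to the reverse inclusion.

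Next I would pin down $\Endr(\chVv)$ explicitly in the Fourier fermions. With $\E=\eta^+_0\K$, $\F=\eta^-_0$ and $\K=i^{N}(-1)^{F}$ (the fermion parity up to a scalar) for even $N$, and $\E=\theta^{\dagger}_0\K$, $\F=\theta_0$ for odd $N$, an operator centralizes $\UresSL2$ iff it is even and super-commutes with the two zero-sector fermions. Writing $\Endo_{\oC}(\chVv)$ as a super-tensor product of Clifford algebras along conjugate pairs — for even $N$ the pairs $(\eta^+_0,\gamma^-)$, $(\eta^-_0,\gamma^+)$ and $(\eta^+_n,\eta^-_{-n})_{n\ne0}$; for odd $N$ the pair $(\theta^{\dagger}_0,\theta_0)$ and the $(\eta^+_n,\eta^-_{-n})$ — one checks that super-commuting with the two zero-sector generators forces membership in the even part $\mathcal{E}^{(0)}$ of the Clifford subalgebra $\mathcal{E}$ generated by the $2N-2$ fermions complementary to the zero sector, and conversely that every element of $\mathcal{E}^{(0)}$ qualifies. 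Thus $\Endr(\chVv)=\mathcal{E}^{(0)}$, of dimension $2^{2N-3}$, as expected.

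It remains to show $\mathcal{E}^{(0)}\subseteq\mathcal{A}$, and here I would argue at the level of quadratic fermion monomials, which generate $\mathcal{E}^{(0)}$ as an associative algebra. From $\W^0_j$ we recover $\W_j=-2\W^0_j+i(e_j-e_{j+1})\in\mathcal{A}$, so $\langle e_k,\W_j\rangle\subseteq\mathcal{A}$; by~\cite{Serg} this subalgebra is the centralizer $\mathrm{End}_{\gl(1|1)}(\chVv)$ of the $\gl(1|1)$-action generated by $\E,\F,\h$, which is exactly the $\h$-weight-zero part of $\mathcal{E}^{(0)}$ (a weight-zero operator commutes with $\K=i^{N}(-1)^{F}$ automatically). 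In particular $\mathcal{A}$ contains all weight-zero quadratics $\eta^+_i\eta^-_j$, which under the commutator span $\gl(N-1)$. The generator $\W^+_1$ is a nonzero weight-$(+1)$ quadratic, i.e.\ an element of $\Lambda^2(\oC^{N-1})=\mathrm{span}\{\eta^+_i\eta^+_j\}$, which is irreducible over $\gl(N-1)$; hence its orbit under iterated commutators with the weight-zero quadratics already in $\mathcal{A}$ spans all of $\mathrm{span}\{\eta^+_i\eta^+_j\}$, and symmetrically $\mathrm{span}\{\eta^-_i\eta^-_j\}\subseteq\mathcal{A}$ via $\W^-_1$. Therefore $\mathcal{A}$ contains every quadratic monomial together with the identity; these span a Lie algebra that acts irreducibly on each of the two non-isomorphic half-spinor modules over $\mathcal{E}$, so by Jacobson density the associative algebra they generate is all of $\mathcal{E}^{(0)}$. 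Hence $\mathcal{A}=\mathcal{E}^{(0)}=\Endr(\chVv)$, which is exactly the statement of the theorem.

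The main obstacle is the second step — identifying $\Endr(\chVv)$ with $\mathcal{E}^{(0)}$ via the conjugate-pair structure of the Fourier fermions — since the zero sector behaves differently for even $N$ (where the root vectors $\gamma^{\pm}$ enter and $\{\eta^+_0,\eta^-_0\}=0$) and for odd $N$ (where $(\theta^{\dagger}_0,\theta_0)$ is a genuine Fock pair), and since the zero-sum monomials $\eta^+_j\eta^-_{-j}$ are not produced directly as commutators with the Hamiltonian but only through products of other monomials already in $\mathcal{A}$. A more computational alternative, bypassing the representation-theoretic shortcut, is to build the full families of higher modes $H^r_n$ and $W^{\alpha,r}_n$ inside $\mathcal{A}$ via the recursions~\eqref{recurrentH}, \eqref{recurrentW} and to extract individual quadratic monomials from their explicit $\eta^{\pm}$-expansions~\eqref{hatH}--\eqref{hatWm} by Vandermonde inversions in the index $r$ — first peeling off the ``ultraviolet'' $\delta_{j_1-j_2,\pm(N\pm n)}$ contributions by a descending induction on $n$, then combining the $H$-family and the $W^0$-family to separate transpose pairs $\eta^+_{j_1}\eta^-_{j_2}$ from $\eta^+_{j_2}\eta^-_{j_1}$.
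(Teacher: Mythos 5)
Your first paragraph (the inclusion $\langle e_k,\W^{\alpha}_j\rangle\subseteq\Endr(\chVv)$ via Sergeev's theorem, the $s\ell(2)$ commutators and the mixed relations~\eqref{Ef-rel}) is sound and is essentially how the paper argues that direction. The hard direction, however, contains genuine gaps, and they all stem from the same feature that the paper's proof (Lemmas~\bref{lem:centr-gl} and~\bref{lem:centr-Uq}) is careful about: the pairing among the momentum-space fermions is \emph{degenerate} at the zero modes, since $\{\eta^+_0,\eta^-_0\}=0$ and the conjugates of $\eta^{\pm}_0$ are the root vectors $\gamma^{\mp}$, see~\eqref{eta-comm} and~\eqref{gamma-rel}. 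First, your identification $\Endr(\chVv)=\mathcal{E}^{(0)}$ is not correct under either reading of ``the $2N-2$ fermions complementary to the zero sector''. If the zero sector means the two conjugate pairs $(\eta^+_0,\gamma^-),(\eta^-_0,\gamma^+)$ you listed, the complement has $2N-4$ generators, the even part has dimension $2^{2N-5}$, and it misses genuine centralizer elements such as $\eta^+_n\eta^-_0$ and $\eta^+_0\eta^-_0$ (these do commute with $\K$ and $\eta^{\pm}_0$ precisely because the zero modes are isotropic). If instead only $\eta^{\pm}_0$ are removed, the complement contains $\gamma^{\pm}$ and then $\mathcal{E}^{(0)}\ni\gamma^+\gamma^-$, which does \emph{not} commute with $\eta^{\pm}_0$ ($[\gamma^+\gamma^-,\eta^+_0]=\gamma^+$). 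The correct statement, which is exactly the content of the paper's Lemma~\bref{lem:centr-Uq}, is that the centralizer is the even part of the subalgebra generated by \emph{all} $\eta^{\pm}_n$ with $-\frac{N}{2}+1\le n\le\frac{N}{2}-1$ (zero modes included), i.e.\ one excludes only $\gamma^{\pm}$; establishing this requires the direct check, done in the paper, that no combination involving $\gamma^{\pm}$ survives.

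Second, the same degeneracy breaks your key spanning step. Since $\{\eta^+_0,\eta^-_m\}=0$ for all $m$, every weight-zero bilinear $\eta^+_a\eta^-_b$ annihilates $\eta^+_0$ under the adjoint action, so these bilinears do \emph{not} realize the standard $\gl(N-1)$ action on $\mathrm{span}\{\eta^+_i\}$, and $\Lambda^2(\oC^{N-1})$ is \emph{not} irreducible under the operators you actually have: $\eta^+_0\wedge\mathrm{span}\{\eta^+_{n\neq0}\}$ is an invariant subspace. Your orbit argument therefore does not, as written, produce the monomials $\eta^+_0\eta^+_j$ (the conclusion can be rescued, e.g.\ by showing the extension is non-split or by exhibiting explicit commutators landing in that subspace, but this extra argument is missing). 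Likewise, the appeal to ``two non-isomorphic half-spinor modules over $\mathcal{E}$'' and Jacobson density presupposes a nondegenerate Clifford algebra; once $\gamma^{\pm}$ are excluded, $\mathcal{E}$ has a radical (it is a nondegenerate Clifford algebra tensored with the Grassmann algebra on $\eta^{\pm}_0$), is not semisimple, and the spin chain is not irreducible over it — though this last step is dispensable, since every even monomial is trivially a product of bilinears. By contrast, the paper avoids all of this by proving directly that the centralizer of $\gl(1|1)$ is generated by the $\eta^+_n\eta^-_m$ and the centralizer of $\UresSL2$ by the bilinears of all three types, and then observing that the latter is the $s\ell(2)$-adjoint image of the former, which is exactly how $\Wlatq{i}{N}$ is generated from the walled Brauer algebra.
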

Our proof is computational and consists of few lemmas. Recall first the linear combinations~\eqref{theta-ferm} and~\eqref{etapm} of the $c_j$ and $c_j^{\dagger}$ operators -- the Fourier transforms  $\eta^{\pm}_n$ of the lattice fermions. We recall also the definition~\eqref{etapm-zero} of the zero modes $\eta^{\pm}_0$ and conjugate to them $\gamma^{\mp}$ operators defined in~\eqref{gammap}-\eqref{gammam}. The $\eta^{\pm}_n$, with  $-\frac{N}{2}+1\leq n\leq \frac{N}{2}-1$, and $\gamma^{\pm}$ generate a Clifford algebra of dimension $2^{2N}$, see relations in~\eqref{eta-comm}. Note that the spin-chain is an irreducible representation of this Clifford algebra. We also recall that each pair of the modes $\eta^{\pm}_n$, and the pair $\gamma^{\pm}$, form an $s\ell(2)$ doublet, with respect to the  $s\ell(2)$ representation given in~\eqref{sl-def-1}-\eqref{sl-def-2}, see its action  in~\eqref{sl-act-1} and~\eqref{sl-act-2}. Therefore, the $\eta^{\pm}_n$ and $\gamma^{\pm}$ change an eigenvalue of the Cartan $S^z=2\h$ by~$\pm1$. Now, we can formulate and prove the following technical lemma.

\begin{lemma}\label{lem:centr-gl}
The centralizer of the $\gl(1|1)$, {\it i.e.}, the walled Brauer algebra representation in the XX spin-chain is generated by the fermionic bilinears $\eta^{+}_n\eta^-_m$, with $-\frac{N}{2}+1\leq n,m\leq \frac{N}{2}-1$ and for any $N\in\oN$.
\end{lemma}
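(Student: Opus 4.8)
The plan is to argue entirely inside the Clifford algebra $\mathcal{C}=\Endo_{\oC}(\chVv)$, which is generated by the $2N$ Fermi operators $\eta^+_n,\eta^-_n$ (for $n\neq0$) together with $\eta^\pm_0$ and $\gamma^\pm$, subject to the anticommutators of~\eqref{eta-comm}: every pair anticommutes except $\{\eta^+_n,\eta^-_{-n}\}=n$ and $\{\eta^\pm_0,\gamma^\mp\}=\pm1$. Since $\chVv$ is irreducible over $\mathcal{C}$, every operator on the chain is a noncommutative polynomial in these generators. Recall that the $\gl(1|1)$ action is generated by $\E=\eta^+_0\K$, $\F=\eta^-_0$ and the Cartan $\h=\half S^z$, with $\K=(-1)^{S^z}$. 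By~\eqref{sl-act-1}--\eqref{sl-act-2} and the Leibniz rule, $\mathrm{ad}_{S^z}$ multiplies a Fermi monomial by its weight (each superscript $+$ contributing $+1$ and each superscript $-$ contributing $-1$), so $[X,S^z]=0$ means precisely that $X$ is a sum of weight-zero monomials; such an $X$ also commutes with $\K$, whence $[X,\E]=[X,\eta^+_0]\K$. The first step is therefore to note
\[
\mathcal{Z}\;:=\;\Endo_{\gl(1|1)}(\chVv)\;=\;\{\,X\in\mathcal{C}\;:\;[X,S^z]=[X,\eta^+_0]=[X,\eta^-_0]=0\,\}.
\]

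Next I would dispose of the easy inclusion: each bilinear $\eta^+_n\eta^-_m$ with $-N/2+1\le n,m\le N/2-1$ is weight-zero, so it commutes with $S^z$ and $\K$, and since $\eta^+_0,\eta^-_0$ anticommute with every $\eta^\pm_k$ a one-line check gives $[\eta^+_n\eta^-_m,\eta^\pm_0]=0$; hence the unital subalgebra $\mathcal{B}$ generated by these bilinears lies in $\mathcal{Z}$. For the reverse inclusion, take $X\in\mathcal{Z}$ and write $X=A+\gamma^-B$ with $A,B$ in the Clifford subalgebra generated by all generators \emph{except} $\gamma^-$; since $X$ is weight-zero and $\gamma^-$ has weight $-1$, $A$ is weight-zero (even degree) and $B$ has weight $+1$ (odd degree). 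Using that $\eta^+_0$ anticommutes with every generator other than $\gamma^-$, that $(\eta^+_0)^2=0$, and that $\eta^+_0\gamma^-=1-\gamma^-\eta^+_0$, the commutator collapses to $[X,\eta^+_0]=B$, so $[X,\eta^+_0]=0$ forces $B=0$, i.e.\ $X$ contains no $\gamma^-$; the mirror computation with $\eta^-_0$ and $\gamma^+$ (where $\{\eta^-_0,\gamma^+\}=-1$) shows $X$ contains no $\gamma^+$ either. Thus $X$ lies in the Clifford subalgebra $\mathcal{C}'$ generated by $\{\eta^+_n,\eta^-_n\}$ with $-N/2+1\le n\le N/2-1$ and is weight-zero, and a short induction on degree finishes the job: the unit is $\tfrac1n(\eta^+_n\eta^-_{-n}+\eta^-_{-n}\eta^+_n)$, and a weight-zero monomial $\eta^+_{n_1}\!\cdots\eta^+_{n_k}\eta^-_{m_1}\!\cdots\eta^-_{m_k}$ is recovered (up to weight-zero terms of strictly smaller degree) by reordering the product $(\eta^+_{n_1}\eta^-_{m_1})\cdot(\eta^+_{n_2}\!\cdots\eta^-_{m_k})$, whose second factor lies in $\mathcal{B}$ by induction. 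Hence $\mathcal{Z}=\mathcal{B}$, which is the Lemma, because the walled Brauer representation in the XX chain equals $\Endo_{\gl(1|1)}(\chVv)$ by~\cite{Serg}. For odd $N$ I would run the same argument with the generators of Sec.~\bref{sec:odd-N-1}, where $\theta^\dagger_0=\E\K^{-1}$ and $\theta_0=\F$ take the roles of $\eta^+_0,\eta^-_0$ and form the remaining Clifford pair while no $\gamma^\pm$ occur.

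The hard part will be the Clifford bookkeeping behind the identity $[X,\eta^+_0]=B$: one has to track the degree and parity of $A$ and $B$ carefully while commuting $\eta^+_0$ past them --- especially in the subcase where $A$ or $B$ itself contains $\eta^+_0$, where $(\eta^+_0)^2=0$ must be invoked --- and verify that the splitting $\mathcal{C}=\mathcal{D}\oplus\gamma^-\mathcal{D}$ with $\mathcal{D}$ the subalgebra omitting $\gamma^-$ is compatible with the weight grading, so that ``$A$ of weight $0$, $B$ of weight $1$'' is legitimate. The inductive generation of $\mathcal{B}$ is routine but also needs a fixed ordering convention for the Fermi monomials to be stated cleanly.
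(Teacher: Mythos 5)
Your argument is correct and follows essentially the same route as the paper's proof: reduce the $\gl(1|1)$ centralizer to the commutant of $S^z$ and the zero modes $\eta^{\pm}_0$, use the weight grading of the Clifford algebra to restrict to weight-zero (hence even) elements, kill the $\gamma^{\pm}$-dependent pieces via the zero-mode commutators, and generate the remaining weight-zero part by the bilinears $\eta^+_n\eta^-_m$. Your decomposition $\mathcal{C}=\mathcal{D}\oplus\gamma^-\mathcal{D}$ and the degree induction merely make explicit what the paper dispatches as a ``direct calculation'' (note only that with your conventions $[X,\eta^+_0]=-B$ rather than $B$, which of course still forces $B=0$).
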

\begin{proof}
Define an index set $I$ as the set of integers $-\frac{N}{2}+1\leq n\leq \frac{N}{2}-1$.
We first note that the centralizer of the algebra generated by $S^z$ -- a commutative subalgebra in the $\gl(1|1)$ representation -- has the fermionic bilinears
\begin{equation}\label{ferm-bil-1}
\eta^{+}_n\eta^-_m, \qquad \text{with}\;  n,m\in I,
\end{equation}
 and
 \begin{equation}\label{ferm-bil-2}
 \gamma^+\gamma^-, \quad \gamma^+\eta^-_n, \quad \eta^+_n\gamma^-, \qquad \text{with}\; n\in I,
 \end{equation}
  as its generators. Indeed, since  the $\eta^{\pm}_n$ and $\gamma^{\pm}$ -- the generators of the Clifford algebra -- change an eigenvalue of  $S^z$ by~$\pm1$ it is clear that only monomials in  $\eta^{\pm}_n$ and $\gamma^{\pm}$ with equal number of `$+$' and `$-$' commute with $S^z$. All such monomials can obviously be generated by the bilinears in~\eqref{ferm-bil-1} and~\eqref{ferm-bil-2}, with the use of the (anti)commutation relations in the Clifford algebra.

  We then check by direct calculations that any linear combination of the bilinears in~\eqref{ferm-bil-2} does not commute with the fermionic part of $\gl(1|1)$ -- the zero modes $\eta^{\pm}_0$ -- while any linear combination of $\eta^{+}_n\eta^-_m$ does. It is also straightforward to see that any word from the Clifford algebra involving the bilinears from~\eqref{ferm-bil-2} and any linear combination of such words do not commute with the $\eta^{\pm}_0$. Finally, recall that the $\gl(1|1)$ representation in our spin-chain is generated by $S^z$ and $\eta^{\pm}_0$. We therefore conclude that the centralizer of the  $\gl(1|1)$ action is generated by the $\eta^{+}_n\eta^-_m$, with $n,m\in I$.
\end{proof}

Consider then a subalgebra $\repQGgl(\UresSL2)$ in the (image of) $\gl(1|1)$, see the definition of $\repQGgl$ in~\eqref{QG-fermf-1}. This subalgebra is generated by $\K=(-1)^{S^z}$ and  $\eta^{\pm}_0$. Our aim is to describe the centralizer of the $\UresSL2$ action. Obviously the centralizer of $\K$ is generated by all bilinears in the fermionic modes $\eta^{\pm}_n$ and $\gamma^{\pm}$. Following the same lines as in the proof of Lem.~\bref{lem:centr-gl} we  prove our second technical lemma.
\begin{lemma}\label{lem:centr-Uq}
The centralizer of the $\UresSL2$ for $\q=i$ is generated by the fermionic bilinears $\eta^{+}_n\eta^-_m$, $\eta^{+}_n\eta^+_m$, and $\eta^{-}_n\eta^-_m$, with $-\frac{N}{2}+1\leq n,m\leq \frac{N}{2}-1$ and for any $N\in\oN$.
\end{lemma}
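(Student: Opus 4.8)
The plan is to reduce everything to the structure of the underlying Clifford algebra and then follow the lines of the proof of Lemma~\bref{lem:centr-gl}. Since $\UresSL2$ is generated as an associative algebra by $\K$, $\E$, $\F$, and on the spin chain these act as $\K=(-1)^{S^z}$, $\E=\eta^+_0\K$ and $\F=\eta^-_0$, its centralizer is exactly the set of operators $X$ satisfying $[X,\K]=[X,\eta^+_0]=[X,\eta^-_0]=0$. Using that the spin chain is a faithful irreducible module over the Clifford algebra $\mathcal{C}$ generated by the modes $\eta^\pm_n$ and $\gamma^\pm$ recalled above, I would work inside $\mathcal{C}$ and impose the three conditions successively. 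The only anticommutation input needed is that, in view of~\eqref{eta-comm} and of the fact that the only non-vanishing pairings among these generators are $\{\eta^+_n,\eta^-_{-n}\}$ for $n\neq0$ and $\{\eta^\pm_0,\gamma^\mp\}=\pm1$, the generator $\eta^+_0$ anticommutes with every Clifford generator except $\gamma^-$, with which $\{\eta^+_0,\gamma^-\}=1$, and symmetrically $\eta^-_0$ anticommutes with every generator except $\gamma^+$, with which $\{\eta^-_0,\gamma^+\}=-1$; in particular $(\gamma^\pm)^2=(\eta^\pm_0)^2=0$.

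First I would dispose of the condition $[X,\K]=0$. Since each Clifford generator shifts the $S^z$-eigenvalue by $\pm1$, commuting with $\K=(-1)^{S^z}$ is equivalent to $X$ lying in the even part of $\mathcal{C}$, i.e.\ to $X$ being a polynomial in the fermionic bilinears in the $\eta^\pm_n$ and $\gamma^\pm$ --- the same observation as in the first part of the proof of Lemma~\bref{lem:centr-gl}.

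Next I would impose $[X,\eta^+_0]=0$ and $[X,\eta^-_0]=0$. Fixing an ordered monomial basis of $\mathcal{C}$ and writing $X=\sum_W c_W\,W$ over such (even) monomials, I would compute $[W,\eta^+_0]$ by moving $\eta^+_0$ past the factors of $W$: since it anticommutes with all of them except $\gamma^-$, the commutator vanishes when $\gamma^-$ does not occur in $W$ and equals $\pm$ times the ordered monomial obtained from $W$ by deleting its factor $\gamma^-$ when it does. Deleting the unique $\gamma^-$ is injective on the set of ordered monomials containing $\gamma^-$, so these outputs are pairwise distinct ordered monomials, hence linearly independent, and $[X,\eta^+_0]=0$ forces $c_W=0$ for every $W$ involving $\gamma^-$. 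The same argument with $\eta^-_0$ forces $c_W=0$ for every $W$ involving $\gamma^+$. Hence any centralizing $X$ is an even polynomial in the $\eta^\pm_n$ alone, and grouping the factors of such a monomial into consecutive pairs expresses $X$ as a polynomial in the bilinears $\eta^+_n\eta^-_m$, $\eta^+_n\eta^+_m$, $\eta^-_n\eta^-_m$ (a pair $\eta^-_n\eta^+_m$ being, by~\eqref{eta-comm}, a combination of $\eta^+_m\eta^-_n$ and a scalar).

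For the reverse inclusion I would simply observe that each such bilinear is even, hence commutes with $\K$, and, as a product of two generators each anticommuting with both $\eta^+_0$ and $\eta^-_0$, it commutes with $\eta^+_0$ and $\eta^-_0$, hence with $\E$ and $\F$; so these bilinears lie in the centralizer, and by the previous step they generate it. I expect the first and last steps to be routine; the one delicate point --- and the only real obstacle --- is the middle step, namely choosing the ordered basis and tracking signs carefully enough that the injectivity/linear-independence argument eliminating the $\gamma^\pm$ contributions is fully watertight.
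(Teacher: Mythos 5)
Your proof is correct and follows essentially the same route as the paper: first cut down to the even part of the Clifford algebra generated by the $\eta^\pm_n$ and $\gamma^\pm$ using commutation with $\K=(-1)^{S^z}$, then use commutation with the zero modes $\eta^\pm_0$ (equivalently $\E$, $\F$) to eliminate all words containing $\gamma^\mp$, leaving exactly the bilinears in the $\eta^\pm_n$. The only difference is that where the paper appeals to ``direct calculation'' along the lines of Lemma~\bref{lem:centr-gl}, you make the elimination step explicit via the ordered-monomial basis and the injectivity of deleting $\gamma^\mp$, which is a legitimate sharpening of the same argument.
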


Then, note that the centralizer of $\UresSL2$ is in the image under the adjoint action of the $s\ell(2)$ algebra on the centralizer of $\gl(1|1)$, {\it i.e.}, on the walled Brauer algebra representation. Indeed, in terms of  generators for both the centralizers described in Lem.~\bref{lem:centr-gl} and Lem.~\bref{lem:centr-Uq} we have
\begin{equation}\label{sl2-bilin}
[\e,\eta^{+}_n\eta^-_m] = \eta^{+}_n\eta^+_m,\qquad [\f,\eta^{+}_n\eta^-_m] = \eta^{-}_n\eta^-_m,\qquad\quad -\ffrac{N}{2}+1\leq n,m\leq \ffrac{N}{2}-1,
\end{equation}
where we use the fermionic  expressions~\eqref{sl-def-1} of the $\e$ and $\f$ generators of the $s\ell(2)$ algebra. We also note that the action of $s\ell(2)$ differentiates the multiplication in the Clifford algebra, {\it i.e.}, for $a$ and $b$ from the Clifford algebra, we have $\e(ab)=\e(a)b + a\e(b)$, {\it etc.} Therefore, the equations~\eqref{sl2-bilin} define the image of any element of the walled Brauer algebra representation under the $s\ell(2)$ action. The centralizer of $\UresSL2$ is thus generated by (actually, coincides with) this image  of the walled Brauer algebra under the action of $U s\ell(2)$. We finally note that by definition~\eqref{sl2act-WB}-\eqref{wbm}  the lattice W-algebra  $\Wlatq{i}{N}$ is also generated by the image under the action of $U s\ell(2)$ on the same walled Brauer algebra where just another system of generators, the $e_j$ and $\W_j$ instead of $\eta^{+}_n\eta^-_m$, is used.  The $\Wlatq{i}{N}$ is thus isomorphic to the centralizer of $\UresSL2$ which finishes the proof of Thm.~\bref{thm:centralizer}.
%

\section{Examples of relations in the lattice W-algebra\label{App:W-rel}}
Here, we give some examples of relations in the lattice W-algebra $\Wlatq{i}{N}$.
\begin{itemize}
 \item[\bf $\bullet$\; $3$ sites:] the defining relations are
 \begin{gather*}
  \W^\pm_{m}\W^\pm_{m}=0,\\
 \W^0_m\W^\pm_m=\pm\ffrac{1}{2}\W^\pm_m,\quad \W^0_{m}\W^0_{m}=\ffrac{1}{4}(1-e_me_{m+1}-e_{m+1}e_m),\\
\W^+_m\W^-_m=\W^0_m+\ffrac{1}{2}(1-e_me_{m+1}-e_{m+1}e_{m}),\\
e_m\W^\alpha_{m}=\W^\alpha_{m}e_{m}=e_{m+1}\W^\alpha_{m}=\W^\alpha_{m}e_{m+1}=0.
\end{gather*}
These relations allow to construct a basis in the lattice W-algebra on $3$ sites. The elements
\begin{equation*}
 \W^\alpha_1,\qquad \alpha=\pm,0,
\end{equation*}
together with  $5$ basis elements from the TL algebra give $8$ basis elements in total.
\item[\bf $\bullet$\; $4$ sites:]  let us introduce notations
$e_{1,m}=[e_m,e_{m+1}]$, $e_{2,m}=\bigl[e_m,[e_{m+1},e_{m+2}]\bigr]$, etc., and $\W^{\alpha}_{1,m}=[\W^{\alpha}_{m},e_{m+1}]$. Then,
   the defining relations in $\Wlatq{i}{4}$ can be written as (we set $m=1$ here)
\begin{gather*}
 \W^\pm_{m}\W^\pm_{m+1}=0,\\
 \W^0_{m}\W^0_{m+1}=\ffrac{1}{4}(1-2e_{2,m}e_{m+1}+2e_{1,m}),\\
\W^0_{m+1}\W^0_{m}=\ffrac{1}{4}(1-2e_{2,m}e_{m+1}-2e_{1,m+1}),\\
\W^+_m\W^-_{m+1}=\W^0_{m+1}-e_{m+1}e_m\W^0_{m+1}+e_{1,m+1}-e_{m+1}e_{2,m}+\ffrac{1}{2},\\
\W^+_{m+1}\W^-_{m}=\W^0_{m}+e_{m+1}e_m\W^0_{m+1}-e_{1,m+1}-e_{m+1}e_{2,m}+\ffrac{1}{2}
\end{gather*}
in particular
\begin{align*}
 [\W^+_{m+1},\W^-_m]&=\W^0_m+\W^0_{m+1}-e_{1,m}-e_{1,m+1},\\
[\W^+_{m},\W^-_{m+1}]&=\W^0_m+\W^0_{m+1}+e_{1,m}+e_{1,m+1},
\end{align*}
and relations with the TL generators
\begin{equation*}
\begin{array}{c}
e_{m-1}\W^\alpha_{m}e_{m-1}=e_{m+2}\W^\alpha_{m}e_{m+2}=0,\\[8pt]
  {}[e_{m},\W^\alpha_{m+1}]=[\W^\alpha_{m},e_{m+2}],\\[8pt]
   e_{1,m}\W^\alpha_{m+1}= e_{1,m+1}\W^\alpha_{m},\qquad
  \W^\alpha_{m+1}e_{1,m}=\W^\alpha_{m}e_{1,m+1},\\[8pt]
   e_{2,m}\W^\alpha_{m}= -\half e_{m}\W^\alpha_{m+1},\qquad
  \W^\alpha_{m}e_{2,m}=-\half\W^\alpha_{m+1}e_{m},\\[8pt]
[e_{m+1},\W^\alpha_{1,m}]=-\half\W^\alpha_{m}+\half\W^\alpha_{m+1},\\[8pt]
\end{array}
\kern+5pt\alpha=+,-,0.
\end{equation*}
and
\begin{gather*}
  \W^\pm_{m}e_{m+2}\W^\pm_{m}=\W^\pm_{m+1}e_m\W^\pm_{m+1}=0,\\
 \W^0_{m}e_{m+2}\W^+_{m}=\W^0_{m+1}e_m\W^+_{m+1}=\ffrac{1}{2}(\W^+_{m}e_{m+2}+\W^+_{m+1}e_m),\\
 \W^+_{m}e_{m+2}\W^0_{m}=\W^+_{m+1}e_m\W^0_{m+1}=-\ffrac{1}{2}(e_{m+2}\W^+_{m}+e_m\W^+_{m+1}),\\
\W^0_{m}e_{m+2}\W^0_{m}=\W^0_{m+1}e_m\W^0_{m+1}=\ffrac{1}{4}(e_m+e_{m+2}-2e_{2,m}\kern100pt\\
              \kern170pt+2e_{1,m}e_{m+2}+2e_me_{1,m+1}-2e_me_{m+1}e_{m+2}),\notag\\
 \W^+_{m}e_{m+2}\W^-_{m}=\W^+_{m+1}e_m\W^-_{m+1}=e_{m+2}\W^0_{m}+e_m\W^0_{m+1}\kern80pt\\
  \kern100pt+e_me_{m+1}e_{m+2}
    -\ffrac{1}{2}(e_m+e_{m+2}-2e_{2,m}
              +2e_{1,m}e_{m+2}+2e_me_{1,m+1})\notag\\
 \W^-_{m}e_{m+2}\W^+_{m}=\W^-_{m+1}e_m\W^+_{m+1}=-\W^0_{m}e_{m+2}-\W^0_{m+1}e_m\kern80pt\\
  \kern100pt-e_me_{m+1}e_{m+2}
    +\ffrac{1}{2}(e_m+e_{m+2}-2e_{2,m}
              +2e_{1,m}e_{m+2}+2e_me_{1,m+1}).\notag
\end{gather*}
These relations allow to construct a basis in the lattice W-algebra $\Wlatq{i}{4}$:
\begin{gather*}
 \W^\alpha_m,\quad\W^\alpha_{m+1},\qquad\mbox{$6$ elements ,}\\
 e_{m+2}\W^\alpha_m,\quad\W^\alpha_me_{m+2},\quad e_m\W^\alpha_{m+1},\qquad\mbox{$9$ elements,}\\
 e_{m+1}e_m\W^\alpha_{m+1},\qquad\mbox{$3$ elements},
\end{gather*}
which together with $14$ basis elements from the TL algebra give $32$ basis elements.

\item[\bf $\bullet$\; $5$ sites:] for $5$ sites the relations are very bulky and we give only simple examples of them:
\begin{align*}
 [\W^+_{m+2},\W^-_m]&=[e_m,[e_{m+1},\W^0_{m+2}]]-e_{3,m},\\
[\W^+_{m},\W^-_{m+2}]&=[e_m,[e_{m+1},\W^0_{m+2}]]+e_{3,m},\\
[\W^0_{m},\W^0_{m+2}]&=2e_{3,m}.
\end{align*}

\end{itemize}

\section{The triplet W-agebra currents in symplectic fermions\label{App:W-tripl}}
We recall that the  triplet $W$-algebra currents in terms of the symplectic fermions \cite{Kausch}
\begin{equation*}
 \eta^\pm(z)=\sum z^{-n-1}\eta^\pm_n,\qquad \text{with} \quad \eta^+(z)\eta^-(w)=\frac{1}{(z-w)^2}+\dots,
\end{equation*}
where the sum is assumed to be over $n\in\oZ$ an the dots stand for regular terms, are given as
\begin{align}
 T(z)&=-:\eta^+(z)\eta^-(z):=\sum z^{-n-2}L_n,\label{eq:T}\\
W^{\pm}(z)&=:\partial\eta^{\pm}(z)\eta^{\pm}(z):=\sum z^{-n-3}W^{\pm}_n,\\
W^0(z)&=\half\bigl(:\partial\eta^+(z)\eta^-(z):-:\eta^+(z)\partial\eta^-(z):\bigr)=\sum z^{-n-3}W^0_n.
\end{align}
We also introduce modes of some composite currents
\begin{equation*}
 :T(z)T(z):=\sum z^{-n-4}(L^2)_n,\qquad
:T(z)W^\alpha(z):=\sum z^{-n-5}(LW^\alpha)_n.
\end{equation*}
These modes are expressed in the modes of the symplectic fermions $\eta^{\pm}(z)$ as
\begin{align}
L_n&=-\sum_{j_1+j_2=n}:\eta^+_{j_1}\eta^-_{j_2}:,\label{L-modes}\\
(L^2)_n&=-\half\sum_{j_1+j_2=n}((j_1+1)(j_1+2)+(j_2+1)(j_2+2)):\eta^+_{j_1}\eta^-_{j_2}:,\label{L2-modes} \\
W^{\pm}_n&=-\sum_{j_1+j_2=n}(j_1+1)\eta^{\pm}_{j_1}\eta^{\pm}_{j_2},\label{Wpm-modes}\\
W^0_n&=-\sum_{j_1+j_2=n}(j_1-j_2)\eta^+_{j_1}\eta^-_{j_2},\label{W0-modes}\\
(LW^+)_n&=\sum_{j_1+j_2=n}(-\ffrac{1}{3} (j_2 + 1) (j_2 + 2) (j_2 + 3) + \half (j_1 + 1) (j_2 + 1) (j_2 + 2))\eta^+_{j_1}\eta^+_{j_2}.\label{LW-modes}
\end{align}
The modes of these currents satisfy the commutation relations \cite{GK}
\begin{align}
\label{LL}[L_n,L_m]&=(n-m)L_{n+m}-\frac{1}{6}n(n^2-1)\delta_{n+m,0},\\
\label{LW}[L_n,W^\alpha_m]&=(2n-m)W^\alpha_{n+m},\\
 \label{WW}[W^\alpha_n,W^\beta_m]&=g^{\alpha\beta}\Bigl( 2(n-m)(L^2)_{n+m}-\frac{1}{4} (n-m)(2 n+m+4)( 2 m + n+4) L_{n+m}-\\
\notag&\kern210pt
-\frac{1}{120} n (n^2 - 1) (n^2 - 4)\delta_{n+m,0}\Bigr)+\\
\notag&+f^{\alpha\beta}_\gamma\Bigl(\frac{12}{5}(LW^\gamma)_{n+m}+\frac{1}{10}(2n^2+2m^2-21nm-36n-36m-76)W^\gamma_{n+m}\Bigr),
\end{align}
where $g^{\alpha\beta}$ and $f^{\alpha\beta}_\gamma$ are defined after~\eqref{w-ope}.

Our aim in the paper is to show that these commutation relations can be
obtained from the scaling limit of those in the (finite-dimensional) lattice $W$-algebra $\Wlatq{i}{N}$ introduced in Sec.~\bref{sec:lat-W-XX} and~\bref{sec:gen-hamilt} as an extension of the Temperley--Lieb algebra.

\section{Commutation relations in the $\Wlatq{i}{N}$
  algebra\label{sec:comm-rel}}
We collect here a  list of commutators of the Fourier modes in  $\Wlatq{i}{N}$ generators:
\begin{align}\label{H0W0}
 [H^0_n,W^{\alpha,0}_m]&=2\sin\pi\ffrac{2n-m}{2N}W^{\alpha,1}_{n+m}+2\sin\pi\ffrac{2n+m}{2N}W^{\alpha,1}_{n-m},\\
[H^0_n,W^{\alpha,1}_m]&=2\sin\pi\ffrac{3n-m}{2N}W^{\alpha,2}_{n+m}-2\sin\pi\ffrac{3n+m}{2N}W^{\alpha,2}_{n-m}-\\
&\notag\kern50pt-\sin\pi\ffrac{n}{N}\bigl(\cos\pi\ffrac{n-m}{2N}W^{\alpha,0}_{n+m}
    -\cos\pi\ffrac{n+m}{2N}W^{\alpha,0}_{n-m}\bigr),\\
[H^1_n,W^{\alpha,0}_m]&=2\sin\pi\ffrac{n-m}{N}W^{\alpha,2}_{n+m}+2\sin\pi\ffrac{n+m}{N}W^{\alpha,2}_{n-m}+\\
&\notag\kern50pt+2\cos\pi\ffrac{n}{2N}\sin\pi\ffrac{m}{2N}\bigl(\cos\pi\ffrac{n-m}{2N}W^{\alpha,0}_{n+m}
    -\cos\pi\ffrac{n+m}{2N}W^{\alpha,0}_{n-m}\bigr),\\
\label{H1W1}[H^1_n,W^{\alpha,1}_m]&=2\sin\pi\ffrac{3n-2m}{2N}W^{\alpha,3}_{n+m}-2\sin\pi\ffrac{3n+2m}{2N}W^{\alpha,3}_{n-m}+\\
&\notag\kern50pt+\bigl(\cos\pi\ffrac{n}{N}\sin\pi\ffrac{n}{2N}-\sin\pi\ffrac{3n-2m}{2N}\bigr)W^{\alpha,1}_{n+m}-\\
    &\notag\kern140pt-\bigl(\cos\pi\ffrac{n}{N}\sin\pi\ffrac{n}{2N}-\sin\pi\ffrac{3n+2m}{2N}\bigr)W^{\alpha,1}_{n-m},
\end{align}
\begin{align}\label{W0W0}
 [W^{\alpha,0}_n,W^{\beta,0}_m]&=f^{\alpha\beta}_\gamma
\Bigl(2\cos\pi\ffrac{n-m}{N}W^{\gamma,2}_{n+m}+2\cos\pi\ffrac{n+m}{N}W^{\gamma,2}_{n-m}-\\
&\notag\kern50pt -\half \bigl(\cos\pi\ffrac{n}{N}+\cos\pi\ffrac{m}{N}+\cos\pi\ffrac{n-m}{N}+1\bigr)W^{\gamma,0}_{n+m}-\\
&\notag\kern120pt -\half \bigl(\cos\pi\ffrac{n}{N}+\cos\pi\ffrac{m}{N}+\cos\pi\ffrac{n+m}{N}+1\bigr)W^{\gamma,0}_{n-m}\Bigr)+\\
&\notag+g^{\alpha\beta}\Bigl(4\sin\pi\ffrac{n-m}{N}H^3_{n+m}+4\sin\pi\ffrac{n+m}{N}H^3_{n-m}-\\
&\notag\kern50pt-\bigl(\sin\pi\ffrac{n}{N}-\sin\pi\ffrac{m}{N}+3\sin\pi\ffrac{n-m}{N}\bigr)H^1_{n+m}-\\
&\notag\kern90pt-\bigl(\sin\pi\ffrac{n}{N}+\sin\pi\ffrac{m}{N}+3\sin\pi\ffrac{n-m}{N}\bigr)H^1_{n-m}\Bigr),\\
 [W^{\alpha,0}_n,W^{\beta,1}_m]&=f^{\alpha\beta}_\gamma
\Bigl(2\cos\pi\ffrac{3n-2m}{2N}W^{\gamma,3}_{n+m}-2\cos\pi\ffrac{3n+2m}{2N}W^{\gamma,3}_{n-m}-\\
&\notag\kern50pt -\bigl(\cos\pi\ffrac{n}{2N}\cos\pi\ffrac{n}{N}+\cos\pi\ffrac{3n-2m}{2N}\bigr)W^{\gamma,1}_{n+m}+\\
&\notag\kern120pt + \bigl(\cos\pi\ffrac{n}{2N}\cos\pi\ffrac{n}{N}+\cos\pi\ffrac{3n+2m}{2N}\bigr)W^{\gamma,1}_{n-m}\Bigr)+\\
&\notag+g^{\alpha\beta}\Bigl(4\sin\pi\ffrac{3n-2m}{2N}H^4_{n+m}-4\sin\pi\ffrac{3n+2m}{2N}H^4_{n-m}-\\
&\notag\kern50pt-2\bigl(\cos\pi\ffrac{n}{2N}\sin\pi\ffrac{n}{N}+2\sin\pi\ffrac{3n-2m}{2N}\bigr)H^2_{n+m}+\\
&\notag\kern90pt+2\bigl(\cos\pi\ffrac{n}{2N}\sin\pi\ffrac{n}{N}+2\sin\pi\ffrac{3n+2m}{2N}\bigr)H^2_{n-m}+\\
&\notag\kern110pt+2\sin\pi\ffrac{n}{N}\cos\pi\ffrac{m}{2N}\cos\pi\ffrac{n-m}{2N}H^0_{n+m}-\\
&\notag\kern145pt-2\sin\pi\ffrac{n}{N}\cos\pi\ffrac{m}{2N}\cos\pi\ffrac{n+m}{2N}H^0_{n-m}\Bigr),\\
\label{W1W1} [W^{\alpha,1}_n,W^{\beta,1}_m]&=f^{\alpha\beta}_\gamma
\Bigl(2\cos\pi\ffrac{3n-3m}{2N}W^{\gamma,4}_{n+m}-2\cos\pi\ffrac{3n+3m}{2N}W^{\gamma,4}_{n-m}-\\
&\notag\kern30pt -\half \bigl(\cos\pi\ffrac{n-m}{2N}+3\cos\pi\ffrac{3n-3m}{2N}\bigr)W^{\gamma,2}_{n+m}+\\
&\notag\kern70pt +\half\bigl(\cos\pi\ffrac{n+m}{2N}+3\cos\pi\ffrac{3n+2m}{2N}\bigr)W^{\gamma,2}_{n-m}+\\
&\notag\kern110pt +\half \cos\pi\ffrac{n-m}{2N}\sin\pi\ffrac{n}{N}\sin\pi\ffrac{m}{N}W^{\gamma,0}_{n+m}+\\
&\notag\kern150pt +\half\cos\pi\ffrac{n+m}{2N}\sin\pi\ffrac{n}{N}\sin\pi\ffrac{m}{N}W^{\gamma,0}_{n-m}\Bigr)\\
&\notag+g^{\alpha\beta}\Bigl(4\sin\pi\ffrac{3n-3m}{2N}H^5_{n+m}-4\sin\pi\ffrac{3n+3m}{2N}H^5_{n-m}-\\
&\notag\kern20pt-\bigl(\sin\pi\ffrac{n-m}{2N}+5\sin\pi\ffrac{3n-3m}{2N}\bigr)H^3_{n+m}+\\
&\notag\kern45pt+\bigl(\sin\pi\ffrac{n+m}{2N}+5\sin\pi\ffrac{3n+3m}{2N}\bigr)H^3_{n-m}-\\
&\notag\kern65pt-\ffrac{1}{4}\bigl(\cos\pi\ffrac{n+m}{N}\sin\pi\ffrac{n-m}{2N}
-3\sin\pi\ffrac{n-m}{2N}-5\sin\pi\ffrac{3n-3m}{2N}\bigr)H^1_{n+m}+\\
&\notag\kern85pt+\ffrac{1}{4}\bigl(\cos\pi\ffrac{n-m}{N}\sin\pi\ffrac{n+m}{2N}
-3\sin\pi\ffrac{n+m}{2N}-5\sin\pi\ffrac{3n+3m}{2N}\bigr)H^1_{n-m}\Bigr).
\end{align}
We note that the commutator $[W^{\alpha,1}_n,W^{\beta,0}_m]$ can be obtained from $[W^{\alpha,0}_n,W^{\beta,1}_m]$ by replacing $\alpha$ with $\beta$
and $n$ with $m$ and noting that $H^r_{-n}=(-1)^rH^r_n$ and $W^r_{-n}=(-1)^rW^r_n$.

\end{document}